\newcommand{\letters}{\ensuremath{F}}
\newcommand{\lettersup}{\letters_1^{\textnormal{up}}}
\newcommand{\lettersdown}{\letters_1^{\textnormal{down}}}
\newcommand{\letterschoice}{\letters_1^{\textnormal{choice}}}
\newcommand{\algofont}[1]{\textnormal{\selectfont\sffamily#1}}
\newcommand{\algmain}{\algofont{TtoG}}
\newcommand{\alggreedypairs}{\algofont{Greedy2Chains}}
\newcommand{\algpop}{\algofont{Pop}}
\newcommand{\algremblocks}{\algofont{RemCrChains}}
\newcommand{\algremchild}{\algofont{GenPop}}
\newcommand{\RePair}{\algofont{RePair}}
\newcommand{\Mref}[1]{(\hyperref[M#1]{M#1})}
\newcommand{\Mrefall}{{\Mref{1}--\Mref{3}}}
\newcommand{\paircompression}{unary pair compression}
\newcommand{\Paircompression}{Unary pair compression}
\newcommand{\parentchild}{parent-leaf pair}
\newcommand{\algtreechaincomp}{\algofont{TreeChainComp}}
\newcommand{\algtreechildcomp}{\algofont{TreeLeafComp}}
\newcommand{\algtreepaircomp}{\algofont{TreeUnaryComp}}
\newcommand{\grammarsize}{g}
\newcommand{\propersize}{\grammarsize r}
\newcommand{\grammarsizezero}{g_0}
\newcommand{\grammarsizeone}{g_1}
\newcommand{\grammarsizetilde}{\widetilde{g_0}}
\newcommand{\twodots}{\mathinner{\ldotp\ldotp}}
\DeclareMathOperator{\eval}{val}
\DeclareMathOperator{\sk}{sk}
\newcommand{\weight}{\mathsf{w}}
\newcommand{\algradix}{\algofont{RadixSort}}
\newcommand{\mycount}[1]{\ensuremath{\textnormal{\textsl{count}}^{\textnormal{#1}}}}
\newcommand{\mynew}{\textnormal{\textsl{new}}}
\newcommand{\leftl}[1][(a)]{\ensuremath{\textnormal{\textsl{left}}#1}}
\newcommand{\rightl}[1][(a)]{\ensuremath{\textnormal{\textsl{right}}#1}}
\newcommand{\makeset}[2]{\ensuremath{ \{ #1 \: | \: #2 \} }}
\newcommand{\flag}{\textnormal{\textsl{flag}}}
\newcommand{\next}{\textnormal{\textsl{next}}}
\newcommand{\mytree}{\ensuremath{T}}
\newcommand{\treeci}{\ensuremath{T'}}
\providecommand{\Ocomp}{\mathcal{O}}
\providecommand{\size}{\ensuremath{\Ocomp(|\mytree|)}}
\providecommand{\sizeprim}{\ensuremath{\Ocomp(|\mytree'|)}}
\newtheorem{theorem}{Theorem}
\newtheorem{lemma}{Lemma}
\newtheorem{corollary}{Corollary}
\theoremstyle{definition}
\theoremstyle{remark}
\newtheorem{example}{Example}
\newtheorem{clm}{Claim}
\newcommand{\alphabet}{\ensuremath{\mathbb F}}
\newcommand{\contexts}{\ensuremath{\mathbb Y}}
\newcommand{\grammar}{\ensuremath{\mathbb G}}
\DeclareMathOperator{\rank}{rank}
\newcommand{\GRref}[1]{(\hyperref[Gr #1]{GR#1})}
\newcommand{\SKref}[1]{(\hyperref[sk #1]{SK#1})}
\newcommand{\GRrefall}{\GRref{1}--\GRref{5}}
\newcommand{\CPref}[1]{(\hyperref[cr #1]{CR#1})}
\newcommand{\FCref}[1]{(\hyperref[fc #1]{FC#1})}
\newcommand{\HGref}[1]{(\hyperref[lg #1]{HG#1})}
\begin{document}

\title{Approximation of smallest linear tree grammar}

\author[A.\ Je\.z]{Artur Je\.z}
\thanks{This work was supported under National Science Centre, Poland project number 2014/15/B/ST6/00615.}
\address{Institute of Computer Science, University of Wroc{\l}aw \\
ul.\ Joliot-Curie~15, 50-383 Wroc{\l}aw, Poland\\
\texttt{aje@cs.uni.wroc.pl}}

\author{Markus Lohrey}
\address{
University of Siegen, Department of Electrical Engineering and Computer Science, 57068 Siegen, Germany\\
\texttt{lohrey@eti.uni-siegen.de}}

\begin{abstract}
A simple linear-time algorithm
for constructing a linear 
context-free tree grammar of size $\Ocomp(r\grammarsize + r \grammarsize \log (n/r \grammarsize))$
for a given input tree \mytree{} of size $n$ is presented, where $\grammarsize$ is the size of 
a minimal linear context-free tree grammar for \mytree, and $r$ is the maximal rank
of symbols in \mytree{} (which is a constant in many applications). This is the first
example of a grammar-based tree compression algorithm with a good, i.e.\ logarithmic in terms of the size of the input tree, approximation
ratio. The analysis of the algorithm uses an extension of the recompression technique
from strings to trees.
\end{abstract}

\keywords{Grammar-based compression; Tree compression; Tree grammars}
	\maketitle

\section{Introduction}
	\label{sec:intro}

\noindent	
{\em Grammar-based compression} has emerged to an active field in string compression during the last decade.
The idea is to represent a given string $s$ by a small context-free grammar that generates only $s$;
such a grammar is also called a {\em straight-line program}, briefly SLP. For instance, the word $(ab)^{1024}$ can be represented
by the SLP with the productions $A_0 \to ab$ and $A_i \to A_{i-1} A_{i-1}$ for $1 \leq i \leq 10$ ($A_{10}$ is the start 
symbol). The size of this grammar is much smaller than the size (length) of the string $(ab)^{1024}$. In general,
an SLP of size $n$ (the size of an SLP is usually defined as the total length of all right-hand sides of productions)
can produce a~string of length $2^{\Omega(n)}$. Hence, an SLP can be seen as the succinct representation
of the generated word. The principle task of grammar-based string compression is to construct, from a~given input
string $s$, a small SLP that generates $s$. Unfortunately, finding a minimal (with respect to size) SLP for a given input
string is not achievable in polynomial time, unless {\bf P} = {\bf NP}~\cite{SLPapproxNPhard}
(recently the same result was shown also in case of a~constant-size alphabet~\cite{SLPaproxNPhardnew}).
Therefore, one can concentrate either on heuristic grammar-based compressors~\cite{KiefferY96,RePair,Sequitur},
or compressors whose output SLP is guaranteed to be not much larger than a size-minimal SLP for the input string~\cite{SLPaprox2,grammar,simplegrammar,SLPaprox,SLPaproxSakamoto}.
In this paper we are interested mostly in the latter approach.
Formally, in~\cite{SLPaprox2} the approximation ratio for a grammar-based compressor $\mathcal{G}$ is defined as the 
function $\alpha_{\mathcal{G}}$ with
$$
\alpha_{\mathcal{G}}(n) = \max  \frac{\text{size of the SLP produced by }\mathcal{G} \text{ with input } x}{\text{size of a minimal SLP for } x},
$$
where the maximum is taken over all strings of length $n$ (over an arbitrary alphabet).
The above statement means that unless {\bf P} = {\bf NP} there is no polynomial time grammar-based compressor
with the approximation ratio $1$.  Using approximation lower bounds for computing vertex covers, it is shown
in~\cite{SLPaprox2} that unless {\bf P} = {\bf NP} there is no polynomial time grammar-based compressor, whose 
approximation ratio is less than the constant 8569/8568. 

Apart from this complexity theoretic bound, the authors of~\cite{SLPaprox2}
prove lower and upper bounds on the approximation ratios of well-known grammar-based string compressors
({\sf LZ78}, {\sf BISECTION}, {\sf SEQUENTIAL}, {\sf RePair}, etc.). The currently best known approximation ratio of a polynomial time 
grammar-based string compressor is of the form $\Ocomp(\log (n/g))$, where $g$ is the size of a smallest
SLP for the input string. Actually, there are several compressors achieving this approximation ratio~\cite{SLPaprox2,grammar,simplegrammar,SLPaprox,SLPaproxSakamoto}
and each of them works in linear time (a property that a reasonable compressor should have). 

At this point, the reader might ask, what makes grammar-based compression so attractive. There are actually
several arguments in favour of grammar-based compression:
\begin{itemize}
\item The output of a grammar-based compressor 
is a clean and simple object, which may simplify the analysis
of a compressor or the analysis of algorithms that work on compressed data; see~\cite{Lohreysurvey} for a survey.
\item  There are grammar-based compressors which achieve very good compression ratios. For example {\sf RePair}~\cite{RePair} 
performs very well in practice and was for instance used for the compression of web graphs~\cite{ClaudeN10}.  
\item  The idea of grammar-based string compression can be generalised to other data types as long as suitable 
grammar formalisms are known for them.
See for instance the recent work on grammar-based graph compression~\cite{DBLP:conf/icde/ManethP16}.
\end{itemize}
The last point is the most important one for this work. In\cite{BuLoMa07}, grammar-based compression was generalised
from strings to trees.\footnote{A tree in this paper is always a rooted ordered tree over a ranked alphabet, i.e., every
node is labelled with a symbol and the rank of this symbol is equal to the number of children of the node.}
For this, context-free tree grammars were used. Context free tree grammars that produce only a single tree
are also known as straight-line context-free tree grammars (SLCF tree grammars).
Several papers deal with algorithmic problems on trees that are succinctly represented by SLCF tree grammars~%
\cite{onecontextvariablecompressed,GasconGS11,LoMa06,LohreyMS12,Schmidt-Schauss12rta,Schmidt-SchaussSA11}.
In~\cite{LohreyMM13}, {\sf RePair}  was generalised 
from strings to trees, and the resulting algorithm {\sf TreeRePair} achieves excellent results on real XML  trees. 
Other grammar-based tree compressors were developed in~\cite{LoMaNoe13,LohreyTreenlogn},
but none of these compressors has a good approximation ratio.
For instance, in~\cite{LohreyMM13} a series of trees is constructed, where the $n$-th tree $t_n$ has size $\Theta(n)$, 
there exists an SLCF tree grammar for $t_n$ of size $\Ocomp(\log n)$, but the grammar produced by {\sf TreeRePair} 
for $t_n$ has size $\Omega(n)$ (and similar examples can be constructed for the compressors in~\cite{LoMaNoe13,BuLoMa07}).

In this paper, we give the first example of a grammar-based tree compressor \algmain{} (for ``tree to grammar'') with an approximation ratio of 
$\Ocomp(\log (n/\grammarsize))$ assuming
the maximal rank $r$ of symbols is bounded and where $\grammarsize$ denotes the size of the smallest grammar generating the given tree;
otherwise the approximation ratio becomes $\Ocomp(r  + r \log (n/\grammarsize r))$.
Our algorithm \algmain{} is based on the work~\cite{grammar} of the first author, where another  grammar-based string compressor with an 
approximation ratio of  $\Ocomp(\log (n/\grammarsize))$ is presented (here $\grammarsize$ denotes the size of the smallest grammar for the input string).
The remarkable fact about this latter compressor is that in contrast
to~\cite{SLPaprox2,simplegrammar,SLPaprox,SLPaproxSakamoto} it does not use the {\sf LZ77} factorization of a string (which makes the compressors
from~\cite{SLPaprox2,simplegrammar,SLPaprox,SLPaproxSakamoto}
not suitable for a generalization to trees, since {\sf LZ77} ignores the tree structure and no good analogue of {\sf LZ77} for trees is known),
but is based on the {\em recompression technique}. This  technique was introduced in~\cite{fullyNFA} and
 successfully applied for a variety of 
algorithmic problems for SLP-compressed strings~\cite{fullyNFA,FCPM}  and word equations~\cite{wordeqgroups,onevarlinear,wordequations}. 
The basic idea is to compress a string using two operations: 
\begin{itemize}
\item block compressions: replace every maximal substring of the form $a^\ell$ for a letter $a$
by a new symbol $a_\ell$;
\item pair compression: for a given partition $\Sigma_\ell \uplus \Sigma_r$ replace every substring
$ab \in \Sigma_\ell \Sigma_r$ by a new symbol $c$.
\end{itemize}
It can be shown that the composition of block compression followed by pair compression (for a suitably chosen partition of the 
input letters) reduces the length of the string by a constant factor. Hence, the iteration of block compression
followed by pair compression yields a string of length one after a logarithmic number of phases.
By reversing a single compression step, one obtains a grammar rule for the introduced letter
and thus reversing all such steps yields an SLP for the initial string.
The term ``recompression'' refers to the fact,
that for a given SLP \grammar{},  block compression and pair compression can be simulated on  \grammar.
More precisely, one can compute from \grammar{} a new SLP $\grammar'$, which is not much larger than 
\grammar{} such that $\grammar'$ produces the result of block compression (respectively, pair compression)
applied to the string produced by \grammar. In~\cite{grammar}, the recompression technique is used to 
bound the approximation ratio of the above compression algorithm based on block and pair compression.

In this work we generalise the recompression technique from strings to trees. 
The operations of block compression and pair compression can be directly applied to chains of unary nodes
(nodes having only a single child)
in a tree. But clearly, these two operations alone cannot reduce the size of the initial tree by a constant
factor. Hence we need a third compression operation that we call {\em leaf compression}. It merges all children
of a node that are leaves into the node. The new label of the node determines the old label, the sequence of labels
of the children that are leaves, and their positions in the sequence of all children of the node.
Then, one can show that a~single phase, consisting of block compression (that we call chain compression), 
followed by pair compression (that we call \paircompression), followed by leaf compression reduces
the size of the initial tree by a constant factor. As for strings, we obtain an SLCF tree grammar for the input tree by reversing
the sequence of compression operations. The recompression approach again yields an approximation ratio
of $\Ocomp(\log (n/\grammarsize))$ (assuming that the maximal rank of symbols is a constant) 
for our compression algorithm \algmain{}, but the analysis is technically more subtle.

\begin{theorem}
\label{thm: main}
The algorithm \algmain{} runs in linear time, and for a tree $T$ of size $n$, 
it returns an SLCF tree grammar of size $\Ocomp(\grammarsize r + \grammarsize r \log(n/\grammarsize r))$,
where $\grammarsize$ is the size of a smallest SLCF grammar for $T$ and $r$ is the maximal rank of a symbol in $T$.
\end{theorem}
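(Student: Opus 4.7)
The plan is to analyze \algmain{} phase by phase, where one phase applies chain compression, then \paircompression{}, then leaf compression to the current tree $T_i$. First I would establish the shrinkage lemma: one phase replaces $T_i$ by a tree $T_{i+1}$ with $|T_{i+1}| \le c|T_i|$ for some absolute constant $c<1$. This should follow from a local charging scheme arguing that every internal node of $T_i$ is swallowed by (or participates in) at least one of the three operations, much as in the string case but now exploiting leaf compression to handle non-unary nodes. Consequently \algmain{} halts after $\Ocomp(\log n)$ phases, and the size of its output SLCF tree grammar equals the total number of new nonterminals introduced, which is at most the total number of distinct patterns compressed across all phases.

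To control this total I would adapt the recompression analysis of \cite{grammar} from strings to trees. Fix a minimal-size SLCF tree grammar $\grammar_0$ of size $\grammarsize$ for $T=T_0$, and show inductively that there exist SLCF tree grammars $\grammar_i$ generating $T_i$ with $|\grammar_i| = \Ocomp(\grammarsize r)$ for every $i$. Granted this uniform bound, the number of distinct chains, unary pairs, and \parentchild{} configurations occurring in $T_i$ is $\Ocomp(|\grammar_i|\cdot r) = \Ocomp(\grammarsize r)$: each such pattern either lies entirely inside the right-hand side of some production (contributing $\Ocomp(|\grammar_i|)$ candidates) or straddles a nonterminal boundary (contributing another $\Ocomp(|\grammar_i|)$, with the extra factor $r$ reflecting the up-to-$r$ children adjacent to such a seam). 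Thus \algmain{} creates at most $\Ocomp(\grammarsize r)$ fresh nonterminals per phase.

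The main obstacle will be the simulation lemma that powers the inductive step: from $\grammar_i$ generating $T_i$ one must build $\grammar_{i+1}$ generating $T_{i+1}$ with additive blow-up only $\Ocomp(\grammarsize r)$. The difficulty is that the patterns to be compressed can cross nonterminal boundaries. I would handle chain and \paircompression{} by an uncrossing step that pops unary prefixes and suffixes out of each nonterminal's right-hand side into every occurrence of that nonterminal, so that afterwards no chain or unary pair straddles a nonterminal call; this costs $\Ocomp(1)$ per nonterminal, hence $\Ocomp(\grammarsize)$ in total. Leaf compression is trickier, because a leaf child can originate from a sibling subtree expanded by a different production, so an analogous uncrossing must be performed at each of the up to $r$ children positions of every nonterminal, and this is where the factor $r$ enters. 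After uncrossing, the actual compression step strictly decreases the grammar, so the net growth per phase is $\Ocomp(\grammarsize r)$, and a subsequent normalization restores the uniform $\Ocomp(\grammarsize r)$ bound needed for the induction.

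Finally I would assemble the bounds. The tree size drops geometrically from $n$ to $\Ocomp(\grammarsize r)$ over $\Ocomp(\log(n/\grammarsize r))$ phases, each contributing $\Ocomp(\grammarsize r)$ new nonterminals, for a subtotal of $\Ocomp(\grammarsize r\log(n/\grammarsize r))$. Once $|T_i|=\Ocomp(\grammarsize r)$, the remaining phases collectively add only $\Ocomp(\grammarsize r)$ nonterminals, since the number of new nonterminals in a phase is trivially bounded by $|T_i|$ and the tree sizes continue to shrink geometrically. Summing gives the claimed grammar size $\Ocomp(\grammarsize r + \grammarsize r\log(n/\grammarsize r))$. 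The linear running time follows from implementing each phase in $\Ocomp(|T_i|)$ time, using standard hashing and radix-sort subroutines to identify the patterns to be compressed, and then summing $\sum_i |T_i|=\Ocomp(n)$ via the geometric decay.
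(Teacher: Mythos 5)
The proposal captures the high-level architecture (shrinkage lemma, simulation of the compression on a small grammar, uncrossing with an extra factor $r$ for leaf compression), but there are two linked gaps that the paper spends most of its effort closing, and your argument does not.

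First, you write that ``the size of its output SLCF tree grammar equals the total number of new nonterminals introduced, which is at most the total number of distinct patterns compressed across all phases.'' This is false for chain compression: the fresh letter $a_\ell$ replacing an $a$-maximal chain of length $\ell$ has a production whose right-hand side has size $\ell$ (or, with the clever scheme of Lemma~\ref{lem: cost of powers}, about $\log\ell$), so a single new nonterminal can contribute far more than one unit to the grammar size. Counting distinct compressed patterns is therefore an underestimate, and this is precisely the reason that the paper cannot simply count letters and must separately bound, via the weight argument of Lemma~\ref{clm:cost from power to rule} and the marking scheme of Lemma~\ref{lem:cost of compression}, the total cost of representing ``powers'' created when two popped chains meet inside a right-hand side. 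Nothing in your proposal plays the role of that multiplicative $\weight(a^{(s+1)})\geq p_s\cdot\weight(a^{(s)})$ chain which turns a per-phase cost into a $\Ocomp(\log n)$ bound per rule; without it, the per-phase bound $\Ocomp(\grammarsize r)$ on letter count does not translate into a per-phase bound on \emph{size}.

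Second, your inductive invariant ``$|\grammar_i| = \Ocomp(\grammarsize r)$ for every $i$,'' together with the claim that a ``subsequent normalization restores the uniform $\Ocomp(\grammarsize r)$ bound,'' is asserted but not established, and I do not see how to establish it. Popping introduces new letters into the right-hand sides and new nonterminals (the set $\widetilde{N_0}$ in the paper) into the grammar; the compression step afterwards shrinks some of these back, but there is no mechanism that keeps the total grammar size from drifting upward by $\Omega(\grammarsize r)$ per phase, which over $\Theta(\log n)$ phases would break the induction. The paper deliberately does \emph{not} try to bound $|\grammar_i|$; it instead maintains invariants \GRrefall{} bounding only the number of \emph{nonterminal occurrences} (not the overall size) and does an amortized accounting with credits so that the credit issued per phase, not the grammar size, is $\Ocomp(\grammarsize r)$. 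Your assembly step at the end, where the tail phases once $|T_i|\leq\Ocomp(\grammarsize r)$ are charged to the remaining tree size, is the right idea and matches Lemma~\ref{lem: trivial estimation}, but the truncation at $|T_i|\approx \grammarsize r$ only yields the claimed $\log(n/\grammarsize r)$ factor after the chain-cost analysis is in place; with the missing pieces above the argument would at best give a weaker or incorrect bound.
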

Note that in some specific cases it could happen that $n < \grammarsize r$ and so the term $\log(n/\grammarsize r)$ is in fact negative,
we follow the usual practice of bounding the logarithm from below by $0$, i.e.\ in such a case we assign $0$ as the value of the logarithm.

\medskip

\paragraph{\bf Related work on grammar-based tree compression}
We already mentioned  that grammar-based tree compressors were developed in~\cite{BuLoMa07,LohreyMM13,LoMaNoe13},
but none of these compressors has a good approximation ratio.
Another grammar-based tree compressors was presented in~\cite{Akutsu10}. It is based on the {\sf BISECTION} algorithm
for strings and has an approximation ratio of $\Ocomp(n^{5/6})$.
But this algorithm uses a different form of grammars (elementary ordered tree grammars) and it is not clear
whether the results from~\cite{Akutsu10} can be extended to SLCF tree grammars, or
whether the good algorithmic results for SLCF-compressed trees 
\cite{GasconGS11,LoMa06,LohreyMS12,Schmidt-Schauss12rta,Schmidt-SchaussSA11} can be extended to 
elementary ordered tree grammars.
Let us also mention the work from~\cite{toptrees} where trees are compressed by so called
top dags. These are another hierarchical representation of trees. Upper bounds on the size of the minimal top
dag are derived in~\cite{toptrees} and compared with the size of the minimal dag (directed
acyclic graph). More precisely, it is shown in%
~\cite{toptrees,DBLP:conf/wea/Hubschle-Schneider15} that the size of the minimal top dag is at most by a factor of $\Ocomp(\log n)$
larger than the size of the minimal dag.  Since 
dags can be seen as a special case of SLCF tree grammars, our main result is stronger.
In~\cite{LohreyTreenlogn},  the worst case  size of the output grammar of 
grammar-based tree compressors was investigated
and an algorithm that always returns an SLCF tree grammar of size
$\Ocomp(\frac{n}{\log_\sigma n})$ was given,
where $\sigma$ is the size of the input alphabet.
In fact this algorithm can be implemented in linear time or in logarithmic space~\cite{LohreyTreenlognarxiv}.
Note that (up to constant factors) the upper bound $\Ocomp(\frac{n}{\log_\sigma n})$ matches
the information-theoretic lower bound.
Slightly weaker results were obtained for the already mentioned top dags:
it was shown that top dags have size at most
$\Ocomp(\frac{n}{\log_\sigma n} \log \log n)$~\cite{DBLP:conf/wea/Hubschle-Schneider15}.
Finally, the performance of grammar-based string compression of trees that are encoded by
preorder traversal string was compared with grammar-based tree compresson~\cite{treetoSLP}:
the smallest string SLP for the preorder traversal of a tree can be exponentially smaller
than the smallest  SLCF tree grammar for the same tree.
But on a downside there are queries that can be efficiently (in {\bf P}) computed, when trees are represented
by SLCF tree grammars,  but become {\bf PSPACE}-complete, when  trees are represented by string SLPs.

\medskip

\paragraph{\bf Other applications of the technique: context unification}
The recompression method can be applied to word equations and it is natural to hope that its generalization to trees
also applies to appropriate generalizations of word equations.
Indeed, the tree recompression approach is used in~\cite{contextunification} to show
that the context unification problem can be solved in {\bf PSPACE}. It was a long 
standing open problem whether context unification is decidable~\cite{RTAproblem90}.

\medskip

\paragraph{\bf Parallel tree contraction}
Our compression algorithm is similar to algorithms for parallel tree evaluation~\cite{treecontraction1,treecontraction2}.
Here, the problem is to evaluate an algebraic expression of size $n$ in time $\Ocomp(\log n)$ on a PRAM. Using parallel tree contraction, this
can be achieved on a PRAM with $\Ocomp(n/\log n)$ many processors. The rake operation in parallel tree contraction
is the same as our leaf compression operation, whereas the {\em compress operations} contracts chains of unary nodes
and hence corresponds to block compression and pair compression. On the other hand, the specific 
features of block compression and pair compression that yield the approximation ratio of $\Ocomp(\log (n/g))$
have no counterpart in parallel tree contraction.

\medskip

\paragraph{\bf Computational model}
To achieve the linear running time we need some assumption on the computational model and form of the input.
We assume that numbers of $\Ocomp(\log n)$ bits
(where $n$ is the size of the input tree) 
can be manipulated in time  $\Ocomp(1)$ 
and that the labels of the input tree come from an interval $[1,\twodots,n^c]$, where $c$ is some constant.
Those assumption are needed so that we can employ \algradix,
which sorts $m$ many $k$-ary numbers of length $\ell$ in time $\Ocomp(\ell m+ \ell k)$, see e.g.~\cite[Section~8.3]{CoLeiRivStein09}.
In fact, we need a slightly more powerful version of \algradix{} that sorts lexicographically $m$ sequences of digits from $[1,\twodots, k]$
of lengths $\ell_1, \ell_2, \ldots, \ell_m$ in time $\Ocomp(k + \sum_{i=1}^m \ell_i)$. This is a standard generalisation of \algradix~\cite[Theorem 3.2]{DBLP:books/aw/AhoHU74}.
If for any reason the labels do not belong to an interval  $[1,\twodots,n^c]$, we can sort them in time $\Ocomp(n \log n)$ and replace them
with numbers from $\{1,2,\ldots, n\}$.

\section{Preliminaries}

\subsection{Trees}
Let us fix for every $i \geq 0$ a countably infinite set $\alphabet_i$ of {\em letters} (or {\em symbols}) of rank $i$,
where $\alphabet_i \cap \alphabet_j = \emptyset$ for $i \neq j$, and let $\alphabet = \bigcup_{i \geq 0} \alphabet_i$.
Symbols in $\alphabet_0$ are called \emph{constants}, while symbols in $\alphabet_1$ are called \emph{unary letters}.
We also write $\rank(a)=i$ if $a \in \alphabet_i$. A \emph{ranked alphabet} is a finite subset of $\alphabet$.
Let $F$ be a ranked alphabet. We also write $F_i$ for $F \cap \alphabet_i$ and $F_{\geq i}$ for $\bigcup_{j \geq i} F_i$.
An \emph{$F$-labelled tree} is a rooted, ordered tree whose nodes are labelled with elements from $F$,
satisfying the condition that if a node $v$ is labelled with $a$ then it has exactly $\rank(a)$ children, which are 
linearly ordered (by the usual left-to-right order).
We denote by $\mathcal T(F)$ the set of $F$-labelled trees.
In the following we simply speak about trees when the ranked alphabet is clear from the context or unimportant.
When useful, we identify an $F$-labelled tree with a term over $F$ in the usual way.
The size of the tree $t$ is its number of nodes and is denoted by $|t|$.
We assume that a tree is given using a pointer representation, i.e., each node has a list of its children (ordered from left to right)
and each node (except for the root) has a pointer to its parent node.

Fix a countable set $\contexts$ with
$\contexts \cap \alphabet = \emptyset$ of \emph{(formal) parameters}, which are usually denoted by
$y, y_1, y_2, \ldots$.
For the purposes of building trees with parameters, we treat all parameters as constants,
and so $F$-labelled trees with parameters from $Y\subseteq \contexts$ (where $Y$ is finite)
are simply $(F \cup Y)$-labelled trees, where the rank of every $y \in Y$ is $0$.
However, to stress the special role of parameters we write $\mathcal T(F,Y)$ for the set of $F$-labelled trees with parameters from $Y$.
We identify $\mathcal T(F)$ with $\mathcal T(F,\emptyset)$.
In the following we talk about \emph{trees with parameters} (or even trees) 
when the ranked alphabet and parameter set is clear from the context or unimportant.
The idea of parameters is best understood when we represent trees as terms:
For instance $f(y_1,a,y_2,y_1)$ with parameters $y_1$ and $y_2$ can be seen as a term with variables $y_1$, $y_2$ and we can instantiate
those variables later on.
A {\em pattern} (or {\em linear tree}) is a tree $t \in \mathcal T(F,Y)$, that contains for every $y \in Y$ at most one $y$-labelled node. Clearly,
a tree without parameters is a pattern.
All trees in this paper are patterns, and we do not mention this assumption explicitly in the following.

\begin{figure}
	\centering
		\includegraphics{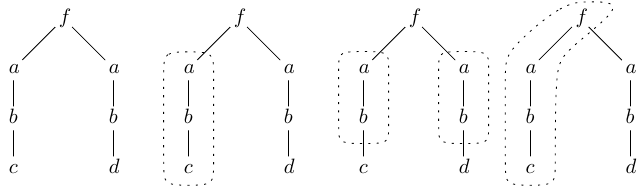}
	\caption{The tree $f(a(b(c)), a(b(d)))$, its subpattern $a(b(c))$, subpatterns $a(b(y))$, and subpattern $f(a(b(c)),y)$
	in which $a$ is the first child of $f$.}
	\label{fig:tree_pattern}
\end{figure}

When we talk of a {\em subtree} $u$ of a tree $t$, we always mean a full subtree in the sense that for every node of $u$ all children of that node in $t$
belong to $u$ as well. In contrast, a {\em subpattern} $v$ of $t$ is obtained from a subtree $u$ of $t$ by removing some of the subtrees of $u$. If we replace
these subtrees by pairwise different parameters, then we obtain a pattern $p(y_1, \ldots, y_n)$ and we say that \textit{(i)} the subpattern $v$ is an {\em occurrence}
of the pattern $p(y_1,\ldots,y_n)$ in $t$ and \textit{(ii)} $p(y_1,\ldots,y_n)$ is the pattern corresponding to the subpattern $v$ (this pattern is unique up to renaming
of parameters).
This later terminology applies also to subtrees, since a subtree is a subpattern as well.
A {\em context} $c(y)$ is a pattern with exactly one parameter, and occurrences of a context $c(y)$ in a tree are called {\em subcontexts}.
To make this notions clear, consider for instance the tree $f(a(b(c)),a(b(d)))$ with $f \in \alphabet_2$, $a,b \in \alphabet_1$ and $c,d \in \alphabet_0$.
It contains one occurrence of the pattern (in fact, tree)  $a(b(c))$, two occurrences of the pattern (in fact, context) $a(b(y))$ and one of the pattern 
 (in fact, context) $f(a(b(c)),y)$, see Figure~\ref{fig:tree_pattern}.

A {\em chain pattern} is a context of the form $a_1(a_2(\ldots(a_k(y))\ldots))$ with $a_1, a_2, \ldots, a_k \in\alphabet_1$.
A \emph{chain} in a tree $t$ is an occurrence of a chain pattern in $t$.
A chain $s$ in $t$ is \emph{maximal} if there is no chain $s'$ in $t$ with $s \subsetneq s'$.
A $2$-chain is a chain consisting of only two nodes (which, most of the time, are labelled
with different letters). 
For $a \in \alphabet_1$, an $a$-maximal chain is a chain such that (i) all nodes are labelled with $a$ and 
(ii) there is no chain $s'$ in $t$ such that $s \subsetneq s'$ and all nodes of $s'$ are labelled with $a$ too.
Note that an $a$-maximal chain is not necessarily a maximal chain. 
Consider for instance the tree $b(a(a(a(c))))$. The unique occurrence of the chain pattern $a(a(a(y)))$ is an
$a$-maximal chain, but is not maximal. The only maximal chain is the unique occurrence of the chain pattern
$b(a(a(a(y))))$,  see Figure~\ref{fig:tree_chain}.

\begin{figure}
	\centering
		\includegraphics{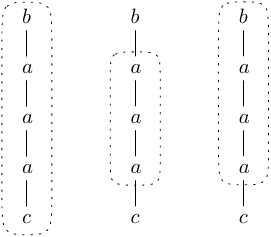}
	\caption{The tree $b(a(a(a(c))))$, its unique $a$-maximal chain (an occurrence of $a(a(a(y)))$)
	and its unique maximal chain (an occurrence of $b(a(a(a(y))))$).}
	\label{fig:tree_chain}
\end{figure}

We write $a_1a_2\cdots a_k$ for the chain pattern $a_1(a_2(\ldots(a_k(y))\ldots))$ 
and treat it as a string (even though this `string' still needs an argument on its right to form a proper term).
In particular, we write $a^\ell$  for the chain pattern consisting of $\ell$ many $a$-labelled nodes
and we write $vw$ (for chain patterns $v$ and $w$) for what should be $v(w(y))$.

\subsection{SLCF tree grammars}

For the further considerations, fix a countable infinite set $\mathbb{N}_i$ of symbols of rank $i$
with $\mathbb{N}_i \cap \mathbb{N}_j = \emptyset$ for $i \neq j$. Let $\mathbb{N} = \bigcup_{i \geq 0} \mathbb{N}_i$.
Furthermore, assume that $\alphabet \cap \mathbb{N} = \emptyset$.
Hence, every finite subset 
$N \subseteq \mathbb{N}$ is a ranked alphabet.
A \emph{linear context-free tree grammar},
\emph{linear CF tree grammar} for short, \footnote{There exist also non-linear CF tree grammars, which we do not need for our purpose.} is a tuple 
$\grammar =(N,F,P,S)$ such that the following 
conditions hold:
\begin{enumerate}[(1)]
	\item $N \subseteq \mathbb{N}$ is a finite set of \emph{nonterminals}.
	\item $F \subseteq \alphabet$ is a finite set of \emph{terminals}.
	\item $P$ (the set of \emph{productions}) is a finite set of
          pairs $(A, t)$ (for which we write $A \to t$), where $A \in N$ and $t \in \mathcal{T}(F \cup N,  \{y_1,\ldots,y_{\mathsf{rank}(A)}\})$
          is a pattern, which contains exactly one $y_i$-labelled node for each $1 \leq i \leq \rank(A)$.
          \item $S\in N$ is the \emph{start nonterminal}, which is of rank $0$.
\end{enumerate}
To stress the dependency of $A$ on its parameters we sometimes write
$A(y_1,\ldots,y_{\rank(A)}) \to t$ instead of $A \to t$.
Without loss of generality we assume that every nonterminal $B\in N\setminus\{S\}$ 
occurs in the right-hand side $t$ of some production $(A\to t)\in P$
(a much stronger fact is shown in~\cite[Theorem 5]{LohreyMS12}).

\begin{figure}
	\centering
		\includegraphics{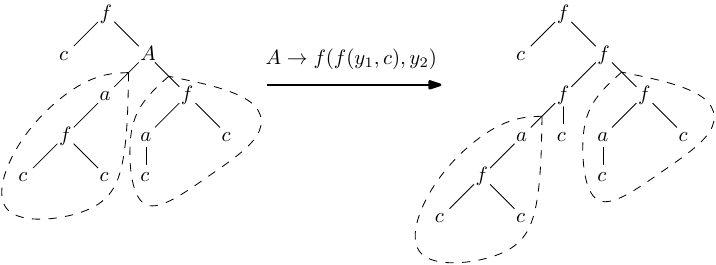}
	\caption{The tree $f(c,A(a(f(c,c)),f(a(c),c)))$ and the result of applying the rule $A(y_1,y_2) \rightarrow f(f(y_1,c),y_2)$. The subtrees that are substituted for parameters are within dashed blobs.}
	\label{fig:rewriting}
\end{figure}

A linear CF tree grammar  $\grammar$ is \emph{$k$-bounded} (for a natural number  $k$) if $\mathsf{rank}(A)\leq k$ for every $A\in N$.
Moreover, $\grammar$ is \emph{monadic} if it is $1$-bounded.
The derivation relation $\Rightarrow_\grammar$ on
$\mathcal T(F \cup N,Y)$ is defined as follows:
$s \Rightarrow_\grammar s'$ if and only if there is a production $(A(y_1,\ldots,y_\ell) \rightarrow t) \in P$
such that
$s'$ is obtained from $s$ by replacing some subtree $A(t_1,\ldots,t_\ell)$
of $s$ 
by $t$ with each $y_i$ replaced by $t_i$.
Intuitively, we replace  an $A$-labelled node by the pattern $t(y_1\ldots,y_{\rank(A)})$ and thereby identify the $j$-th child of $A$ with the unique $y_j$-labelled
node of the pattern, see Figure~\ref{fig:rewriting}.
Then $L(\grammar)$ is the set of all trees from $\mathcal{T}(F)$ 
(so $F$-labelled without parameters) that can be derived from $S$ (in arbitrarily many steps).

A \emph{straight-line context-free tree grammar} (or \emph{SLCF grammar} for short) 
is a linear CF tree grammar $\grammar = (N,F,P,S)$, where
\begin{itemize}
	\item for every $A \in N$ there is \emph{exactly one} production $(A \rightarrow t)\in P$
	with left-hand side $A$,
	\item if $(A \to t) \in P$ and $B$ occurs in $t$ then $B < A$, where $<$ is a linear order
	on $N$, and 
	\item $S$ is the maximal nonterminal with respect to $<$.
\end{itemize}
By the first two conditions, every $A \in N$ derives exactly one tree from $\mathcal T(F,\{y_1,\ldots,y_{\mathsf{rank}(A)}\})$. We denote this tree by $\eval(A)$ (like \emph{valuation}).
Moreover, we define $\eval(\grammar) = \eval(S)$, which is a tree from $\mathcal T(F)$.
In fact,  every tree from $\mathcal{T}(F \cup N,Y)$ derives
a unique tree from $\mathcal{T}(F,Y)$, where $Y$ is an arbitrary finite set of parameters.
For an SLCF grammar $\grammar = (N,F,P,S)$ we can assume without loss of generality that for every production $(A \to t) \in P$ the parameters
$y_1,\ldots,y_{\rank(A)}$ occur in $t$  in the order $y_1, y_2, \ldots, y_{\rank(A)}$ from left to right.
This can be ensured by a simple bottom-up rearranging procedure, see~\cite[proof of Theorem 5]{LohreyMS12}.
In the rest of the paper, when we speak of grammars, we always mean SLCF grammars.

\subsection{Grammar size}
When defining the \emph{size} $|\grammar|$ of the SLCF grammar \grammar,
one possibility is $|\grammar|=\sum_{(A\to t)\in P}|t|$, i.e., the sum of all sizes
of all right-hand sides.
However, consider for instance the rule $A(y_1,\ldots,y_\ell) \to f(y_1,\ldots,y_{i-1},a,y_i,\ldots,y_\ell)$.
It is in fact enough to describe the right-hand side as $(f,(i,a))$, as we have $a$ as the $i$-th child of $f$.
On the remaining positions we just list the parameters, whose order is known to us (see the remark in the previous
paragraph).
In general, each right-hand side of $\grammar$ can be specified by listing for each node its children that are \emph{not} parameters together
with their positions in the list of all children. These positions are numbers between $1$ and $r$  
(it is easy to show that our algorithm \algmain{} creates only nonterminals of rank at most $r$,
see Lemma~\ref{lem: small rank}, and hence
every node in a right-hand side has at most $r$ children)
and therefore fit into $\Ocomp(1)$ machine words.
For this reason we define the size $|\grammar|$ as the total number of non-parameter nodes
in all right-hand sides.
Note that such an approach is well-established;
see for instance~\cite{BuLoMa07}.

Should the reader prefer to define the size of a grammar as the total number of all nodes (including
parameters) in all right-hand sides, then
the approximation ratio of our algorithm \algmain{} has to be multiplied with the additional factor $r$.

\subsection{Notational conventions} \label{sec:conventions}
Our compression algorithm \algmain{} takes the input tree and applies to it local compression operations,
each such operation decreases the size of the tree.
With \mytree{} we always denote the current tree stored by \algmain, whereas
$n$ denotes the size of the initial input tree.
The algorithm \algmain{} relabels the nodes of the tree with fresh letters. With $F$ we always
denote the set of letters occurring in the current tree \mytree. 
By $r$ we denote the maximal rank of the letters occurring in the initial input tree.
The ranks of the fresh letters do not exceed $r$.

\begin{figure}
	\centering
		\includegraphics{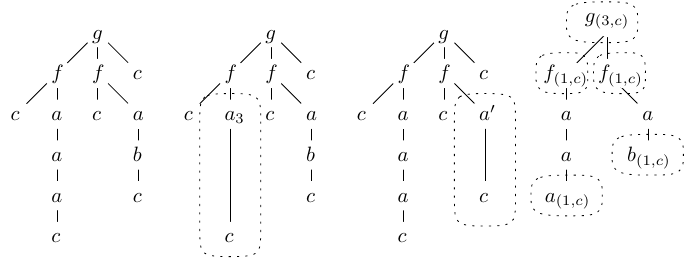}
	\caption{A tree and the results of compression operations: $a$-chain compression, $ab$-compression, and (parallel) leaf compression.}
	\label{fig:operations}
\end{figure}

\subsection{Compression operations}\label{sec: compression operations}
Our compression algorithm \algmain{} is based on three local replacement rules applied to trees:
\begin{enumerate}[(a)]
\item \label{a compression}$a$-maximal chain compression (for a unary symbol $a$), 
\item \label{b compression}\paircompression{},
\item \label{c compression}and leaf compression.
\end{enumerate}
Operations \eqref{a compression} and \eqref{b compression} apply only to unary letters and are direct translations of the operations used in the recompression-based algorithm for
constructing a grammar for a given string~\cite{grammar}.
To be more precise, \eqref{a compression} and \eqref{b compression} affect only chains, return chains as well, and when a chain is treated as a string
the results of \eqref{a compression} and \eqref{b compression}, respectively, correspond to the results of the corresponding operations on strings.
On the other hand, the last operations \eqref{c compression} is  new and designed specifically to deal with trees.
Let us inspect these operations:

\medskip

\paragraph{$a$-maximal chain compression} For a unary letter $a$ replace every $a$-maximal chain consisting of $\ell > 1$ nodes with a fresh unary letter $a_\ell$ (for all $\ell > 1$).

\medskip
\paragraph{$(a,b)$-pair compression} For two unary letters $a \neq b$ replace every occurrence of $ab$ by a single node labelled with a fresh unary letter $c$ (which identifies the pair $(a,b)$).

\medskip
\paragraph{$(f, i_1,a_1,\ldots,i_\ell,a_\ell)$-leaf compression:} 
	For $f \in F_{\geq 1}$, $\ell \geq 1$, $a_1, \ldots, a_\ell \in F_0$ and 
	$0 < i_1 < i_2 < \cdots < i_\ell \leq \rank(f) =: m$ replace every occurrence of $f(t_1,\ldots,t_m)$,
	where $t_{i_j} = a_j$ for $1 \leq j \leq \ell$ and $t_i$ is a non-constant for $i \not\in \{i_1, \ldots, i_\ell\}$, 
	with $f'(t_1,\ldots,t_{i_1-1},t_{i_1+1},\ldots,t_{i_\ell-1},t_{i_\ell+1},\ldots,t_m)$,
	where $f'$ is a fresh letter of rank $\rank(f) - \ell$ (which identifies $(f, i_1,a_1,\ldots,i_\ell,a_\ell)$).

\medskip
\noindent
Note that each of these operations decreases the size of the current tree. Also
note that for each of these compression operations one has to specify some arguments:
for chain compression the unary letter $a$,
for \paircompression{} the unary letters $a$ and $b$, and for leaf compression the letter $f$ (of rank at least 1) as well as the list
of positions $i_1 < i_2 < \dots < i_\ell$ and the constants $a_1$, \ldots, $a_{\ell}$.

Despite its rather cumbersome definition, the idea behind leaf compression is easy:
For a fixed occurrence of $f$ in a tree we `absorb' all leaf-children of $f$ that are constants (and do the same
for all other occurrences of $f$ that have the same set of leaf-children on the same positions).

Every application of one of our compression operations can be seen as the `backtracking' of a production of the grammar that we construct: 
When we replace $a^\ell$ by $a_\ell$,
we in fact introduce the new nonterminal $a_\ell(y)$ with the production 
\begin{equation} \label{production-a_l}
a_\ell(y) \to a^\ell(y).
\end{equation}
When we replace all  occurrences of the chain $ab$ by $c$, the new production is
\begin{equation} \label{production-c}
c(y) \to a(b(y)).
\end{equation}
Finally, for a $(f, i_1,a_1\ldots,i_\ell,a_\ell)$-leaf compression the production is
\begin{equation} \label{production-f'}
f'(y_1, \ldots, y_{\rank(f)-\ell}) \to f(t_1,\ldots,t_{\rank(f)}), 
\end{equation}
where $t_{i_j} = a_j$ for $1 \leq j \leq \ell$ and 
every $t_i$ with $i \not\in \{i_1, \ldots, i_\ell\}$ is a parameter
(and the left-to-right order of the parameters in the right-hand side is $y_1, \ldots, y_{\rank(f)-\ell}$).

Observe that all productions introduced in~\eqref{production-a_l}--\eqref{production-f'}
are for nonterminals of rank at most $r$.
\begin{lemma}
	\label{lem: small rank}
The rank of nonterminals defined by \algmain{} is at most $r$.
\end{lemma}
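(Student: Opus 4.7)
The plan is to prove the lemma by induction on the sequence of compression operations performed by \algmain, showing that the invariant ``every letter ever introduced has rank at most $r$'' is preserved.

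First I would establish the base case: before any compression step, the only letters present are those from the initial input tree, and by definition $r$ is the maximal rank among these. So the invariant holds trivially at the start, and in particular every nonterminal introduced so far (none) has rank at most $r$.

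For the inductive step, I would analyze each of the three compression operations in turn, using the productions \eqref{production-a_l}, \eqref{production-c}, and \eqref{production-f'} as the source of newly introduced nonterminals. For $a$-maximal chain compression, the fresh letter $a_\ell$ is unary, so its rank is $1 \leq r$ (the case $r = 0$ is degenerate, as then the input tree is a single constant and no compression step applies). The same argument handles \paircompression{}: the introduced letter $c$ is unary by definition. For $(f, i_1, a_1, \ldots, i_\ell, a_\ell)$-leaf compression the fresh letter $f'$ has rank $\rank(f) - \ell \leq \rank(f)$, and by the induction hypothesis $\rank(f) \leq r$ (since $f$ was present in the current tree, hence either occurred originally in \mytree{} or was introduced by a previous compression step, to which the inductive hypothesis applies). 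Therefore $\rank(f') \leq r$.

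There is no real obstacle here; the key observation is simply that each compression operation is rank-nonincreasing on the letters it produces: the first two create unary letters, and the third creates a letter whose rank is \emph{strictly smaller} than that of the letter it replaces. Composing the inductive invariant across the entire run of \algmain{} yields the claim that every nonterminal it defines has rank bounded by $r$.
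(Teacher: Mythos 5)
Your proof is correct and follows essentially the same reasoning as the paper, which presents the lemma as an immediate observation from the forms of the productions \eqref{production-a_l}--\eqref{production-f'} without a separate formal argument. Your explicit induction simply spells out what the paper leaves implicit: chain compression and \paircompression{} introduce only unary (rank-$1$) letters, and leaf compression introduces a letter whose rank $\rank(f)-\ell$ is strictly smaller than that of the already-present symbol $f$, so no compression step can push a rank above $r$.
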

During the analysis of the approximation ratio of \algmain{}  we also
consider the nonterminals of a smallest grammar generating the given input tree.
To avoid confusion between these nonterminals and the nonterminals of the grammar produced
by \algmain{}, we insist on calling the fresh symbols introduced by \algmain{} ($a_\ell$, $c$, and $f'$ in \eqref{production-a_l}--\eqref{production-f'})
\emph{letters} and add  them to the set $F$ of current letters, so that $F$ always denotes the set of letters in the current
tree.  In particular, whenever we talk about nonterminals, productions, etc.\ we mean the ones of the smallest grammar
we consider.

Still, the above rules \eqref{production-a_l}, \eqref{production-c}, and \eqref{production-f'}
form the grammar returned by our algorithm \algmain{} and we need to estimate their size.
In order to not mix the notation, we call the size of the rule for a new letter $a$ the \emph{representation cost} for $a$
and say that $a$ \emph{represents} the subpattern it replaces in \mytree.
For instance, the representation cost of $a_\ell$ in \eqref{production-a_l} is $\ell$,
the representation cost of $c$ in \eqref{production-c} is $2$, 
and the representation cost of $f'$ in \eqref{production-f'} is $\ell + 1$.
A crucial part of the analysis of \algmain{}
is the reduction of the representation cost for $a_\ell$.
Note that instead of representing $a^\ell(y)$ directly via the rule \eqref{production-a_l},
we can introduce new unary letters representing some shorter chains in $a^\ell$
and build a longer chains using the smaller ones as building blocks. For instance, the rule  $a_8(y) \to a^8(y)$ 
can be replaced by the rules
$a_8(y) \to a_4(a_4(y))$, $a_4(y) \to a_2(a_2(y))$ and $a_2(y) \to a(a(y))$. This yields a
total representation cost of $6$ instead of $8$.
Our algorithm employs a particular strategy for representing $a$-maximal chains.
Slightly abusing the notation we say that the sum of the sizes of the right-hand
sides of the generated subgrammar is the representation cost for $a_\ell$ (for our strategy).

\subsection{Parallel compression}
The important property of the compression operations is that we can perform many of them in parallel:
Since different $a$-maximal chains and $b$-maximal chains do not overlap (regardless of whether $a = b$ or not)
we can perform  $a$-maximal chain compression for all $a \in \letters_1$ in parallel (assuming that the new letters do not belong to $\letters_1$).
This justifies the following compression procedure for compression of all $a$-maximal chains (for all $a \in \letters_1$) in a tree $t$:
\begin{algorithm}[H]
	\caption{\algtreechaincomp$(\letters_1,t)$: Compression of chains of letters from $\letters_1$ in a tree $t$}
	\label{alg:treechaincomp}
	\begin{algorithmic}[1]
    \For{$a \in \letters_1$} \Comment{chain compression}
			\For{$\ell \gets 1 \twodots |t|$}
				\State replace every $a$-maximal chain of size $\ell$ by a fresh letter $a_\ell$ \Comment{$a_\ell \notin \letters_1$}
			\EndFor
    \EndFor
	\end{algorithmic}
\end{algorithm}
We refer to the procedure \algtreechaincomp{} simply as \emph{chain compression}.
The running time of an appropriate implementation is considered in the next section
and the corresponding representation cost is addressed in Section~\ref{sec: size}.

A similar observation applies to leaf compressions: we can perform several different leaf compressions
as long as we do not try to compress the letters introduced by these leaf compressions.
\begin{algorithm}[H]
	\caption{\algtreechildcomp$(\letters_{\geq 1},\letters_0,t)$: leaf compression for parent nodes in $\letters_{\geq 1}$,
	and leaf-children in $\letters_0$ for a tree $t$}
	\label{alg:treechildcomp}
	\begin{algorithmic}[1]
		\For{$f \in \letters_{\geq 1}, 0 < i_1 <i_2 < \cdots < i_\ell \leq \rank(f) =: m, (a_1, a_2, \ldots, a_\ell) \in \letters_0^\ell$}
						\State replace each subtree $f(t_1,\ldots,t_m)$ s.t.~$t_{i_j} = a_j$ for $1 \leq j \leq \ell$
						and $t_i \notin F_0$ for
						\par $\! \! \! i \not\in \{i_1, \ldots, i_\ell\}$ 
						by
						$f'(t_1,\ldots,t_{i_1-1},t_{i_1+1},\ldots,t_{i_\ell-1},t_{i_\ell+1},\ldots,t_m)$
						\Comment{$f' \notin \letters_{\geq 1} \cup \letters_0$}
		\EndFor
	 \end{algorithmic}
\end{algorithm}
We refer to the procedure \algtreechildcomp{} as \emph{leaf compression}.
An efficient implementation is given in the next section, while the analysis of the number of introduced letters
is done in Section~\ref{sec: size}.

The situation is more subtle for \paircompression: observe that in a chain $abc$ we can compress $ab$ or $bc$ but we cannot do both
in parallel (and the outcome depends on the order of the operations).
However, as in the case of string compression~\cite{grammar},
parallel $(a,b)$-pair compressions are possible when we take $a$ and $b$ from disjoint subalphabets
$\lettersup$ and $\lettersdown$, respectively. In this case we can tell for each unary letter whether it should be the parent node
or the child node in the compression step and the result does not depend on the order of the considered $2$-chains,
as long as the new letters do not belong to $\lettersup \cup \lettersdown$.
\begin{algorithm}[H]
	\caption{\algtreepaircomp$(\lettersup,\lettersdown,t)$: $(\lettersup,\lettersdown)$-compression for a tree $t$}
	\label{alg:treepaircomp}
	\begin{algorithmic}[1]
		\For{$a \in \lettersup$ and $b \in \lettersdown$}
			\State replace each occurrence of  $ab$ with a fresh unary letter $c$ \Comment{$c \notin \lettersup \cup \lettersdown$}
		\EndFor
	 \end{algorithmic}
\end{algorithm}
The procedure \algtreepaircomp{} is called $(\lettersup,\lettersdown)$-compression in the following.
Again, its efficient implementation is given in the next section and the analysis of the number of introduced letters
is done in Section~\ref{sec: size}.

\section{Algorithm}\label{sec:algorithm}
In a  single phase of the algorithm \algmain{}, 
chain compression, $(\lettersup,\lettersdown)$-compression
and leaf compression are executed in this order (for an appropriate choice of the partition $\lettersup, \lettersdown$).
The intuition behind this approach is as follows: If the tree $t$ in question does not have any unary letters,
then leaf compression on its own reduces the size of $t$ by at least half, as it effectively reduces all constant nodes, i.e., leaves of the tree, and more than half of the nodes are leaves.
On the other end of the spectrum is the situation in which all nodes (except for the unique leaf)
are labelled with unary letters. In this case our instance is in fact a string. Chain compression
and \paircompression{} correspond to the operations of block compression and pair compression, respectively, from the earlier work on string
compression~\cite{grammar}, where it is shown that block compression followed by pair compression reduces the size 
of the string by a constant factor
(for an appropriate choice of the partition $\lettersup, \lettersdown$ of the letters occurring in the string).
The in-between cases are a mix of those two extreme scenarios and it can be shown that for them
the size of the instance drops by a constant factor in one phase as well.

Recall from Section~\ref{sec:conventions} that
$\mytree$ always denotes the current tree kept by \algmain{} and that  $F$ is the set of letters occurring in $\mytree$.
Moreover, $n$ denotes  the size of the input tree. 
\begin{algorithm}[H]
	\caption{\algmain: Creating an SLCF grammar for the input tree \mytree}
	\label{alg:main}
	\begin{algorithmic}[1]
	\While{$|\mytree|>1$} \label{alg:mainloop}
		\State $\letters_1 \gets $ list of unary letters in \mytree{} \label{listing letters}
		\State $\mytree \gets \algtreechaincomp(\letters_1,\mytree)$ \Comment{time \size}
		\State $\letters_1 \gets $ list of unary letters in \mytree{} \label{listing pairs}
		\State compute partition $\letters_1 = \lettersup \uplus \lettersdown$ using the algorithm from Lemma~\ref{lem:finding partition}\label{partition letters}  \Comment{time \size}
		\State $\mytree \gets \algtreepaircomp(\lettersup,\lettersdown,\mytree)$ \Comment{time \size}
		\State $\letters_0 \gets $ list of constants in \mytree, $\letters_{\geq 1} \gets $ list of other letters in \mytree{}
		\State $\mytree \gets  \algtreechildcomp(\letters_{\geq 1},\letters_0,\mytree)$  \Comment{time \size}
	\EndWhile
	\State \Return constructed grammar
	 \end{algorithmic}
\end{algorithm}
A single iteration of the main loop of \algmain{} is called a \emph{phase}.
In the rest of this section we show how to implement \algmain{}  in linear time (a polynomial implementation is straightforward),
while in Section~\ref{sec: size} we analyse the approximation ratio of \algmain{}.

Since the compression operations use \algradix{} for grouping,
it is important that right before such a compression the letters in \mytree{} form an interval of numbers.
As no letters are replaced in the listing of letters preceding such a compression,
it is enough to guarantee that after each compression, as a post-processing,
letters are replaced so that they form an interval of numbers.
Such a post-processing takes linear time.

\begin{lemma}[{cf.~\cite[Lemma~1]{grammar}}]
\label{lem:letters are consecutive}
After each compression operation performed by \algmain{} we can rename in time \sizeprim{} the letters used in \mytree{}
so that they form an interval of numbers, where $\mytree'$ denotes the tree before the compression step.
Furthermore, in the preprocessing step we can, in linear time, ensure the same property for the input tree.
\end{lemma}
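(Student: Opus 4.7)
The plan is to maintain as an invariant that, immediately before every compression step, the set of letters labelling the nodes of \mytree{} is an interval $[1,\twodots,m]$ of positive integers, and to re-establish this invariant after the step in time \sizeprim.

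First I would arrange that fresh letters introduced during a compression step are numbered consecutively $m+1, m+2, \ldots$ by maintaining a counter that is incremented each time a new symbol is created. After the step, the letters of \mytree{} thus lie in some range $[1,\twodots,m']$ with $m' = m + \Delta$, where $\Delta$ is the number of distinct fresh letters. By the invariant $m \le |\mytree'|$, and I claim $\Delta = \Ocomp(|\mytree'|)$ as well: every fresh letter witnesses either a distinct $a$-maximal chain in $\mytree'$ (for chain compression), a distinct occurring $2$-chain $ab$ (for \paircompression), or a distinct leaf-signature $(f,i_1,a_1,\ldots,i_\ell,a_\ell)$ realised at some node (for leaf compression), and in each case the count is at most $|\mytree'|$. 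Hence $m' = \Ocomp(|\mytree'|)$, but the range may have gaps, because some old letters may no longer appear after the step and some freshly reserved indices may turn out to be unused.

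To compress $[1,\twodots,m']$ down to a genuine interval I would apply counting sort. The procedure is: allocate a zero-initialised Boolean array $A[1\twodots m']$, traverse \mytree{} once setting $A[a]\gets 1$ for every label $a$ encountered, compute prefix sums $s[a] = \sum_{i\le a} A[i]$ in a linear scan of $A$, and then traverse \mytree{} a second time replacing every label $a$ by $s[a]$. The resulting labelling uses precisely the integers $[1,\twodots,s[m']]$ with each one realised by some node, and the total running time is $\Ocomp(m' + |\mytree|) = \Ocomp(|\mytree'|)$ under the assumption that $\Ocomp(\log n)$-bit words are manipulated in constant time.

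For the preprocessing of the input tree, the initial labels lie in $[1,\twodots,n^c]$, so viewed as $c$-digit numerals in base $n$ they can be sorted using \algradix{} in time $\Ocomp(c n + c n) = \Ocomp(n)$. A linear pass through the sorted list groups equal labels into contiguous blocks; assigning $1,2,\ldots,k$ to these blocks (where $k$ is the number of distinct labels) and writing the new label back at the corresponding tree nodes finishes the preprocessing in $\Ocomp(n)$ time. The main obstacle is the combinatorial bound $\Delta = \Ocomp(|\mytree'|)$, which is what makes counting sort applicable within the required time budget; once this bound is established, the rest of the argument is a standard application of counting and radix sort.
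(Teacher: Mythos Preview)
Your proof is correct and follows essentially the same approach as the paper: assign fresh letters consecutive integers past the current interval, bound the resulting range by $\Ocomp(|\mytree'|)$, and then compact in linear time. The only notable difference is that the paper renames into a \emph{new} interval above the old one (starting at $m'+1$) rather than compressing back to $[1,\twodots,s[m']]$; the paper remarks that avoiding reuse of small integers makes it easier to emit the final output grammar, since a given integer then names at most one letter over the whole run---a point worth keeping in mind, though it does not affect the correctness of the lemma as stated.
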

\begin{proof}
Recall that we assume that the input alphabet consists of letters that can be identified with elements from an interval $\{1, \ldots, n^c\}$
for a constant $c$, see the discussion in the introduction.
Treating them as $n$-ary numbers of length $c$, we 
we can sort them using \algradix{}
in $\Ocomp(cn)$ time, i.e., in linear time. Then we can renumber the letters to $1, 2, \ldots, n'$ for some $n' \leq n$.
This preprocessing is done once at the beginning.

Fix the compression step and suppose that before the listing preceding this compression the letters formed an interval $[m,\twodots, m + k]$.
Each new letter, introduced in place of a compressed subpattern (i.e., a chain $a^\ell$, a chain $ab$
or a node $f$ together with some leaf-children) is assigned a consecutive value,
and so after the compression the letters occurring in  \mytree{} are within an interval $[m,\twodots, m + k']$
for some $k' > k$, note also that $k'-k \leq \size \leq \sizeprim$,
as each new letter labels a node in \mytree. It is now left to re-number the letters from $[m,\twodots, m + k']$,
so that the ones occurring in \mytree{} indeed form an interval.
For each symbol in the interval $[m,\twodots, m + k']$ we set a flag to $0$. Moreover, we set a variable $\next$ to $m+k'+1$. 
Then we traverse $T$ (in an arbitrary way).
Whenever we spot a letter $a \in [m,\twodots, m + k']$ with $\flag[a] = 0$, we set
$\flag[a] := 1$; $\mynew[a] := \next$, and $\next := \next+1$.
Moreover, we replace the label of the current node (which is $a$) by $\mynew[a]$.
When we spot a symbol $a \in [m,\twodots, m + k']$ with $\flag[a] = 1$, then
we replace the label of the current node (which is $a$) by $\mynew[a]$.
Clearly the running time is $\Ocomp(\size + \sizeprim)$ and after the algorithm the symbols form a subinterval of $[m+k'+1,\twodots,m+2k'+1]$.
\end{proof}
The reader might ask, why we do not assume in Lemma~\ref{lem:letters are consecutive} that the letters used in \mytree{}
form an initial interval of numbers (starting with $1$). The above proof can be easily modified so that it ensures this property.
But then, we would assign new names to letters, which makes it difficult to produce the final output grammar at the end.

\subsection{Chain compression} \label{sec chain comp}

The efficient implementation of \algtreechaincomp$(\letters_1,\mytree)$ is very simple:
We traverse $\mytree$. For an $a$-maximal chain of size $1 < \ell \leq |\mytree|$
we create a record $(a,\ell,p)$, where 
$p$ is the pointer to the top-most node in this chain.
We then sort these records lexicographically using \algradix{} (ignoring the last component and viewing
$(a,\ell)$ as a number of length $2$).
There are at most $|\mytree|$ records and we assume that $F$ can be identified with an interval,
see Lemma~\ref{lem:letters are consecutive}.  
Hence, \algradix{}  needs time $\size$ to sort the records.
Now, for a fixed unary letter $a$, the consecutive tuples with the first component $a$
correspond to all $a$-maximal chains, ordered by size.
It is easy to replace them in time \size{}  with new letters.

\begin{lemma}
\label{lem: chain comp time}
\algtreechaincomp$(\letters_1,\mytree)$ can be implemented in \size{} time.
\end{lemma}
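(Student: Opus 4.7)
The plan essentially follows the sketch given in the paragraph preceding the lemma, so I would just fill in the algorithmic details and the time accounting.

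First, I would identify all maximal chains of unary letters in a single traversal of $\mytree$. Concretely, mark a unary node $v$ labelled $a$ as a \emph{chain start} if either its parent does not exist, or is labelled with a letter different from $a$ (in particular, is non-unary). Each $a$-maximal chain has a unique chain start, namely its top node. Starting from every chain-start node, walk down through unary nodes labelled $a$, counting the length $\ell$; this produces for each $a$-maximal chain with $\ell>1$ a record $(a,\ell,p)$, where $p$ points to the chain-start node. Since every node of \mytree{} is visited a constant number of times in total (it belongs to at most one chain below its chain-start), this step runs in time $\Ocomp(|\mytree|)$ and produces at most $|\mytree|$ records.

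Next I would sort the records lexicographically by $(a,\ell)$. By Lemma~\ref{lem:letters are consecutive}, the letters currently in $\mytree$ form an interval of integers of size $\Ocomp(|\mytree|)$, and $\ell \leq |\mytree|$, so the records are pairs of digits from $[1,\twodots,|\mytree|]$. Thus \algradix{} sorts them in time $\Ocomp(|\mytree|)$, exactly as needed. After sorting, records sharing the same pair $(a,\ell)$ sit in a contiguous block: for each such block we allocate a single fresh letter $a_\ell$ (of rank $1$) and, using the stored pointer $p$ from each record, contract the corresponding $a$-maximal chain in $\mytree$ into a node labelled $a_\ell$, updating parent and child pointers. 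Each such contraction costs time proportional to the length of the contracted chain, so the total cost of the replacement step is $\Ocomp(|\mytree|)$.

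Finally, applying Lemma~\ref{lem:letters are consecutive} as a post-processing renumbers the letters into an interval again in $\Ocomp(|\mytree|)$ time, preserving the invariant needed for future compression phases. Summing the three steps yields the claimed bound $\Ocomp(|\mytree|)$. The only point that requires a little care is producing the records in linear time: one must avoid scanning the same chain multiple times, which is exactly why I start a descent only from chain-start nodes and why the total descent work telescopes to the size of the tree. Everything else is a direct application of \algradix{} to a set of pairs whose components lie in an interval of size $\Ocomp(|\mytree|)$.
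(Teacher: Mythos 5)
Your proof is correct and follows essentially the same approach as the paper's: traverse $\mytree$ to produce a record $(a,\ell,p)$ per $a$-maximal chain, sort these records by $(a,\ell)$ with \algradix{} (exploiting the fact that letters form an interval by Lemma~\ref{lem:letters are consecutive}), and then replace each chain via the stored pointer. The extra detail you give about detecting chain-start nodes to avoid rescanning is a reasonable way to flesh out the paper's terse ``we traverse $\mytree$'' step.
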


Note that so far we did not care about the representation cost for the new letters that replace $a$-maximal chains.
We use a particular scheme to represent $a_{\ell_1}, a_{\ell_2}, \ldots, a_{\ell_k}$, which has a representation cost
of $\Ocomp(k + \sum_{i=1}^k \log(\ell_{i} - \ell_{i-1}))$, where we take $\ell _0 = 0$ for convenience.
This is an easy, but important improvement over $\Ocomp(k + \sum_{i=1}^k \log \ell_i)$
obtained using the binary expansion of the numbers $\ell_1, \ell_2, \ldots, \ell_k$.

\begin{lemma}[{cf.~\cite[Lemma~2]{grammar}}]
\label{lem: cost of powers}
Given a list $\ell_1 <  \ell_2 < \dots < \ell_k$ we can represent the letters $a_{\ell_1}, a_{\ell_2}, \ldots, a_{\ell_k}$
that replace the chain patterns $a^{\ell_1}, a^{\ell_2}, \ldots, a^{\ell_k}$ with a total cost
of $\Ocomp(k + \sum_{i=1}^k \log(\ell_{i} - \ell_{i-1}))$, where $\ell_0 = 0$.
\end{lemma}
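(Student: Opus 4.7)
The plan is to build $a_{\ell_1}, \ldots, a_{\ell_k}$ incrementally: instead of defining each $a_{\ell_i}$ directly as a chain of $\ell_i$ copies of $a$, I would reuse $a_{\ell_{i-1}}$ together with a short chain representing the missing suffix $a^{\ell_i - \ell_{i-1}}$. This mirrors the string construction in~\cite[Lemma~2]{grammar}, and it transfers to our setting because only unary letters are involved, so the surrounding tree structure plays no role.

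First, I would introduce \emph{doubling} letters $b_0, b_1, \ldots, b_M$, where $M = \lceil \log_2 \ell_k \rceil$, defined by $b_0(y) \to a(y)$ and $b_j(y) \to b_{j-1}(b_{j-1}(y))$ for $1 \leq j \leq M$, so that $b_j$ represents the chain pattern $a^{2^j}$. Each right-hand side contains at most two non-parameter nodes, giving total representation cost $\Ocomp(M) = \Ocomp(\log \ell_k)$ for these auxiliary letters.

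Next, set $\delta_i := \ell_i - \ell_{i-1}$ (with $\ell_0 = 0$) and expand $\delta_i = \sum_{j \in S_i} 2^j$ in binary, so that $|S_i| \leq \lfloor \log_2 \delta_i \rfloor + 1$. For each $i$, introduce an auxiliary letter $c_i$ whose rule concatenates the $b_j$ for $j \in S_i$ in decreasing order; this rule has cost $|S_i| = \Ocomp(\log \delta_i + 1)$ and makes $c_i$ represent $a^{\delta_i}$. Finally, set $a_{\ell_1}(y) \to c_1(y)$ and $a_{\ell_i}(y) \to a_{\ell_{i-1}}(c_i(y))$ for $i \geq 2$, each of cost at most $2$; an easy induction then shows that $a_{\ell_i}$ represents $a^{\ell_i}$, as required.

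Summing up, the total representation cost is $\Ocomp(\log \ell_k) + \sum_{i=1}^k \Ocomp(\log \delta_i + 1) + \Ocomp(k)$, and the only step that requires a moment's thought is absorbing the first summand into the desired bound. For this I would use $\ell_k = \sum_i \delta_i \leq k \cdot \max_i \delta_i$, so $\log \ell_k \leq \log k + \max_i \log \delta_i \leq k + \sum_{i=1}^k \log \delta_i$. The total becomes $\Ocomp(k + \sum_{i=1}^k \log(\ell_i - \ell_{i-1}))$, as claimed. This small arithmetic observation is the whole point of the incremental representation: representing each $a_{\ell_i}$ independently from its binary expansion would only give the weaker bound $\Ocomp(k + \sum_{i=1}^k \log \ell_i)$.
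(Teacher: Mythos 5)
Your construction is correct and follows essentially the same plan as the paper's: doubling letters, binary expansion of the gaps $\delta_i = \ell_i - \ell_{i-1}$, and an incremental ladder $a_{\ell_i}(y) \to a_{\ell_i - \ell_{i-1}}(a_{\ell_{i-1}}(y))$. The one place where you take a small detour is the range of the doubling sequence: you build $b_0,\ldots,b_M$ with $M = \lceil \log_2 \ell_k\rceil$, which costs $\Ocomp(\log \ell_k)$ and forces the extra absorption step $\log \ell_k \leq \log k + \max_i \log\delta_i \leq k + \sum_i \log\delta_i$. The paper instead stops the doubling at $\lfloor \log_2 \ell \rfloor$ where $\ell = \max_i(\ell_i - \ell_{i-1})$ is the \emph{largest gap}; that is already enough for every binary expansion of a $\delta_i$, and its cost $2\lfloor \log_2 \ell\rfloor = 2\max_i \lfloor\log_2\delta_i\rfloor \leq 2\sum_i \log\delta_i$ (all summands nonnegative since $\delta_i \geq 1$) is absorbed directly, with no further arithmetic. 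Your absorption argument is correct, so the result is the same, but picking the threshold by the maximum gap rather than by $\ell_k$ is the cleaner choice.
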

\begin{proof}
The proof is identical, up to change of names, to the proof of Lemma~2 in~\cite{grammar},
still we supply it for completeness.

Firstly observe that without loss of generality we may assume that
the list $\ell_1, \ell_2, \dots, \ell_k$ is given in a sorted way,
as it can be easily obtained form the sorted list of occurrences of $a$-maximal chains.
For simplicity define $\ell_0 = 0$ and let $\ell = \max_{i=1}^k (\ell_{i} - \ell_{i-1})$.

In the following, we define rules for certain new unary letters $a_m$, each of them derives $a^m$
(in other words, $a_m$ represents $a^m$).
For each $1 \leq i \leq \lfloor\log \ell\rfloor$ introduce a new letter $a_{2^i}$ with the rule $a_{2^i}(y_1) \to a_{2^{i-1}}(a_{2^{i-1}}(y_1))$,
where $a_1$ simply denotes $a$. Clearly $a_{2^i}$ represents $a^{2^i}$
and the representation cost summed over all $1 \leq i \leq \lfloor\log \ell\rfloor$ is $2 \lfloor\log \ell\rfloor$.

Now introduce new unary letters $a_{\ell_i - \ell_{i-1}}$ for each $1 \leq i \leq k$,
which represent $a^{\ell_i - \ell_{i-1}}$.
These letters are represented using the binary expansions of the numbers $\ell_i - \ell_{i-1}$, i.e., by concatenation of $\lfloor \log(\ell_i - \ell_{i-1}) \rfloor+1$
many letters from $a_1, a_2, \ldots, a_{2^{\lfloor\log \ell\rfloor}}$.
This introduces an additional representation cost of  $\sum_{i=1}^k (1 + \lfloor\log(\ell_{i} - \ell_{i-1})\rfloor) \leq
k + \sum_{i=1}^k \log(\ell_{i} - \ell_{i-1})$.

Finally, each $a_{\ell_i}$ is  represented as $a_{\ell_{i}}(y_1) \to a_{\ell_{i} -\ell_{i-1}} (a_{\ell_{i-1}}(y_1))$,
which adds $2k$ to the representation cost.
Summing all contributions yields the promised value $\Ocomp(k + \sum_{i=1}^k \log(\ell_{i} - \ell_{i-1}))$.
\end{proof}

In the following we also use a simple property of chain compression:
Since no two $a$-maximal chains can be next to each other,
there are no $b$-maximal chains (for any unary letter $b$) of length greater than $1$ in \mytree{} after chain compression.

\begin{lemma}[cf.~{\cite[Lemma~3]{grammar}}]
\label{lem:consecutive letters are different}
In line~\ref{listing pairs} of algorithm \algmain{} there is no node in \mytree{} such that this node and its child are labelled with the same unary letter.
\end{lemma}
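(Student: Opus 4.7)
The plan is to argue by contradiction. Suppose that at line~\ref{listing pairs} there is a node $v$ in \mytree{} labelled with a unary letter $a'$ whose (necessarily unique) child $u$ is also labelled with $a'$. I split into two cases depending on the origin of $a'$: either $a'$ already belonged to $\letters_1$ as listed in line~\ref{listing letters}, or $a'$ is one of the fresh letters introduced inside \algtreechaincomp.

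In the first case ($a'\in\letters_1$ before the call), the compression only removes old nodes and adds fresh letters that lie \emph{outside} $\letters_1$ (this is explicitly guaranteed by the comment $a_\ell\notin\letters_1$ in \algtreechaincomp). Therefore both $v$ and $u$ existed with label $a'$ already before chain compression. Their parent-child pair is thus part of an $a'$-chain of length at least~$2$, which is contained in a uniquely determined $a'$-maximal chain of some length $\ell\geq 2$. But \algtreechaincomp replaces any such maximal chain by a single fresh node labelled $a'_\ell$, so $v$ could not have survived with label $a'$. This is a contradiction.

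In the second case, $a'=a_\ell$ for some $a\in\letters_1$ and $\ell>1$. Then $v$ and $u$ are both fresh nodes, each replacing a distinct $a$-maximal chain (necessarily of length $\ell$): call them $C_v$ and $C_u$. Because $v$ is the parent of $u$ in the compressed tree, before compression the topmost node of $C_u$ must have been a child of the bottommost node of $C_v$. But both of these nodes were labelled $a$, so $C_v$ followed by $C_u$ forms a single $a$-chain of length $2\ell$, contradicting the $a$-maximality of $C_u$ (equivalently, of $C_v$).

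I do not expect a genuine obstacle here: the argument is essentially forced once the two cases are separated. The only point that must not be missed is that the fresh letters produced by chain compression are guaranteed to be disjoint from $\letters_1$, which is exactly what makes the case split both exhaustive and disjoint, and which lets the maximality argument in each case reach a clean contradiction.
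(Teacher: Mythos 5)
Your proof is correct and follows essentially the same two-case structure as the paper's own argument: case~1 (old letter) contradicts the fact that any $a'$-maximal chain of length $\geq 2$ would have been compressed, and case~2 (fresh letter $a_\ell$) contradicts the fact that two $a$-maximal chains cannot be adjacent. The only cosmetic difference is that the paper phrases case~2 purely as "two $b$-maximal chains are not adjacent'' without tracking the common length $\ell$, whereas you additionally observe that both chains must have length $\ell$ and hence form an $a$-chain of length $2\ell$; both formulations lead immediately to the same contradiction.
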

\begin{proof}
The proof is straightforward: suppose for the sake of contradiction that there is a node $u$ that is labelled with the unary letter $a$ and $u$'s unique
child $v$ is labelled with $a$, too.
There are two cases:

\medskip
\noindent
{\em Case 1.} The letter $a$ was present in \mytree{} in line~\ref{listing letters}:
But then $a$ was listed in $\letters_1$ in line~\ref{listing letters} and $u$ and $v$
are part of an $a$-maximal chain that was replaced by a single node during \algtreechaincomp$(\letters_1,\mytree)$.

\medskip
\noindent
{\em Case 2.} The letter  $a$ was introduced during \algtreechaincomp$(\letters_1,\mytree)$:
Assume that $a$ represents $b^\ell$.
Hence $u$ and $v$ both replaced $b$-maximal chains.  But this is not possible
since the definition of a $b$-maximal chain implies that two $b$-maximal chains are not adjacent.
\end{proof}

\subsection{\Paircompression}

The operation of  \paircompression{} is implemented similarly as chain compression.
As already noticed, since 2-chains can overlap, compressing all 2-chains at the same time is not possible.
Still, we can find a subset of non-overlapping chain patterns of length 2
 in \mytree{} such that a (roughly) constant fraction
of unary letters in \mytree{} is covered by occurrences of these chain patterns.
This subset is defined by a \emph{partition} of the letters from $\letters_1$ occurring in \mytree{} into subsets $\lettersup$ and $\lettersdown$.
Then we replace all $2$-chains, whose first (respectively, second) node is labelled with a letter from  $\lettersup$ (respectively, $\lettersdown$).
Our first task is to show that indeed such a partition exists and that it can be found in time \size{}.

\begin{lemma}
\label{lem:finding partition}
Assume that (i) \mytree{} does not contain an occurrence of a chain pattern $aa$ for some $a \in F_1$ and (ii)
that the symbols in $\mytree$  form an interval of numbers.
Then, in time \size{} one can find
a partition $\letters_1 = \lettersup \uplus \lettersdown$  such that
the number of occurrences of chain patterns from $\lettersup\lettersdown$ in \mytree{} is at least $(n_1 - c + 2)/4$,
where $n_1$ is the number of nodes in \mytree{} with a unary label
and $c$ is the number of maximal chains in \mytree.
In the same running time we can provide for each $ab \in \lettersup\lettersdown$ occurring in \mytree{}
a lists of pointers to all occurrences of $ab$ in \mytree.
\end{lemma}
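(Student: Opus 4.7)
The plan is to interpret the task as a directed MAX-DICUT on the multigraph of $2$-chains in $\mytree$ and to solve it by the standard random-partition argument together with a conditional-expectations derandomisation, carefully implemented. First I would count the $2$-chains: a maximal chain consisting of $k$ unary nodes contributes exactly $k-1$ adjacent $2$-chain occurrences, so the total number of $2$-chain occurrences in $\mytree$ equals $N = n_1 - c$, and by assumption (i) each of them has the form $ab$ with $a \neq b$. A uniformly random partition $\letters_1 = \lettersup \uplus \lettersdown$ preserves every such occurrence independently with probability $1/4$, so the expected number of preserved ones is $(n_1-c)/4$. The conditional-expectations argument below then yields a deterministic partition reaching at least this value; the additive ``$+2$'' in the claim is absorbed by a combination of rounding slack and a short case analysis of the very first decision.

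To carry out the derandomisation I would fix an arbitrary order $a_1, \ldots, a_k$ on $\letters_1$ and, at step $i$, place $a_i$ on the side that maximises the conditional expectation of preserved $2$-chains under a uniformly random completion of the remaining decisions. For every letter $a$ I would maintain three counters: $U(a)$, the sum of $N_{ab}$ over already-decided $b \in \lettersdown$ (the direct gain if $a$ now joins $\lettersup$); $D(a)$, the analogous sum over already-decided $b \in \lettersup$ (the direct gain for $\lettersdown$); and $T(a)$, the aggregated weight of $2$-chains that involve $a$ and a still-undecided letter. The decision for $a_i$ then reduces to a $\Ocomp(1)$ comparison of $U(a_i) + \tfrac12 T_{\mathrm{up}}(a_i)$ against $D(a_i) + \tfrac12 T_{\mathrm{down}}(a_i)$, where $T_{\mathrm{up}}, T_{\mathrm{down}}$ split $T(a_i)$ according to the role $a_i$ plays in each pending $2$-chain type.

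The main obstacle is running the whole scheme in total time $\size$ while also producing the required pointer lists. I would first traverse $\mytree$ and emit, for every $2$-chain occurrence, a record $(a, b, p)$ where $p$ points at its upper node. Since the letters in $\mytree$ form an interval (assumption (ii)), the generalised \algradix{} lexicographically sorts these at most $|\mytree|$ records in time $\size$, which groups them by the directed letter pair $(a,b)$ and simultaneously supplies both the multiplicities $N_{ab}$ and ready-made pointer lists. The counters $U, D, T$ can then be initialised from these aggregates in $\size$ time. Whenever a letter $a_i$ is assigned to a side, I walk the directed types incident to $a_i$ and update $U, D, T$ for its neighbours; since each directed type $ab$ causes $\Ocomp(1)$ work upon each of its two endpoints being decided, the total update work is $\size$. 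Once $\lettersup, \lettersdown$ are fixed, the pointer lists for every $ab \in \lettersup\lettersdown$ occurring in $\mytree$ are read off directly from the sorted records.
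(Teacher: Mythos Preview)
Your approach is essentially the paper's: reduce to a MAX-DICUT instance on the multigraph of $2$-chains, observe that a random bipartition covers one quarter of the $n_1-c$ occurrences in expectation, derandomise greedily, and use \algradix{} on the $(a,b,p)$ records (exploiting assumption~(ii)) to build adjacency data and pointer lists in $\size$ time. The one noteworthy difference is that the paper derandomises the \emph{undirected} cut first (with the simpler counters $\mycount{up}[a]$, $\mycount{down}[a]$ that only track edges to already-decided letters) and then picks the better of the two orientations, whereas you derandomise the directed cut in one pass, which forces you to carry the extra ``pending'' counters $T_{\mathrm{up}},T_{\mathrm{down}}$ for undecided neighbours; both variants are standard and give the same $(n_1-c)/4$ guarantee with the same running time. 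One small caution: your remark that the extra ``$+2$'' in the stated bound is ``absorbed by rounding slack and a short case analysis of the very first decision'' is not an argument---rounding alone fails when $n_1-c\equiv 0\pmod 4$, and the first decision covers nothing---so if you want the exact constant in the lemma you should spell this out (the paper's own proof of Claim~\ref{clm:finding partition} in fact only derives $(n_1-c)/4$ as well).
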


\begin{proof}
For a choice of $\lettersup$ and $ \lettersdown$ we say that occurrences of $ab \in \lettersup\lettersdown$
are \emph{covered} by the partition $\letters_1 = \lettersup \uplus \lettersdown$.
We extend this notion also to words: a partition covers also occurrences of a chain pattern $ab$ in a word (or set of words).

The following claim is a slighter stronger version of~\cite[Lemma~4]{grammar},
the proof is essentially the same, still, for completeness, we provide it below:

\begin{clm}[{\cite[Lemma~4]{grammar}}] \label{clm:finding partition}
\em For words $w_1$, $w_2$, \ldots, $w_c$ that do not contain a factor $aa$ for any symbol $a$ and whose alphabet  can be identified
with an interval of numbers of size $m$, one can
in time $\Ocomp(\sum_{i=1}^c |w_i| + m)$ partition the letters occurring in $w_1$, $w_2$, \ldots, $w_c$ into sets $\lettersup$ and $\lettersdown$ such that
the number of occurrences of chain patterns from $\lettersup\lettersdown$ in $w_1$, $w_2$, \ldots, $w_c$ is at least $(\sum_{i=1}^c (|w_i|-1))/4$.
In the same running time we can provide for each $ab \in \lettersup\lettersdown$ occurring in $w_1$, $w_2$, \ldots, $w_c$
a lists of pointers to all occurrences of $ab$ in $w_1$, $w_2$, \ldots, $w_c$.
\end{clm}
It is easy to derive the statement of the lemma from this claim:
Consider all maximal chains in \mytree, and let us treat the corresponding chain patterns as strings $w_1, w_2, \ldots, w_c$.
The sum of their lengths is $n_1 \leq |\mytree|$.
By the assumption from the lemma no two consecutive letters in strings $w_1$, $w_2$, \ldots, $w_c$ are identical.
Moreover, the alphabet of $w_1$, $w_2$, \ldots, $w_c$ is within an interval of size $\Ocomp(|T|)$.
By Claim~\ref{clm:finding partition} one can compute in time $\Ocomp(\sum_{i=1}^c |w_i| + |\mytree|) \leq \size$
a partition $\lettersup \uplus \lettersdown$ of $\letters_1$ such that $\frac{\sum_{i=1}^c (|w_i|-1)}{4}$
many 2-chains from $w_1$, $w_2$, \ldots, $w_c$ are covered by this partition,
and hence the same applies to \mytree.
Moreover, by Claim~\ref{clm:finding partition} one can also compute in time 
$\Ocomp(\sum_{i=1}^c |w_i|) \leq \size{}$ for every $ab \in  \lettersup\lettersdown$ occurring in $w_1$, $w_2$, \ldots, $w_c$
a lists of pointers to all occurrences of $ab$ in $w_1$, $w_2$, \ldots, $w_c$. It is straightforward to compute from this list
a lists of pointers to all occurrences of $ab$ in \mytree.

Let us now provide a proof of Claim~\ref{clm:finding partition}:

\medskip
\noindent
{\em Proof of Claim~\ref{clm:finding partition}.}
Observe that finding a partition reduces to the (well-studied and well-known) problem of finding a cut in a directed and weighted graph:
For the reduction, for each letter $a$ we create a node $a$ in a graph and make the weight of the edge $(a, b)$
the number of occurrences of $ab$ in $w_1$, $w_2$, \ldots, $w_c$. 
A \emph{directed cut} in this graph is a partition $V_1 \uplus V_2$ of the vertices, and the weight of this cut is the sum of all weights of edges
in $V_1 \times V_2$.
It is easy to see that a directed cut of weight $k$ corresponds to a partition of the letters covering exactly $k$ occurrences of chain patterns 
(and vice-versa).
The rest of the the proof gives the standard construction~\cite[Section~6.3]{mitzenmacher2005probability}
in the terminology used in the paper (the running time analysis is not covered in standard sources).

The existence of a partition covering at least one fourth of the occurrences
can be shown by a~simple probabilistic argument:
Divide $\letters_1$ into $\lettersup$ and $\lettersdown$ randomly, where each letter
goes to each of the parts with probability $1/2$. 
Fix an occurrence of $ab$, then $a \in \lettersup$ and $b \in \lettersdown$ with probability $1/4$.
There are $\sum_{i=1}^c (|w_i|-1)$ such 2-chains in $w_1$, $w_2$, \ldots, $w_c$, so the expected number of occurrences of patterns from 
$\lettersup\lettersdown$ in $w_1$, $w_2$, \ldots, $w_c$ is $(\sum_{i=1}^c (|w_i|-1))/4$. Hence, there exists a partition that covers at least $(\sum_{i=1}^c (|w_i|-1))/4$
many occurrences of 2-chains.
Observe, that the expected number of occurrences of patterns from $\lettersup \lettersdown \cup \lettersdown \lettersup$
is $(\sum_{i=1}^c (|w_i|-1))/2$.

The deterministic construction of  a partition covering at least $(\sum_{i=1}^c (|w_i|-1))/4$ occurrences
follows by a simple derandomisation, using the conditional expectation approach.
It is easier to first find a~partition $\lettersup \uplus \lettersdown$ such that at least $(\sum_{i=1}^c (|w_i|-1))/2$ many occurrences of 
2-chains  in $w_1$, $w_2$, \ldots, $w_c$ are covered
by $\lettersup\lettersdown \cup \lettersdown\lettersup$. We then choose $\lettersup\lettersdown$ or $\lettersdown\lettersup$,
depending on which of them covers more occurrences.

Suppose that we have already assigned some letters to $\lettersup$ and $\lettersdown$
and we have to decide where the next letter $a$ is assigned to.
If it is assigned to $\lettersup$, then all occurrences of patterns from $a\lettersup \cup \lettersup a$ are not going
to be covered, while occurrences of patterns from $a \lettersdown \cup \lettersdown a$ are. 
A similar observation holds if $a$ is assigned to $\lettersdown$.
The algorithm \alggreedypairs{} makes a greedy choice, maximising the number of covered 2-chains in each step.
As there are only two options, the choice covers at least half of all occurrences of 2-chains that contain the letter $a$
and a letter from $\lettersup \uplus \lettersdown$. 
Finally, as each occurrence of a pattern $ab$  from $w_1$, $w_2$, \ldots, $w_c$ is considered exactly once
(namely when the second letter of $a$ and $b$ is considered in the main loop),
this procedure guarantees that at least half of all  2-chains in $w_1$, $w_2$, \ldots, $w_c$ are covered.

In order to make the selection efficient,
the algorithm \alggreedypairs{} below keeps  for every letter $a$  counters
$\mycount{up}[a]$ and $\mycount{down}[a]$, 
storing the number of occurrences of patterns from
$a\lettersup \cup \lettersup a$ and $a\lettersdown \cup \lettersdown a$, respectively,  in $w_1$, $w_2$, \ldots, $w_c$.
These counters are updated as soon as a letter is assigned to $\lettersup$ or $\lettersdown$.

\begin{algorithm}[H]
  \caption{\alggreedypairs{} \label{greedy pairs}}
  \begin{algorithmic}[1]
  \State $\letters_1 \gets $ set of letters used in $w_1$, $w_2$, \ldots, $w_c$
	\State $\lettersup \gets \lettersdown \gets \emptyset$ \Comment{organised as a bit vector}
	\For{$a \in \letters_1$}
		\State $\mycount{up}[a] \gets \mycount{down}[a] \gets 0$ \Comment{initialisation}
	\EndFor
	\For{$a \in \letters_1$} 
		\If{$\mycount{down}[a] \geq \mycount{up}[a]$} \Comment{choose the one that guarantees larger cover}
			\State $\textnormal{\textsl{choice}} \gets \textnormal{up}$
				\Else
			\State $\textnormal{\textsl{choice}} \gets \textnormal{down}$
		\EndIf
			\For{$b \in \letters_1$ and all occurrences of $ab$ or $ba$ in $w_1$, $w_2$, \ldots, $w_c$} \State $\mycount{choice}[b] \gets \mycount{choice}[b] + 1$ \label{counter update}
			\EndFor
			\State $\letterschoice \gets \letterschoice \cup \{a\}$
	\EndFor
	\If{\# occurrences of patterns from $\lettersdown\lettersup$ in $w_1$, $w_2$, \ldots, $w_c$ > \# occurrences of patterns from $\lettersup\lettersdown$ in $w_1$, $w_2$, \ldots, $w_c$}
		\State switch $\lettersdown$ and $\lettersup$ \label{actual partition}
	\EndIf
	\State \Return $(\lettersup, \lettersdown)$
  \end{algorithmic}
\end{algorithm}

By the argument given above, when $\letters_1$ is partitioned into $\lettersup$ and $\lettersdown$ by \alggreedypairs,
at least half of all $2$-chains in $w_1$, $w_2$, \ldots, $w_c$ are occurrences of patterns from
$\lettersup\lettersdown \cup \lettersdown \lettersup$.
Then one of the choices $(\lettersup,\lettersdown)$ or $(\lettersdown,\lettersup)$ covers at least one fourth of 
all $2$-chains in $w_1$, $w_2$, \ldots, $w_c$.

It is left to give an efficient variant of \alggreedypairs.
The non-obvious operations are  the updating of $\mycount{choice}[b]$ in line~\ref{counter update}
and the choice of the actual partition in line~\ref{actual partition}.
All other operation clearly take at most time $\Ocomp(\sum_{i=1}^c |w_i|)$.
The latter is simple: since we organise $\lettersup$ and $\lettersdown$ as bit vectors,
we can read each $w_1$, $w_2$, \ldots, $w_c$ from left to right (in any order) and calculate the number of occurrences of patterns from $\lettersup\lettersdown$ as well as 
those from $\lettersdown\lettersup$ in time $\Ocomp(\sum_{i=1}^c |w_i|)$  (when we read a~pattern $ab$ we check in $\Ocomp(1)$ time whether $ab \in \lettersup \lettersdown$ or $ab \in \lettersdown \lettersup$).
Afterwards we choose the partition that covers more $2$-chains in $w_1$, $w_2$, \ldots, $w_c$.

To implement $\mycount{up}$ and $\mycount{down}$,
for each letter $a$ in $w_1$, $w_2$, \ldots, $w_c$ we store a \emph{right list}
$\rightl = \makeset{b}{ab \text{ occurs in } w}$, represented as a list.
Furthermore, the element $b$ on the right list points
to a list of all occurrences of the pattern $ab$ in $w_1$, $w_2$, \ldots, $w_c$.
There is a similar \emph{left list} $\leftl = \makeset{b}{ba \text{ occurs in } w}$.
We comment on how to create the left lists and right lists in linear time later.

Given \rightl[] and \leftl[], performing the update in line~\ref{counter update} is easy:
We go through $\rightl$ (respectively, $\leftl$) and increment $\mycount{up}[b]$ for every occurrence of $ab$ (respectively, $ba$).
Note that in this way each of the lists $\rightl$ ($\leftl$) is read once during \alggreedypairs,
the same applies also to pointers from those lists.
Therefore, all updates of $\mycount{up}$ and $\mycount{down}$ only need time $\Ocomp(\sum_{i=1}^c |w_i|)$,
as the total number of pointers on those lists is \size.

It remains to show how to initially create \rightl{} (\leftl{} is created similarly).
We read $w_1$, $w_2$, \ldots, $w_c$. When reading a  pattern $ab$ we create a record
$(a,b,p)$, where $p$ is a pointer to this occurrence.
We then sort these records lexicographically using \algradix, ignoring the last component.
There are $\sum_{i=1}^c |w_i|$ records and the alphabet is an interval of size $m$,
so \algradix{} needs time $\Ocomp(\sum_{i=1}^c |w_i| + m)$. 
Now, for a fixed letter $a$, the consecutive tuples with the first component $a$
can be turned into \rightl: for $b \in \rightl$ we want to store a list $I$ of pointers to occurrences of $ab$.
On a sorted list of records the entries $(a,b,p)$ for $p \in I$ form an interval of consecutive records.
This shows the first statement from Claim~\ref{clm:finding partition}.

In order to show the second statement from Claim~\ref{clm:finding partition}, i.e.,
in order to get for each $ab \in \lettersup\lettersdown$
the lists of pointers to occurrences of $ab$ in $w_1$, $w_2$, \ldots, $w_c$,
it is enough to read \rightl[] and filter the patterns $ab$ such that $a\in \lettersup$ and $b \in \lettersdown$;
the filtering can be done in $\Ocomp(1)$ per occurrence as $\lettersup$ and $\lettersdown$ are represented as bitvectors.
The total needed time is $\Ocomp(\sum_{i=1}^c |w_i|)$.
This concludes the proof of Claim~\ref{clm:finding partition}
and thus also the proof of Lemma~\ref{lem:finding partition}.
\end{proof}

When for each pattern $ab \in \lettersup\lettersdown$ the list of its occurrences in \mytree{} is provided,
the replacement of these  occurrences is done by going through the list and replacing each of the occurrences, which is done in linear time.
Note that since $\lettersup$ and $\lettersdown$ are disjoint, the considered occurrences cannot overlap and the order of the replacements
is unimportant.

\begin{lemma}
\label{lem: pair comp time}
\algtreepaircomp$(\lettersup,\lettersdown,\mytree)$ can be implemented in \size{} time.
\end{lemma}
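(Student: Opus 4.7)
The plan is to invoke the heavy lifting already done in Lemma~\ref{lem:finding partition}, which (after the chain compression step and thanks to Lemma~\ref{lem:consecutive letters are different}) supplies us in time \size{} with a partition $\letters_1 = \lettersup \uplus \lettersdown$ together with, for every pattern $ab \in \lettersup\lettersdown$ that actually occurs in \mytree, a linked list of pointers to all of its occurrences. Everything else should then be a routine bookkeeping argument.

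First I would verify the crucial non-overlap property: two occurrences of patterns from $\lettersup\lettersdown$ in \mytree{} cannot share a node. Indeed, if the same node $v$ were simultaneously the child of one occurrence and the parent of another, its label would have to lie in both $\lettersdown$ and $\lettersup$, contradicting $\lettersup \cap \lettersdown = \emptyset$. Consequently the replacements are independent of each other and their order is irrelevant, so we can process the provided lists in any order.

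Next I would describe the implementation of a single replacement. An occurrence of $ab \in \lettersup\lettersdown$ is a 2-chain: a unary node $u$ labelled $a$ whose unique child $v$ is labelled $b$. Because \mytree{} is stored in the pointer representation described in the preliminaries, we can in $\Ocomp(1)$ time relabel $u$ with the fresh letter $c$ assigned to the pair $(a,b)$, unlink $v$, and set the unique child of $v$ (which exists since $v$ is unary) as the new child of $u$, updating the parent pointer accordingly. Fresh letters are allocated on the fly as we encounter each new pair $ab$, by consulting a table indexed by the pair; since by Lemma~\ref{lem:letters are consecutive} the current alphabet forms an interval of numbers bounded by $\Ocomp(|\mytree|)$, this table fits into \size{} space and supports $\Ocomp(1)$ lookups.

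Summing over all occurrences, the total work for the replacements is proportional to the total length of the pointer lists, which is bounded by the number of unary nodes in \mytree, hence by \size{}. Adding the \size{} cost of obtaining the lists from Lemma~\ref{lem:finding partition} yields the claimed bound. The only potential subtlety I would flag is making sure that, as the replacements proceed, the pointer to an occurrence of $ab$ is never invalidated by a previously performed replacement; this is exactly what the non-overlap property guarantees, so no re-scanning or reorganisation of the lists is needed.
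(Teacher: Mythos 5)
Your proposal is correct and follows essentially the same route as the paper: it relies on Lemma~\ref{lem:finding partition} to supply the occurrence lists, uses disjointness of $\lettersup$ and $\lettersdown$ to justify non-overlap (and hence order-independence of the replacements), and performs each replacement in $\Ocomp(1)$ time. You merely spell out a few implementation details (the pointer manipulations and the table for allocating fresh letters) that the paper leaves implicit.
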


\subsection{Leaf compression}
Leaf compression is done in a way similar to chain compression and $(\lettersup,\lettersdown)$-compression:
We traverse \mytree. Whenever we reach a node $v$ labelled with a symbol $f \in F_{\geq 1}$, we scan the list of its children.
Assume that this list is $v_1, v_2, \ldots, v_m$.
When no $v_i$ is a leaf, we do nothing.
Otherwise, let $1 \leq i_1 < i_2 <  \cdots < i_\ell \leq m$ be a list of those positions such that $v_{i_k}$ is a leaf,
say labelled with a constant $a_k$, for all $1 \leq k \leq \ell$.
We create a record $(f,i_1,a_1,i_2,a_2,\ldots, i_\ell,a_\ell,p)$,
where $p$ is a pointer to node $v$, and continue with the traversing of \mytree.
Observe that the total number of elements in the created tuples is at most $2|\mytree|$.
Furthermore each position index is at most $r \leq |\mytree|$ and by Lemma~\ref{lem:letters are consecutive}
also each letter is a number from an interval of size at most $|\mytree|$.
Hence \algradix{} sorts those tuples (ignoring the pointer coordinate) in time \size{} 
(we use the \algradix{} version for lists of varying length).
After the sorting the tuples corresponding to nodes with the same label and the same constant-labelled children (at the same positions)
are consecutive on the returned list, so we can easily perform the replacement.
Given a tuple $(f,i_1,a_1,i_2,a_2,\ldots, i_\ell,a_\ell,p)$ we use the last component (i.e.\ pointer) in the created records to localize the node, replace the label $f$ with the fresh label $f'$
and remove the children at positions $i_1, i_2, \ldots, i_\ell$
(note that in the meantime some other children might become leaves, we do not remove them, though).
Clearly all of this takes time \size.

\begin{lemma}
\label{lem: leaf comp time}
\algtreechildcomp$(\letters_{\geq 1}, \letters_0, \mytree)$ can be implemented in \size{} time.
\end{lemma}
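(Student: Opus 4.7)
The plan is to formalize the informal implementation given just above the lemma statement and to carefully bound the time of each stage. The algorithm has three stages: (i) a traversal of $\mytree$ that produces one record per node having at least one constant-labelled child, (ii) a sorting step that groups records corresponding to the same $(f,i_1,a_1,\ldots,i_\ell,a_\ell)$-pattern, and (iii) a replacement step that introduces one fresh letter per equivalence class of records and performs the actual rewriting. I expect the only non-obvious point to be the time bound for the sort, where I would need the variable-length \algradix{} generalisation mentioned in the introduction of the paper.

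First I would describe a single DFS over $\mytree$ that, at each node $v$ labelled by $f\in F_{\geq 1}$, scans its child list $v_1,\ldots,v_m$, collects the positions $i_1<\cdots<i_\ell$ whose child is a leaf together with the corresponding constants $a_1,\ldots,a_\ell\in \letters_0$, and outputs the record $(f,i_1,a_1,\ldots,i_\ell,a_\ell,p_v)$ with $p_v$ a pointer to $v$. The cost of processing $v$ is $\Ocomp(\mathsf{rank}(f)+1)$ and the sum of ranks over all nodes is $\Ocomp(|\mytree|)$, so this stage runs in \size{} time and produces at most $|\mytree|$ records whose total length (ignoring the pointer) is bounded by $2|\mytree|$, because each leaf of $\mytree$ contributes at most two entries $(i_k,a_k)$ to exactly one record.

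Next I would sort the records lexicographically, ignoring the pointer coordinate. By Lemma~\ref{lem:letters are consecutive} the current letters and (trivially) the position indices $i_k\leq r\leq|\mytree|$ all belong to an interval of size $\Ocomp(|\mytree|)$, so the variable-length version of \algradix{} described in the computational-model discussion sorts all records in time $\Ocomp(|\mytree| + \sum_j \ell_j) = \size$. After sorting, records sharing the same prefix $(f,i_1,a_1,\ldots,i_\ell,a_\ell)$ form a contiguous block, so a single left-to-right scan suffices to allocate one fresh letter $f'$ per block (with the production described in~\eqref{production-f'}) and, for every record in the block, to follow its pointer $p_v$, overwrite the label of $v$ with $f'$, and splice out the children at positions $i_1,\ldots,i_\ell$ from its child list; each of these node-local operations takes time $\Ocomp(\mathsf{rank}(f))$, which again sums to $\Ocomp(|\mytree|)$.

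The only delicate point is that while we are rewriting, children of a node $v$ not originally leaves must not be treated as leaves even if a sibling rewrite would make them so; this is handled automatically by the algorithm because the set of children removed at $v$ is fixed at the time the record for $v$ was created, and no record ever touches a node other than the one pointed to by its $p$-field. Finally, a post-processing call to the procedure of Lemma~\ref{lem:letters are consecutive} re-indexes the letters into a contiguous interval in \size{} time, preserving the invariant needed for later compression phases. Summing the four stages yields the claimed \size{} bound.
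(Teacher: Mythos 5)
Your proof is correct and follows essentially the same approach as the paper: traverse $\mytree$ to build one record per node with constant-labelled children, sort them with the variable-length \algradix{}, then rewrite block by block via the stored pointers, noting that children becoming leaves during the rewrite are intentionally left alone. The only cosmetic difference is that you explicitly mention the renaming from Lemma~\ref{lem:letters are consecutive} as a final stage, whereas the paper treats that as a separate, uniformly-applied post-processing step; both accountings are consistent.
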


\subsection{Size and running time}
It remains to estimate the total running time of our algorithm \algmain, summed over all phases.
As each subprocedure in a phase has running time \size{} and there are constant number of them in a phase,
it is enough to show that $|T|$ is reduced by a constant factor per phase (then the sum of the running
times over all phases is a geometric sum).

\begin{lemma}
\label{lem:number of phases}
In each phase, $|\mytree|$ is reduced by a constant factor.
\end{lemma}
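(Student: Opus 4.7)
The plan is to track, separately, the number of nodes removed by each of the three compressions making up the phase and to show that their combined contribution is at least a quarter of $|\mytree|$. Write $|\mytree|=n_0+n_1+n_{\geq 2}$, where $n_0$, $n_1$, $n_{\geq 2}$ are the numbers of constants, unary nodes, and nodes of rank at least $2$ in \mytree{} at the start of the phase. Two purely tree-structural facts will support the estimates: the standard inequality $n_{\geq 2}\leq n_0-1$, and the bound $c\leq n_0+n_{\geq 2}$ on the number $c$ of maximal chains in \mytree, obtained by injecting each maximal chain into the set of non-unary nodes via the (non-unary) child of its lowest node.

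I would then quantify the savings one step at a time. Because chain and pair compression only rewrite unary letters, the set of constants is unchanged up to the moment of leaf compression; since $|\mytree|>1$ forces the root to be non-constant, every constant has a parent, and so leaf compression removes exactly $n_0$ nodes. For the unary contribution, split the $n_1-c$ parent/child pairs lying inside maximal chains into $p_1$ same-labelled $2$-chains and $n_1-c-p_1$ differently-labelled ones. Chain compression compresses precisely the same-labelled pairs, removing $p_1$ unary nodes and leaving $n_1'=n_1-p_1$ unary nodes arranged in $c$ chains with no two adjacent identical labels, by Lemma~\ref{lem:consecutive letters are different}. Lemma~\ref{lem:finding partition} then supplies a partition under which pair compression eliminates at least $(n_1'-c+2)/4\geq (n_1-p_1-c)/4$ further unary nodes, so the total number of unary nodes removed in the phase is at least
\[
p_1+\frac{n_1-p_1-c}{4} \;=\; \frac{3p_1+n_1-c}{4} \;\geq\; \frac{n_1-c}{4}.
\]

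Combining with leaf compression, the phase removes at least $n_0+(n_1-c)/4$ nodes in total. I close with a short case distinction. If $n_1\leq n_0+n_{\geq 2}$, then $|\mytree|\leq 2(n_0+n_{\geq 2})\leq 4n_0$, so $n_0\geq|\mytree|/4$ already. Otherwise, substituting $c\leq n_0+n_{\geq 2}$ and then $n_{\geq 2}\leq n_0$,
\[
n_0+\frac{n_1-c}{4} \;\geq\; \frac{3n_0+n_1-n_{\geq 2}}{4} \;\geq\; \frac{2n_0+n_1}{4} \;\geq\; \frac{|\mytree|}{4},
\]
which yields a shrinkage of $|\mytree|$ by a factor of $3/4$ per phase. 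The step I expect to be most delicate is the combined accounting for chain and pair compression: the key point is that whether a parent/child unary pair carries equal or different labels, chain and pair compression jointly save on average at least a quarter of a node per such pair, so that the bound can be phrased purely in terms of $n_1$ and $c$ and the intermediate count $n_1'$ drops out of the final estimate.
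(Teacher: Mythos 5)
Your proof is correct and follows the same overall strategy as the paper's: decompose $|\mytree|$ by rank into $n_0+n_1+n_{\geq 2}$, account for the $n_0$ nodes absorbed by leaf compression, invoke Lemma~\ref{lem:finding partition} for the pair-compression savings in terms of $n_1'$ and $c$, and close with the structural inequality $n_{\geq 2}\le n_0-1$ plus a bound on the number $c$ of maximal chains. There is one point worth highlighting. Your bound $c\le n_0+n_{\geq 2}$, obtained by injecting each maximal chain into the (necessarily non-unary) child of its lowest node, is cleaner than the paper's intermediate inequality $c\le n_{\geq 2}+n_0/2+1/2$ and, unlike the paper's, is actually correct as stated: the paper's version fails, e.g., for $f(a(e),b(e),d(e),g(e))$ with $\rank(f)=4$, where $c=4$ but $n_{\geq 2}+n_0/2+1/2=3.5$; the error stems from implicitly treating the ``node above a chain'' assignment as injective into $F_{\geq 2}$, which it is not when a high-rank node has several unary children. (The slack in the paper's final algebra still lets the $3/4$ conclusion go through with the weaker bound, so the lemma itself is unaffected.) Beyond that, your explicit tracking of the count $p_1$ of equal-labelled unary pairs and the observation that $p_1$ cancels out of the lower bound $(3p_1+n_1-c)/4\ge(n_1-c)/4$ is a tidy way to present the same accounting the paper does by simply using $n_1'\le n_1$; your case split is harmless but, as you can check by rerunning your case-2 chain without the hypothesis $n_1>n_0+n_{\geq 2}$, not actually needed. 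Both arguments land on the same constant $3/4$.
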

\begin{proof}
For $i \geq 0$ let $n_i$, $n_i'$, $n_i''$ and $n_i'''$ be the number of nodes labelled with a letter of rank $i$ 
in $\mytree$ at the beginning of the phase, after  chain compression, \paircompression, and leaf compression, respectively.
Let $n_{\geq 2} = \sum_{i \geq 2} n_i$ and define $n'_{\geq 2}$, $n''_{\geq 2}$, and $n'''_{\geq 2}$ similarly.
We have
\begin{equation}
\label{eq: leaves and inner}
n_0 \geq n_{\geq 2} + 1 \enspace .
\end{equation}
To see this, note that there are $n_0 + n_1 + n_{\geq 2} - 1$ nodes that are children (`$-1$' is for the root). On the other hand,
a node of arity $i$ is a parent node for $i$ children. So the number of children is at least $2 n_{\geq 2} + n_1$.
Comparing those two values yields~\eqref{eq: leaves and inner}.

We next show that
$$
n_0''' + n_1''' + n_{\geq 2}''' \leq \frac{3}{4}(n_0 + n_1 + n_{\geq 2}) \enspace ,
$$
which shows the claim of the lemma.
Let $c$ denote the number of maximal chains in \mytree{} at the beginning of the phase,
this number does not change during chain compression and \paircompression.
Observe that
\begin{equation}
	\label{eq: bounding number of chains}
	c \leq n_{\geq 2} + n_0 \enspace.
\end{equation}
Indeed, consider a maximal chain. Then the node below the chain has a label from $F_{\geq 2} \cup F_0$.
Summing this up over all chains, we get~\eqref{eq: bounding number of chains}.

Clearly after chain compression we have $n_0' = n_0$, $n_1' \leq n_1$ and $n_{\geq 2}' = n_{\geq 2}$.
Furthermore, the number of maximal chains does not change.
During  \paircompression, by Lemma~\ref{lem:finding partition},
we choose a partition such that at least $\frac{n_1' - c + 2}{4}$ many $2$-chains are compressed
(note that the assumption of Lemma~\ref{lem:finding partition} that no parent node and its child are labelled with the same
unary letter is satisfied by Lemma~\ref{lem:consecutive letters are different}),
so the size of the tree is reduced by at least $\frac{n_1' - c + 2}{4}$.
Hence, the size of the tree after  \paircompression{} is at most
\begin{align}
\notag
n_0'' + n_1'' + n_{\geq 2}''
	&\leq \notag
n_0' + n_1' + n_{\geq 2}' - \frac{n_1' - c + 2}{4}\\
	&= \notag
n_0' + \frac{3n_1'}{4} + n_{\geq 2}' + \frac{c}{4} - \frac{1}{2} \\
	&\leq 	\label{eq: after pair compression}
n_0 + \frac{3n_1}{4} + n_{\geq 2} + \frac{c}{4} - \frac{1}{2} \enspace.
\end{align}
Lastly, during leaf compression the size is reduced by $n_0'' = n_0$.
Hence the size of \mytree{} after all three compression steps is
\begin{align*}
n_0''' + n_1''' + n_{\geq 2}'''
	&= n_0'' + n_1'' + n_{\geq 2}'' - n_0 &\text{leaf compression}\\
	&\leq
n_0 + \frac{3n_1}{4} + n_{\geq 2} + \frac{c}{4} - \frac{1}{2} - n_0  &\text{from~\eqref{eq: after pair compression}}\\
	&=
\frac{3n_1}{4} + n_{\geq 2} + \frac{c}{4} -\frac{1}{2} &\text{simplification}\\
	&\leq 
\frac{3n_1}{4} + n_{\geq 2} + \underbrace{\frac{n_{\geq 2}}{4} + \frac{n_0}{4}}_{\geq c/4} - \frac{1}{2} &\text{from~\eqref{eq: bounding number of chains}}\\
	&<
\frac{n_0}{4} + \frac{3n_1}{4} + \frac{5n_{\geq 2}}{4}&\text{simplification}\\
	&<
\frac{3}{4}\Big(n_0 + n_1 + n_{\geq 2}\Big) , &\text{from~\eqref{eq: leaves and inner}} 
\end{align*}
as claimed.
\end{proof}

\begin{theorem}
\label{thm: time}
\algmain{} runs in linear time.
\end{theorem}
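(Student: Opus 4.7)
The plan is essentially to assemble the pieces that have already been proved: a per-phase linear running time bound together with the geometric shrinkage of $|\mytree|$, so the total cost telescopes to $\Ocomp(n)$.

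First I would argue that a single phase runs in time $\Ocomp(|\mytree|)$, where $\mytree$ denotes the current tree at the start of the phase. Listing the unary letters (lines~\ref{listing letters} and~\ref{listing pairs}) clearly takes $\Ocomp(|\mytree|)$ time. The call to \algtreechaincomp{} is $\Ocomp(|\mytree|)$ by Lemma~\ref{lem: chain comp time}; Lemma~\ref{lem:consecutive letters are different} guarantees that after chain compression no parent and child carry the same unary label, so the hypothesis of Lemma~\ref{lem:finding partition} is met and the partition $\letters_1 = \lettersup \uplus \lettersdown$ (together with pointer lists of all occurrences of each $ab \in \lettersup\lettersdown$) is obtained in $\Ocomp(|\mytree|)$. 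The call to \algtreepaircomp{} is $\Ocomp(|\mytree|)$ by Lemma~\ref{lem: pair comp time}, and finally \algtreechildcomp{} is $\Ocomp(|\mytree|)$ by Lemma~\ref{lem: leaf comp time}. Between these calls we invoke the relabelling of Lemma~\ref{lem:letters are consecutive} so that the symbols occurring in \mytree{} always form an interval — this costs $\Ocomp(|\mytree'|)$ where $\mytree'$ is the tree before the corresponding compression, which is still $\Ocomp(|\mytree|)$ since $|\mytree'|$ is only smaller than the size at the start of the phase. The one-time preprocessing (Lemma~\ref{lem:letters are consecutive}) that puts the initial labels into an interval costs $\Ocomp(n)$.

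Next I would use Lemma~\ref{lem:number of phases} to sum these costs across phases. Let $n_i$ denote the size of \mytree{} at the beginning of the $i$-th phase, with $n_0 = n$. By Lemma~\ref{lem:number of phases} there is a constant $0 < \alpha < 1$ (explicitly $\alpha \leq 3/4$ from the proof) with $n_{i+1} \leq \alpha \, n_i$, hence $n_i \leq \alpha^i n$. The $i$-th phase runs in time $\Ocomp(n_i)$ by the previous paragraph, so the total running time is
\begin{equation*}
\Ocomp(n) + \sum_{i \geq 0} \Ocomp(n_i) \;\leq\; \Ocomp(n) + \Ocomp\!\left(\sum_{i \geq 0} \alpha^i n\right) \;=\; \Ocomp\!\left(\frac{n}{1-\alpha}\right) \;=\; \Ocomp(n).
\end{equation*}
The loop terminates when $|\mytree| \leq 1$, which happens after $\Ocomp(\log n)$ phases, and only $\Ocomp(n)$ space and time have been spent overall.

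The only mild subtlety — and what I would treat as the main thing to double-check — is that the $\Ocomp(|\mytree|)$ bounds in Lemmas~\ref{lem: chain comp time}, \ref{lem:finding partition}, \ref{lem: pair comp time}, and~\ref{lem: leaf comp time} depend on the labels of \mytree{} forming an interval of integers of size $\Ocomp(|\mytree|)$ (so that \algradix{} is linear). This is precisely what the post-processing step of Lemma~\ref{lem:letters are consecutive} provides after every compression, and what the initial preprocessing provides for the input tree. Therefore the linear-time assumption for \algradix{} is valid at every invocation, and the geometric-sum argument above yields the claimed $\Ocomp(n)$ total running time.
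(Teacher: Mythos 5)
Your argument is correct and follows the same route as the paper's proof: per-phase cost of $\Ocomp(|\mytree|)$ via Lemmata~\ref{lem:letters are consecutive}, \ref{lem: chain comp time}, \ref{lem:finding partition}, \ref{lem: pair comp time}, and \ref{lem: leaf comp time}, combined with the geometric decay from Lemma~\ref{lem:number of phases}, summed to $\Ocomp(n)$. You have merely spelled out the geometric series and the interval-of-labels invariant that the paper's one-paragraph proof leaves implicit.
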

\begin{proof}
By Lemma~\ref{lem:letters are consecutive} there is a linear preprocessing.
By Lemmata~\ref{lem:letters are consecutive},~\ref{lem: chain comp time},~\ref{lem: pair comp time}, and~\ref{lem: leaf comp time},
each phase takes \size{} time and by Lemma~\ref{lem:number of phases},
$|\mytree|$ drops by a constant factor in each phase.
As the initial size of $\mytree$ is $n$, the total running time is $\Ocomp(n)$.
\end{proof}

\section{Size of the grammar: recompression}
\label{sec: size}

To bound the cost of representing the letters introduced during the construction of the SLCF grammar,
we start with a smallest SLCF grammar $\grammar_{\textnormal{opt}}$ generating the input tree  \mytree{} (note that $\grammar_{\textnormal{opt}}$
is not necessarily unique)
and show that we can transform it into an SLCF grammar \grammar{} (also generating \mytree) of a special normal form, called handle grammar.
This form is described in detail in Section~\ref{subsec: normal form}.
The grammar \grammar{} is of size
$\Ocomp(r|\grammar_{\textnormal{opt}}|)$, where $r$ is the maximal rank of symbols in $F$ (the set of letters occurring in $\grammar_{\textnormal{opt}}$).
The transformation is based on known results on normal forms for SLCF grammars~\cite{LohreyMS12},
see Section~\ref{subsec: normal form}.

To bound the size of \grammar, 
we assign \emph{credits} to \grammar:
each occurrence of a letter in a right-hand side of \grammar{} has two units of credit.
If such a letter is removed from \grammar{} for any reason,
its credit is \emph{released}
and if a new letter is inserted into
some right-hand side of a rule,
then we \emph{issue} its credit.

During the run of \algmain{} we modify \grammar{}, preserving its special handle form, so that it generates \mytree{}
(i.e., the current tree kept by \algmain) after each of the compression steps of \algmain.
In essence, if a compression is performed on \mytree{}
then we also apply it on \grammar{} and modify \grammar{}
so that it generates the tree \mytree{} after the compression step.
Then the cost of representing the letters introduced by \algmain{}
is paid by credits released during the compression of letters in \algmain.
Therefore, instead of computing the total representation cost of the new letters, it suffices
to calculate the total amount of issued credit, which is much easier than calculating
the actual representation cost.
Note that this is entirely a mental experiment for the purpose of the analysis,
as \grammar{} is not stored or even known by \algmain.
We just perform some changes on it depending on the actions of \algmain.

The analysis outlined above is not enough to bound the representation
cost for chain compression, we need specialised tools for that.
They are described in Section~\ref{subsec: chain cost}.

In this section we show a slightly weaker bound, the full proof of the bound from Theorem~\ref{thm: main}
is presented in Section~\ref{sec: improved}.

\subsection{Normal form}
\label{subsec: normal form}
As explained above, in our mental experiment we modify the grammar \grammar{}
and perform the compression operations on it.
To make the analysis simpler, we want to have a special form
in which the compression operation will not interact too much
between different parts of the grammar.
This idea is formalised using \emph{handles}:
We say that a pattern $t(y)$ is a \emph{handle} if it is of the form
$$
f(w_1(\gamma_1),w_2(\gamma_2),\ldots,w_{i-1}(\gamma_{i-1}),y,w_{i+1}(\gamma_{i+1}),\ldots,w_\ell(\gamma_\ell)) ,
$$
where $\rank(f) = \ell$, every $\gamma_j$ is either a constant symbol or a nonterminal of rank $0$, every $w_j$ is a chain pattern, and $y$ is a parameter,
see Figure~\ref{fig:handle}.
Note that $a(y)$ for a unary letter $a$ is a handle.
Since handles have one parameter only, for handles $h_1, h_2, \ldots, h_\ell$
we write $h_1h_2\cdots h_\ell$ for the tree $h_1(h_2(\ldots (h_\ell(y))))$
and treat it as a string, similarly to chains patterns.

We say that an SLCF grammar $\grammar = (N,F,P,S)$ is a \emph{handle grammar} (or simply ``\grammar{} is handle'') if
the following conditions hold:
\begin{enumerate}[({HG}1)]
	\item \label{lg 1} $N \subseteq \mathbb{N}_0 \cup \mathbb{N}_1$
	\item \label{lg 2} For $A \in N \cap \mathbb{N}_1$ the unique rule for $A$ is of the form
	$$
	A \to u B v C w
		\quad \text{or} \quad
	A \to u B v
		\quad \text{or} \quad
	A \to u  ,
	$$
	where $u$, $v$, and $w$ are (perhaps empty) sequences of handles and $B,C \in N_1$.
	We call $B$ the \emph{first} and $C$ the \emph{second} nonterminal in the rule for $A$, see Figure~\ref{fig:rules}.
	\item \label{lg 3} For $A \in N \cap \mathbb{N}_0$ the rule for $A$ is of the (similar) form
	$$
	A \to u B v C
		\quad \text{or} \quad
	A  \to u B v c
		\quad \text{or} \quad
	A  \to u C
		\quad \text{or} \quad
	A \to u c ,
	$$
	where $u$ and $v$ are (perhaps empty) sequences of handles,
	$c$ is a constant, $B \in N_1$, and $C \in N_0$, see Figure~\ref{fig:rules}.
	Again we speak of the \emph{first} and \emph{second} nonterminal in the rule for $A$.
\end{enumerate}

\begin{figure}
	\centering
	\includegraphics{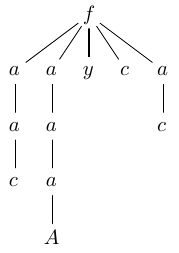}
	\caption{A handle, where $A$ has rank $0$, $c$ is a constant and $a$ is a unary letter.}
	\label{fig:handle}
\end{figure}

Note that the representation of the rules for nonterminals from $\mathbb{N}_0$ is not unique. Take for 
instance the rule $A \to f(B,C)$, which can be written as $A \to h(C)$ for the handle $h(y) = f(B,y)$ 
or as $A \to h'(B)$ for the handle $h' = f(y,C)$.
On the other hand, for nonterminals from $\mathbb{N}_1$  the representation of the rules is unique,
since there is a unique occurrence of the parameter $y$ in the right-hand side.

\begin{figure}
	\centering
	\includegraphics{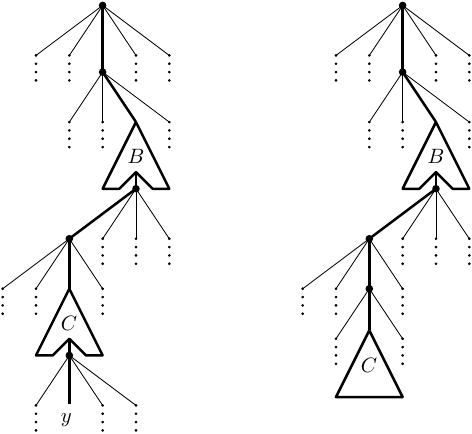}
	\caption{Two possible shapes of right-hand sides in a handle grammar 
	 for a nonterminal of rank $1$ (left) and rank $0$ (right), respectively. The dots symbolise the chains of unary letters
		ended by a nonterminal of rank $0$ or a~constant.}
	\label{fig:rules}
\end{figure}

What is left is to show how to transform an arbitrary SLCF grammar $\grammar'$ into an equivalent handle grammar \grammar.
There is a known construction that transforms an SLCF grammar $\grammar'$ into an equivalent monadic SLCF grammar \grammar~\cite[Theorem~10]{LohreyMS12}
(i.e.\ every nonterminal
of $\grammar'$ has rank $0$ or $1$).
While the original paper~\cite{LohreyMS12} contains only weaker statement,
in fact this construction returns a handle grammar
which has $\Ocomp(|\grammar'|)$ many occurrences of nonterminals of arity 1 in the rules
and $\Ocomp(r |\grammar'|)$ occurrences of nonterminals of arity 0 and letters.
This stronger result is 
repeated in the appendix for completeness.

\begin{lemma}[cf.~{\cite[Theorem~10]{LohreyMS12}}]
\label{lem: monadic grammar}
From a given SLCF grammar $\grammar'$ of size $\grammarsize = |\grammar'|$
one can construct an equivalent handle grammar \grammar{} of size $\Ocomp(r \grammarsize)$ 
with only $\Ocomp(\grammarsize)$ many occurrences of nonterminals of arity 1 in the rules
(and $\Ocomp(r \grammarsize)$ occurrences of nonterminals of arity 0).
\end{lemma}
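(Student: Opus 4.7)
The plan is to transform $\grammar$ in three passes: first make it monadic, then reshape every right-hand side into a sequence of handles, and finally split overlong right-hand sides so that the Chomsky-like constraint on nonterminal occurrences holds.

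For the first pass I would invoke the monadicization construction of \cite[Theorem~10]{LohreyMS12} (reproduced in the appendix) to obtain an equivalent SLCF grammar $\grammar_1$ whose nonterminals all have rank $0$ or $1$. That construction repeatedly picks a nonterminal $A(y_1,\ldots,y_k)$ of rank $k \geq 2$, fixes the root-to-$y_1$ spine of its right-hand side, turns $A$ into a rank-$1$ nonterminal along that spine, and introduces fresh rank-$0$ nonterminals for each of the $k-1$ off-spine subtrees (which may still contain $y_2,\ldots,y_k$ and inherit the corresponding parameters of $A$). A bookkeeping argument shows $|\grammar_1| = \Ocomp(r\grammarsize)$ and, crucially, that the number of rank-$1$ nonterminal occurrences is not multiplied by $r$: every rank-$1$ occurrence in $\grammar_1$ is attached to a unique spine-descent in $\grammar$, whereas rank-$0$ occurrences are created fresh at each of the up to $r-1$ off-spine branches of every node and are the only ones that can proliferate.

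For the second pass I would walk down the distinguished root-to-leaf spine of every right-hand side of $\grammar_1$ (ending at the parameter in the rank-$1$ case, and at a constant or rank-$0$ nonterminal in the rank-$0$ case). At each spine node the off-spine subtrees are ground; whenever such a subtree is not already a chain of unary letters ending in a constant or rank-$0$ nonterminal, I would replace it by a fresh rank-$0$ nonterminal whose sole rule is that subtree. This renders the rule as a concatenation of handles, possibly interrupted by rank-$1$ nonterminal occurrences on the spine, and does not asymptotically increase the total size since each new rank-$0$ nonterminal absorbs at least one node of the preexisting right-hand side.

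The third pass enforces the structural limits of \HGref{2} and \HGref{3}: whenever a right-hand side contains more than two rank-$1$ nonterminal occurrences (respectively, more than the allowed combination in a rank-$0$ rule), I would introduce a fresh nonterminal naming the suffix past the second occurrence and iterate. Each split adds a constant amount of material and leaves all three asymptotic bounds untouched. The main obstacle is the refined accounting in the first pass, separating the $\Ocomp(\grammarsize)$ bound on rank-$1$ occurrences from the $\Ocomp(r\grammarsize)$ bound on rank-$0$ occurrences; this forces one to open up the construction of \cite[Theorem~10]{LohreyMS12} rather than use it as a black box, which is exactly why the authors reproduce it in the appendix.
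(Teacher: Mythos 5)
Your three-pass plan (monadicize, reshape into handles, split) is a genuinely different structure from the paper's, which goes to Chomsky normal form once and then builds ``skeleton trees'' bottom-up for each nonterminal, so that handle form and the two occurrence bounds come out of a single pass. Unfortunately your first pass, as described, does not work, and the accounting you wave at is exactly where the difficulty lies.

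The concrete problem with the first pass: if you fix the root-to-$y_1$ spine of the rule for $A(y_1,\ldots,y_k)$ and the off-spine subtrees ``may still contain $y_2,\ldots,y_k$,'' then the fresh nonterminals naming those subtrees cannot have rank~$0$ --- a rank-$0$ nonterminal has no parameters to inherit. If instead you let them keep their parameters you have not reduced rank at all, and if you recurse on them the bookkeeping changes character entirely. Relatedly, the key quantitative claim (``every rank-$1$ occurrence in $\grammar_1$ is attached to a unique spine-descent in $\grammar$'') is not substantiated; it is precisely this fine-grained bound ($\Ocomp(\grammarsize)$ rank-$1$ occurrences versus $\Ocomp(r\grammarsize)$ rank-$0$ ones and letters) that the lemma asserts, and the paper obtains it by a structural invariant on skeleton trees, namely that the skeleton of a rank-$k$ nonterminal has at most $2r(k-1)+2$ nodes, with at most a constant number of rank-$1$ nonterminals contributed per CNF production and at most $\Ocomp(r)$ rank-$0$ symbols. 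There is no analogue of this invariant in your account; without one the bound could very well be off by a factor of $r$ for rank-$1$ occurrences. Your second pass also requires recursion (the fresh rank-$0$ rules you introduce are not themselves in handle form and must be reshaped again), which you do not address, though the node-conservation argument you hint at could in principle be made to work. In short, the approach has the right shape but leaves the actual content of the lemma --- the separation of the two occurrence bounds --- unproved, and the pass that is supposed to establish it is described incorrectly.
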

The construction and proof of~\cite[Theorem~10]{LohreyMS12} yield the claim,
though the actual statement in~\cite{LohreyMS12} is a bit weaker.
For completeness, the proof of this stronger statement is given in the appendix.

\begin{figure}
	\centering
		\includegraphics{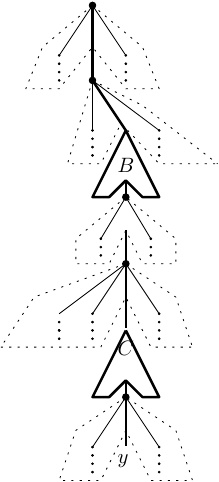}
	\caption{The tree obtained from a rule for nonterminal of rank $1$. Each handle and nonterminal represents a context, each such context is enclosed either by a dotted or fat line, respectively.}
	\label{fig:concatenating}
\end{figure}

When considering  handle grammars it is useful to have some intuition about the trees they derive.
Recall that a \emph{context} is a pattern $t(y)\in \mathcal T(\alphabet,\{y\})$ with a unique occurrence
of the only parameter $y$.
Observe that each nonterminal $A \in \mathbb{N}_1$ derives a unique context $\eval(A)$,
the same applies to a handle $f$ and so we write $\eval(f)$ as well.
Furthermore, we can `concatenate' contexts, so we write them in string notation.
Also, when we attach a tree from $\mathcal{T}(\alphabet)$ to a context, we obtain another tree from $\mathcal{T}(\alphabet)$.
Thus, when we consider a rule $A \to h_1 \cdots h_i B h_{i+1} \cdots h_j C h_{j+1} \cdots h_k$ in a handle grammar (where 
$h_1, \ldots, h_k$ are handles and $A$, $B$, and $C$ are nonterminals of rank 1)
then 
$$\eval(A) = \eval(h_1) \cdots \eval(h_i) \eval(B) \eval(h_{i+1}) \cdots \eval(h_j) \eval(C) \eval(h_{j+1}) \cdots \eval(h_k),$$
i.e., we concatenate the contexts derived by the handles and nonterminals,
see Figure~\ref{fig:concatenating}. 
Similar considerations apply  to other rules of handle grammars as well, also the ones for
nonterminals of rank $0$.

\subsection{Intuition and invariants}
For a given input tree \mytree{} we start (as a mental experiment) with a smallest SLCF grammar generating \mytree{}.
Let $\grammarsize$ be the size of this grammar. We first 
transform it to a handle grammar \grammar{} of size $\Ocomp(\grammarsize r)$ using Lemma~\ref{lem: monadic grammar}.
The number of nonterminals of rank $0$ (resp., $1$) in \grammar{} is bounded by $\Ocomp(\grammarsize r)$ (resp., $\Ocomp(\grammarsize)$).

In the following, by \mytree{} we denote the current tree stored by \algmain.
For analysing the size of the grammar produced by \algmain{} applied to \mytree, we use the accounting method,
see e.g.~\cite[Section 17.2]{CoLeiRivStein09}.
With each occurrence of a letter from $\alphabet$ in \grammar{}'s rules we associate two units of \emph{credit}
(no credit is assigned to occurrences of nonterminals in rules).
During the run of \algmain{} we appropriately modify \grammar,
so that $\eval(\grammar) = \mytree$ (recall that $\mytree$ always denotes the current tree stored by \algmain).
In other words, we perform the compression steps of \algmain{} also on \grammar.
We thereby always maintain the invariant that every occurrence of a letter from $\alphabet$ in \grammar's rules
has two units of credit. In order to do this, we have to {\em issue} (or pay) some new credits during the modifications,
and we have to do a precise bookkeeping on the amount of issued credit.
On the other hand, if we do a compression step in \grammar, then we remove some
occurrences of letters. The credit associated with these occurrences is then {\em released}
and can be used to pay for the representation cost of the new letters introduced by the compression
step. For \paircompression{} and leaf compression, the released credit indeed suffices to pay
the representation cost for the fresh letters, but for chain compression the released credit does
not suffice. Here we need some extra amount that will be estimated separately later on
in Section~\ref{subsec: chain cost}. At the end, we can bound the size of the grammar produced
by \algmain{} by the sum of the initial credit assigned to \grammar,
which is at most $\Ocomp(r \grammarsize)$ by Lemma~\ref{lem: monadic grammar},
plus the total amount of issued credit plus the extra cost estimated in Section~\ref{subsec: chain cost}.

An important difference between our algorithm and the string compression algorithm from~\cite{grammar}, which we generalise,
is that we add new nonterminals to \grammar{} during its modification. To simplify notation, we denote 
with $m$ always the number of nonterminals of the current grammar \grammar, and
we denote its nonterminals with $A_1, \ldots, A_m$. We assume that $i < j$ if
$A_i$ occurs in the right-hand side of $A_j$, and that $A_m$ is the start nonterminal.
With $\alpha_i$ we always denote the current right-hand side of $A_i$. In other words,
the productions of \grammar{} are $A_i \to \alpha_i$ for $1 \leq i \leq m$.

Again note that the modification of \grammar{} is not really carried out by \algmain,
but is only done for the purpose of analysing \algmain.

Suppose a compression step, for simplicity say an $(a,b)$-pair compression,
is applied to \mytree. We should also reflect it in \grammar.
The simplest solution would be to perform the same compression on each of the rules of \grammar,
hoping that in this way all occurrences of $ab$ in $\eval(\grammar)$ are replaced by $c$.
However, this is not always the case. For instance, the 2-chain $ab$ may occur
`between' a nonterminal and a unary letter. This intuition is made precise in Section~\ref{subsec: pair compression}.
To deal with this problem, we modify the grammar, so that the problem disappears.
Similar problems occur also when we want to replace an $a$-maximal chain
or perform leaf compression. The
solutions to those problems are similar and are given in Section~\ref{subsec:block compression}
and Section~\ref{subsec:child compression}, respectively.

To ensure that \grammar{} stays handle and to estimate the amount of issued credit,
we show that the grammar preserves the following invariants, where
$n_0 = \Ocomp(\grammarsize r)$ (respectively, $n_1 = \Ocomp(\grammarsize)$) is the  initial number of occurrences of nonterminals from $\mathbb{N}_0$ (respectively, $\mathbb{N}_1$) in
\grammar{} while $\grammarsizezero$ and $\grammarsizeone$ are those values at some particular moment.
Similarly, $\grammarsizetilde$ is the number of occurrences of nonterminals from $\widetilde{N_0}$.

\begin{enumerate}[({GR}1)]
	\item \label{Gr 1} \grammar{} is handle.
	\item \label{Gr 2} \grammar{} has nonterminals $N_0 \cup N_1 \cup \widetilde{N_0}$, where $\widetilde{N_0} \cup N_0 \subseteq \mathbb{N}_0$,
	$|N_0| \leq n_0$ and $N_1 \subseteq \mathbb{N}_1$, $|N_1| \leq n_1$.
	\item \label{Gr 3} The number $g_0$ of occurrences nonterminals from $N_0$ in \grammar{} never increases  (and is initially $n_0$), and
	the number $g_1$ of occurrences of nonterminals from $N_1$ also never increases (and is initially $n_1$).
	\item \label{Gr 4} The number $\grammarsizetilde$ of occurrences of nonterminals from $\widetilde{N_0}$ in \grammar{}
	is at most $n_1(r-1)$.
	\item \label{Gr 5} The rules for $A_i \in \widetilde{N_0}$ are of the form $A_i \to w A_j$ or $A_i \to w c$,
	where $w$ is a string of unary symbols, $A_j \in N_0 \cup \widetilde{N_0}$, and $c$ is a constant.
\end{enumerate}
Intuitively, $N_0$ and $N_1$ are subsets of the initial nonterminals of rank $0$ and $1$, respectively,
while $\widetilde{N_0}$ are the nonterminals introduced by \algmain, which are all of rank $0$.

Clearly, \GRrefall{} hold for the initial handle grammar \grammar{} obtained by Lemma~\ref{lem: monadic grammar}.

\subsection{($\lettersup,\lettersdown)$-compression}
\label{subsec: pair compression}
We begin with some necessary definitions that help to classify $2$-chains.
For a non-empty tree or context $t$ its \emph{first} letter is the letter that labels the root of $t$.
For a context $t(y)$ which is not a parameter its \emph{last} letter is the label of the node above the one labelled with $y$.
For instance, the last letter of the context $a(b(y))$ is $b$ and the last letter of the context $f(a(c),y)$ is $f$, which is also
the first letter.

A chain pattern $ab$ has a \emph{crossing occurrence} in a nonterminal $A_i$ if one of the following holds:
\begin{enumerate}[({CR}1)]
	\item \label{cr 1} $aA_j$ is a subpattern of $\alpha_i$ and the first letter of $\eval(A_j)$ is $b$
	\item \label{cr 2} $A_j(b)$ is a subpattern of $\alpha_i$ and the last letter of $\eval(A_j)$ is $a$
	\item \label{cr 3} $A_j(A_k)$ is a subpattern of $\alpha_i$, the last letter of $\eval(A_j)$ is $a$
	and the first letter of $\eval(A_k)$ is $b$.
\end{enumerate}
A chain pattern $ab$ is \emph{crossing} if it has a crossing occurrence in any nonterminal and \emph{non-crossing} otherwise.
Unless explicitly written, we use this notion only in case $a \neq b$.

When every chain pattern $ab \in \lettersup\lettersdown$ is noncrossing,
simulating $(\lettersup,\lettersdown)$-com\-pression on \grammar{} is easy:
It is enough to apply $\algtreepaircomp$ (Algorithm~\ref{alg:treepaircomp}) to each 
right-hand side of \grammar{}.
We denote the resulting grammar with  $\algtreepaircomp(\lettersup,\lettersdown,\grammar)$.

In order to distinguish between the nonterminals, grammar, etc.\
before and after the application of \algtreepaircomp{} (or, in general, any procedure) 
we use `primed' symbols, i.e.,  $A_i'$, $\grammar'$, \treeci{}
for the nonterminals, grammar and tree, respectively, after the compression step and `unprimed' symbols 
(i.e., $A_i$, \grammar{}, \mytree{}) for the ones before.

\begin{lemma}
\label{lem:noncrossing compression}
Let $\grammar$ be a handle grammar
and $\grammar' = \algtreepaircomp(\lettersup,\lettersdown,\grammar)$.
Then the following hold:
\begin{itemize}
\item If \grammar{} satisfies \GRrefall{} then $\grammar'$ satisfies~\GRrefall{} as well. 
\item If there is no crossing chain pattern from $\lettersup\lettersdown$ in \grammar, then
$$\eval(\grammar') = \algtreepaircomp(\lettersup,\lettersdown,\eval(\grammar)).$$
\item The grammar $\grammar'$ has the same number of occurrences of nonterminals of each rank as $\grammar$.
\item The credit for new letters in $\grammar'$ and the cost of representing these new letters are paid by the released credit.
\end{itemize}
\end{lemma}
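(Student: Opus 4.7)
The plan is to verify the three bullets in turn, with most of the work concentrated in the second one.

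For the first bullet \GRrefall{}, the key observation is that $\algtreepaircomp(\lettersup,\lettersdown,\cdot)$ replaces each pair $ab \in \lettersup\lettersdown$ (two unary letters) by a single fresh unary letter $c \notin \lettersup \cup \lettersdown$. Hence, within every right-hand side, a chain pattern remains a chain pattern (just shorter), so every handle stays a handle, and the top-level decomposition $u B v C w$ (or its shorter variants) into handle sequences and nonterminals is preserved. This yields \HGref{1}--\HGref{3}, so \GRref{1} holds. Since no nonterminals are created or removed (and no occurrences of nonterminals are introduced or deleted — only letters are rewritten), \GRref{2}--\GRref{4} hold verbatim with the same $N_0$, $N_1$, $\widetilde{N_0}$. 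Finally, for $A_i \in \widetilde{N_0}$ the right-hand side is a string of unary letters followed by $A_j \in N_0 \cup \widetilde{N_0}$ or a constant; replacing some $ab$ by the unary $c$ preserves this shape, so \GRref{5} holds.

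For the second bullet, I would argue by induction on the index $i$ that $\eval(A_i') = \algtreepaircomp(\lettersup,\lettersdown,\eval(A_i))$; taking $i = m$ yields the claim. The base case (when $\alpha_i$ contains no nonterminal) is immediate: performing the replacements on $\alpha_i$ is exactly what $\algtreepaircomp$ does to $\eval(A_i) = \alpha_i$. For the inductive step, any occurrence of a pattern $ab \in \lettersup\lettersdown$ inside $\eval(A_i)$ falls into one of three classes: (a) it lies entirely within the ``skeleton'' contributed by $\alpha_i$, (b) it lies entirely inside some $\eval(A_j)$ substituted for a child occurrence of $A_j$ in $\alpha_i$, or (c) one of its two nodes is contributed by $\alpha_i$ and the other by a substituted $\eval(A_j)$. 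Case (c) is precisely what \CPref{1}--\CPref{3} describe, so by the non-crossing hypothesis it cannot occur. Occurrences of type (a) are replaced when we rewrite $\alpha_i$ into $\alpha_i'$, and occurrences of type (b) are handled by the induction hypothesis applied to $A_j$; combining both, $\eval(A_i')$ equals the tree obtained by replacing every occurrence of every $ab \in \lettersup\lettersdown$ in $\eval(A_i)$ by its fresh letter, as required.

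For the third bullet, fix a pair $(a,b) \in \lettersup \times \lettersdown$ and let $k$ denote the number of occurrences of $ab$ in the right-hand sides of \grammar. If $k = 0$ no fresh letter is introduced, so nothing is owed. Otherwise a single fresh unary letter $c$ with production $c(y)\to a(b(y))$ is introduced; it needs representation cost $2$, and each of the $k$ occurrences of $c$ in \grammar' needs $2$ units of credit to maintain the invariant, a total of $2k+2$. On the other hand, the replacement removes $k$ occurrences of $a$ and $k$ occurrences of $b$ from the rules, releasing $4k$ units of credit. Since $k \ge 1$ we have $4k \ge 2k+2$, so the released credit suffices to pay both the per-occurrence credit for $c$ and the one-time representation cost of $c$; summing over all compressed pairs proves the bullet.

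The only genuinely subtle step is the inductive argument for the second bullet: one must be careful that ``no occurrence of $ab$ straddles a nonterminal boundary in the derived tree'' really is captured by the three crossing conditions \CPref{1}--\CPref{3} and that the induction on $i$ is well-founded (it is, by the ordering $<$ on nonterminals built into the definition of SLCF grammars). The accounting and the structural checks \GRrefall{} are then essentially bookkeeping.
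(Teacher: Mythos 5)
Your proof is correct and follows essentially the same route as the paper's: induction on $i$, classifying occurrences of $ab$ in $\eval(A_i)$ as explicit in $\alpha_i$, inside a substituted $\eval(A_j)$, or crossing, and the same $4k$ vs.\ $2k+2$ credit accounting. One small imprecision worth flagging: your verbal description of case (c) — ``one of its two nodes is contributed by $\alpha_i$ and the other by a substituted $\eval(A_j)$'' — covers only the situations of \CPref{1} and \CPref{2}; the case \CPref{3}, where \emph{both} nodes of the occurrence are contributed by two different substituted subtrees $\eval(A_j)$ and $\eval(A_k)$ with $A_j(A_k)$ a subpattern of $\alpha_i$, is a third flavour of crossing that your one-line description of (c) does not literally mention, even though you correctly invoke \CPref{1}--\CPref{3} immediately afterwards, so the logic is sound.
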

 
\begin{proof}
Clearly, $\eval(\grammar')$ can be obtained from $\eval(\grammar)$ by compressing some occurrences of patterns from 
$\lettersup\lettersdown$.  Hence, to show that $\eval(\grammar') = \algtreepaircomp(\lettersup,\lettersdown,\eval(\grammar))$, it suffices to show that
$\eval(\grammar')$ does not contain occurrences of patterns from $\lettersup\lettersdown$. 
By induction on $i$ we show that for every $1 \leq i \leq m$, 
$\eval(A_i')$ does not contain occurrences of patterns from $\lettersup\lettersdown$. 
To get a contradiction, consider an occurrence of $ab \in \lettersup\lettersdown$ in $\eval(A'_i)$.
If it is generated by an explicit occurrence of $ab$ in the right-hand side of $A_i'$
then it was present already in the rule for $A_i$, since we do not introduce new occurrences of the letters
from $\grammar$. So, the occurrence of $ab$ is replaced by a new letter in $\grammar'$.
If the occurrence is contained within the subtree generated by some $A'_j$ ($j < i$), then the occurrence is compressed
by the inductive assumption.
The remaining case is that there exists a  crossing occurrence of $ab$ in the rule for $A'_i$.
However note that if $a$ is the first (or $b$ is the last) letter of $\eval(A'_j)$,
then it was also the first (respectively, last) letter of $\eval(A_j)$ in the input instance,
as we do not introduce new occurrences of the old letters.
Hence, the occurrence of $ab$ was crossing already in the input grammar \grammar, which is not possible by the assumption of the lemma.

Each  occurrence of $ab \in \lettersup\lettersdown$ has 4 units of credit (two for each symbol), which are released in
the compression step. Two of the released units are used to pay for the credit of the new occurrence of the symbol $c$
(which replaces the occurrence of $ab$), while the other two units are used to pay for the representation cost of  $c$
(if we replace more than one occurrence of $ab$ in $\grammar$, some credit is wasted).

Let us finally argue that the invariants \GRrefall{} are preserved:
Replacing an occurrence of $ab$ with a single unary letter $c$ cannot make a handle grammar
a non-handle one, so~\GRref{1} is preserved.
Similarly, \GRref{5} is preserved.
The set of nonterminals and the number of occurrences of the nonterminals is unaffected, so also \GRref{2}--\GRref{4} are preserved.
\qedhere
\end{proof}

By Lemma~\ref{lem:noncrossing compression} it is left to assure that indeed all occurrences of chain patterns from
$\lettersup\lettersdown$ are noncrossing.
What can go wrong? Consider for instance the grammar with the rules $A_1(y) \to a(y)$ and  $A_2 \to A_1(b(c))$.
The pattern $ab$ has a crossing occurrence.
To deal with crossing occurrences we change the grammar.
In our example, we replace $A_1$ by $a$ in the right-hand side of $A_2$,
leaving only $A_2 \to ab(c)$, which does not contain a crossing occurrence of $ab$.

Suppose that some $ab \in \lettersup\lettersdown$ is crossing because of~\CPref{1}.
Let $aA_i$ be a subpattern of some right-hand side and let $\eval(A_i) = bt'$.
Then it is enough to modify the rule for $A_i$
so that $\eval(A_i) = t'$ and replace each occurrence of $A_i$ in a right-hand side by $bA_i$.
We call this action \emph{popping-up $b$ from $A_i$}.
The similar operation of \emph{popping down} a letter $a$ from $A_i \in N \cap \mathbb{N}_1$ is symmetrically defined
(note that both pop operations apply only to unary letters). See Figure~\ref{fig:uncrpair} for an example.
A similar operation of popping letters in the context of tree grammars is used also in~\cite{DBLP:conf/icde/BottcherHJM16}.

The lemma below shows that popping up and popping down removes all crossing occurrences of $ab$.
Note that the operations of popping up and popping down can be performed for several letters in parallel:
The procedure $\algpop(\lettersup,\lettersdown,\grammar)$ below  `uncrosses' all occurrences of patterns
from the set $\lettersup\lettersdown$, assuming that $\lettersup$ and $\lettersdown$ are disjoint
subsets of $F_1$ (and we apply it only in the cases in which they are disjoint).

Recall that for a handle grammar, right-hand sides can be viewed as sequences of nonterminals
and handles. Hence, we can speak of the first (respectively, last) symbol of a right-hand side.

\begin{algorithm}[H]
  \caption{$\algpop(\lettersup,\lettersdown,\grammar)$: Popping letters from $\lettersup$ and $\lettersdown$
  \label{pop code full}}
  \begin{algorithmic}[1]
  \For{$i \gets 1 \twodots m-1$}
		\If{the first symbol of $\alpha_i$ is $b \in \lettersdown$} \label{still to the right}  \Comment{popping up $b$}
			\If{$\alpha_i = b$}
				\State replace $A_i$ in all right-hand sides of  $\grammar$ by $b$ \label{remove-b}
			\Else
				\State remove this leading $b$ from $\alpha_i$
				\State replace $A_i$ in all right-hand sides of $\grammar$ by $bA_i$
			\EndIf
		\EndIf
		\If{$A_i\in N_1$ and  the last symbol of $\alpha_i$ is $a \in \lettersup$} \label{still to the left} \Comment{popping down $a$}
			\If{$\alpha_i = a$}
				\State replace $A_i$ in all right-hand sides of $\grammar$ by $a$ \label{remove-a}
			\Else
				\State remove this final $a$ from $\alpha_i$
				\State replace $A_i$ in all right-hand sides of  $\grammar$  by $A_ia$
			\EndIf
		\EndIf
	\EndFor
  \end{algorithmic}
\end{algorithm}

\begin{figure}
	\centering
		\includegraphics{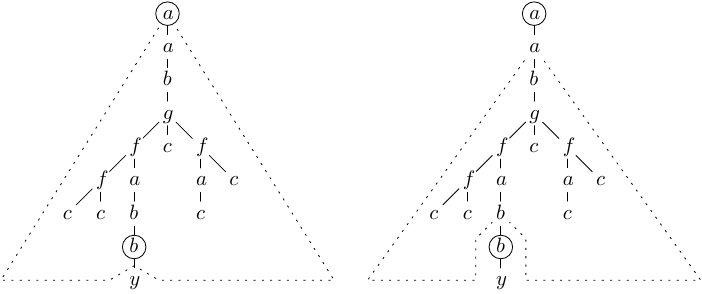}
	\caption{A tree before and after $\algpop(\{a\},\{b\}, \cdot)$. Affected nodes are encircled, the dotted lines enclose the patterns generated by a nonterminal.}
	\label{fig:uncrpair}
\end{figure}

\begin{lemma}
\label{lem:uncrossing pairs}
Let \grammar{} be a handle grammar and $\grammar' = \algpop(\lettersup,\lettersdown,\grammar)$,
where $\lettersup \cap \lettersdown = \emptyset$.
Then, the following hold:
\begin{itemize}
\item $\eval(A_m') = \eval(A_m)$ and hence $\eval(\grammar) = \eval(\grammar')$.
\item All chain patterns from $\lettersup\lettersdown$ are non-crossing in 
$\grammar'$.
\item If \grammar{} satisfies~\GRrefall{}, then so does $\grammar'$.
\item The grammar $\grammar'$ has at most the same number of occurrences of nonterminals of each rank as $\grammar$.
\item During the computation of $\grammar'$ from \grammar{}, at most two letters are popped  for each occurrence of 
a nonterminal of rank $1$ and at most one letter is popped for each occurrence of 
a nonterminal of rank $0$.
In particular, if \grammar{} satisfies \GRrefall{} then at most $4 \grammarsizeone + 2\grammarsizezero + 2\grammarsizetilde$
units of credit are issued during the computation of $\grammar'$.
\end{itemize}
\end{lemma}

\begin{proof}
Observe first that whenever we pop up $b$ from some $A_i$, then we replace each of $A_i$'s occurrences in \grammar{} with $bA_i$
(or with $b$, when $\eval(A_i) = b$), and similarly for the popping down operation, thus the value of $\eval(A_j)$ is not changed for $j \neq i$.
Hence, in the end we have $\eval(A_m') = \eval(A_m) = \mytree$ (note that $A_m$ does not pop letters).

Secondly, we show that if the first letter of $\eval(A_i')$ (where $i < m$) is $b' \in \lettersdown$ then we popped-up a letter from $A_i$
(which by the code is some $b \in \lettersdown$);
a similar claim holds by symmetry for the last letter of $\eval(A_i)$.
So, suppose that the claim is not true and consider the nonterminal $A_i$ with the smallest $i$ such that
the first letter of $\eval(A_i')$ is $b' \in \lettersdown$
but we did not pop up a letter from $A_i$.
Consider the first symbol of $\alpha_i$ when \algpop{} considered $A_i$
in line~\ref{still to the right}. Note, that as \algpop{} did not pop up a letter from $A_i$, the first letter of $\eval(A_i)$ and
$\eval(A_i')$ is the same and hence it is $b' \in \lettersdown$.
So $\alpha_i$ cannot begin with a letter as then it is $b' \in \lettersdown$ which should have been popped-up.
Hence, the first symbol of $\alpha_i$ is some nonterminal $A_j$ for $j < i$.
But then the first letter of $\eval(A_j')$ is $b' \in \lettersdown$ and so by the inductive assumption \algpop{} popped-up a~letter from $A_j$.
Hence, $\alpha_i$ begins with a letter when $A_i$ is considered in line~\ref{still to the right}. We obtained a  contradiction.

Suppose now that after \algpop{} there is a crossing pattern $ab \in \lettersup\lettersdown$.
This is due to one of the bad situations \CPref{1}--\CPref{3}.
We consider only \CPref{1}; the other cases are dealt  in a similar fashion. Hence, 
assume that $aA'_i$ is a subpattern in a right-hand side of $\grammar'$ and the first letter of $\eval(A'_i)$ is $b$.
Note that as $a \notin \lettersdown$ is labelling the parent node of an occurrence of $A'_i$ in $\grammar'$,
$A_i$ did not pop up a letter.
But the first letter of $\eval(A_i')$ is $b \in \lettersdown$. So, $A_i$ should have popped up a~letter by our earlier claim, which is a~contradiction.

Note that \algpop{} introduces at most two new letters for each occurrence of a~nonterminal of rank 1,
so from $N_1$ (one letter popped up and one popped down),
and at most one new letter for each occurrence of a nonterminal of rank 0, so from  $N_0 \cup \widetilde{N}_0$ (as nonterminals of rank $0$ cannot pop down a letter).
As each letter has two units of credit, the estimation on the number of issued credit follows from \GRrefall.

Concerning the preservation of the invariants, note that $\algpop$ does not introduce new nonterminals or new 
occurrences of existing nonterminals (occurrences of nonterminals can be eliminated in line~\ref{remove-b} and~\ref{remove-a}).
Therefore, \GRref{2}--\GRref{4} are preserved.
Moreover, also the form of the productions guaranteed by \GRref{1} and \GRref{5} cannot be spoiled,
so \GRref{1} and \GRref{5} are preserved as well.
\qedhere
\end{proof}

Hence, to simulate $(\lettersup, \lettersdown)$-compression on \grammar{} 
it is enough to first uncross all 2-chains from $\lettersup\lettersdown$
and then compress them all using $\algtreepaircomp(\lettersup,\lettersdown,\grammar)$.

\begin{lemma}
\label{lem:pc crossing}
Let \grammar{} be a handle grammar and let
$$
\grammar' =   \algtreepaircomp(\lettersup,\lettersdown, \algpop(\lettersup,\lettersdown,\grammar)).
$$
Then the following hold:
\begin{itemize}
\item $\eval(\grammar') =  \algtreepaircomp(\lettersup,\lettersdown, \eval(\grammar))$
\item If \grammar{} satisfies \GRrefall{},  then so does  $\grammar'$.
\item The grammar $\grammar'$ has at most the same number of occurrences of nonterminals of each rank as $\grammar$.
\item At most two new occurrences of letters are introduced for each 
occurrence of a nonterminal of rank $1$, and at most one new occurrence of a letter is introduced for 
each  occurrence of a nonterminal of rank $0$.
In particular, if \grammar{} satisfies \GRrefall{} then at most $4 \grammarsizeone + 2 \grammarsizezero + 2 \grammarsizetilde$ units of credit are issued during the computation of $\grammar'$.
\item The issued credit and the credit released by $ \algtreepaircomp$ cover the representation cost of fresh letters as well as 
their credit in $\grammar'$.
\end{itemize}
\end{lemma}

\begin{proof}
By Lemma~\ref{lem:uncrossing pairs}, every chain pattern from $\lettersup\lettersdown$ is non-crossing
in $\algpop(\lettersup,\lettersdown,\grammar)$. We get 
\begin{eqnarray*}
\eval(\grammar') &=& \eval(\algtreepaircomp(\lettersup,\lettersdown, \algpop(\lettersup,\lettersdown,\grammar))) \\
& \stackrel{\text{Lemma~\ref{lem:noncrossing compression}}}{=} & \algtreepaircomp(\lettersup,\lettersdown, \eval(\algpop(\lettersup,\lettersdown,\grammar))) \\
& \stackrel{\text{Lemma~\ref{lem:uncrossing pairs}}}{=} & \algtreepaircomp(\lettersup,\lettersdown,\eval(\grammar)) .
\end{eqnarray*}
Moreover, at most $4 \grammarsizeone + 2 \grammarsizezero + 2 \grammarsizetilde$ units of credit are issued
and this is twice the number of occurrences of new
letters in the grammar.
By Lemma~\ref{lem:noncrossing compression} and~\ref{lem:uncrossing pairs} no new occurrences of nonterminals are introduced.
By Lemma~\ref{lem:noncrossing compression} the cost of representing new letters introduced by \algtreepaircomp{}
is covered by the released credit.
Finally, both \algtreepaircomp{} and \algpop{} preserve the invariants \GRrefall.
\qedhere
\end{proof}

Since by Lemma~\ref{lem:number of phases} we apply at most $\Ocomp(\log n)$ many $(\lettersup,\lettersdown)$-compressions
(for different sets $\lettersup$ and $\lettersdown$) and by \GRref{3}--\GRref{4} we have $\grammarsizezero \leq n_0$,
$\grammarsizetilde \leq n_1(r-1)$ and $\grammarsizeone \leq n_1$, we obtain.
\begin{corollary}
\label{cor: credit pair compression}
$(\lettersup,\lettersdown)$-compression issues in total $\Ocomp((n_0 + n_1r) \log n)$ units of credit
during all modifications of \grammar.
\end{corollary}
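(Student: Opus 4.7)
The plan is to combine three ingredients already established in the excerpt: the per-step credit bound from Lemma~\ref{lem:pc crossing}, the invariants \GRref{3}--\GRref{4} that bound the relevant counts of nonterminal occurrences in the current grammar \grammar, and the phase bound from Lemma~\ref{lem:number of phases} that controls how many pair compressions are performed in total.

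First, I would fix an arbitrary $(\lettersup,\lettersdown)$-compression performed during the run of \algmain{} and apply Lemma~\ref{lem:pc crossing} to obtain that at most $4\grammarsizeone + 2\grammarsizezero + 2\grammarsizetilde$ units of credit are issued by this single compression, where $\grammarsizezero$, $\grammarsizeone$, and $\grammarsizetilde$ denote the current numbers of occurrences in \grammar{} of nonterminals from $N_0$, $N_1$, and $\widetilde{N_0}$, respectively. Then I would use the invariants \GRref{3} and \GRref{4} to replace each of these quantities by its global upper bound, namely $\grammarsizezero \leq n_0$, $\grammarsizeone \leq n_1$, and $\grammarsizetilde \leq n_1(r-1)$. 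This yields a per-compression bound of $4n_1 + 2n_0 + 2n_1(r-1) = 2n_0 + 2n_1(r+1) = \Ocomp(n_0 + n_1 r)$ units of issued credit.

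Next, I would sum this uniform per-phase bound over all phases of \algmain. By Lemma~\ref{lem:number of phases} the size $|\mytree|$ drops by a constant factor per phase, and the initial size of the tree is $n$, so there are at most $\Ocomp(\log n)$ phases and hence at most $\Ocomp(\log n)$ many $(\lettersup,\lettersdown)$-compressions. Multiplying the per-compression bound by the number of compressions gives the total of $\Ocomp((n_0 + n_1 r)\log n)$ units of credit issued over all applications of pair compression, which is precisely what the corollary asserts.

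There is really no obstacle here; the statement is a direct bookkeeping consequence of the three cited results, and the only thing to check is that the bounds $\grammarsizezero \leq n_0$, $\grammarsizeone \leq n_1$, and $\grammarsizetilde \leq n_1(r-1)$ from \GRref{3}--\GRref{4} hold uniformly across all moments at which a pair compression is invoked. Since those are maintained as invariants by each modification step (as established in Lemma~\ref{lem:pc crossing} together with the analogous lemmas for the other compression operations), the summation is legitimate and the proof is essentially a one-line arithmetic calculation.
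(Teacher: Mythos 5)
Your proposal is correct and follows exactly the same reasoning the paper uses: bound the per-compression credit via Lemma~\ref{lem:pc crossing}, use \GRref{3}--\GRref{4} to replace $\grammarsizezero$, $\grammarsizeone$, $\grammarsizetilde$ by $n_0$, $n_1$, $n_1(r-1)$, and multiply by the $\Ocomp(\log n)$ bound on the number of phases from Lemma~\ref{lem:number of phases}. No differences worth noting.
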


\subsection{Chain compression}
\label{subsec:block compression}
Our notations and analysis for chain compression is similar to those for 
$(\lettersup,\lettersdown)$-compression.
In order to simulate chain compression on \grammar{} we want to apply
\algtreechaincomp{} (Algorithm~\ref{alg:treechaincomp}) to the right-hand sides of \grammar.
This works as long as there are no crossing chains:
A unary letter $a$ \emph{has a crossing chain} in a rule $A_i \to \alpha_i$ if $aa$
has a crossing occurrence in $\alpha_i$, otherwise $a$ has no crossing chain.
As for $(\lettersup,\lettersdown)$-compression,
when there are no crossing chains, we apply \algtreechaincomp{} to the right-hand sides of \grammar.
We denote with $\algtreechaincomp(\letters_1,\grammar)$ the grammar obtained
by applying $\algtreechaincomp$ to all right-hand sides of \grammar.

\begin{lemma}
	\label{lem: blockscomp}
Let \grammar{} be a handle grammar and $\grammar' = \algtreechaincomp(\letters_1,\grammar)$. Then the following hold:
\begin{itemize}	
\item If no unary letter from $\letters_1$ has a crossing chain in a rule of \grammar,
then $$\eval(\grammar') = \algtreechaincomp(\letters_1,\eval(\grammar)).$$
\item The grammar $\grammar'$ has the same number of occurrences of nonterminals of each rank as $\grammar$.
\item If \grammar{} satisfies \GRrefall{}, then so does  $\grammar'$.
\end{itemize}
\end{lemma}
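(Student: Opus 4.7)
The plan is to mirror the structure of Lemma~\ref{lem:noncrossing compression}: prove the semantic equality by induction on the nonterminal index, and verify the invariants by inspecting what \algtreechaincomp{} actually does to right-hand sides of a handle grammar.

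For the first bullet, I would prove by induction on $i$ that $\eval(A_i')$ is the result of chain-compressing $\eval(A_i)$. The base case is trivial. For the inductive step, view $\alpha_i$ as a sequence of handles and nonterminals (as allowed by~\HGref{1}--\HGref{3}). When we form $\eval(A_i)$ by substituting each $A_j$ with $\eval(A_j)$, an $a$-maximal chain in $\eval(A_i)$ either (i) lies entirely inside one of the handles of $\alpha_i$, or (ii) lies entirely inside $\eval(A_j)$ for some $j<i$, or (iii) straddles the boundary between two neighbouring symbols of $\alpha_i$. Case (iii) would mean that two unary $a$-labelled nodes on the two sides of such a boundary are adjacent in $\eval(A_i)$; by the definitions~\CPref{1}--\CPref{3} this is exactly a crossing occurrence of $aa$, which is ruled out by the assumption. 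Hence every $a$-maximal chain of $\eval(A_i)$ is contained either entirely in a single handle of $\alpha_i$ (where \algtreechaincomp{} compresses it directly), or entirely in some $\eval(A_j)$ with $j<i$ (where the inductive hypothesis compresses it). Moreover, the converse is also true: a chain lying in one handle of $\alpha_i$ is still $a$-maximal in $\eval(A_i)$ (no extension across the boundary exists, again by the absence of crossing chains), and likewise for chains coming from $\eval(A_j)$. Therefore $\eval(A_i') = \algtreechaincomp(\letters_1,\eval(A_i))$, and taking $i=m$ finishes this part.

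For the second bullet, I would simply check each invariant in turn. \algtreechaincomp{} only relabels contiguous runs of equal unary letters inside the right-hand sides and neither adds nor deletes nonterminal occurrences, so the set of nonterminals $N_0$, $N_1$, $\widetilde{N_0}$ and the counts $\grammarsizezero$, $\grammarsizeone$, $\grammarsizetilde$ are unchanged; this gives~\GRref{2}--\GRref{4}. For~\GRref{1}, a handle remains a handle after replacing an $a$-maximal chain inside one of its $w_j$ by a single fresh unary letter (still a chain pattern), and the outer list-of-handles-and-nonterminals structure of the rule is untouched; so the grammar stays handle. For~\GRref{5}, if $A_i\in\widetilde{N_0}$ has rule $A_i\to wA_j$ or $A_i\to wc$ with $w$ a string of unary letters, then chain compression just shortens $w$ to another string of unary letters, preserving the shape of the rule.

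The only subtle point is the $a$-maximality argument in case (iii) above: one must make sure that the compressed letter $a_\ell$ is a \emph{fresh} letter not in $\letters_1$ (guaranteed by \algtreechaincomp) so that the operation is unambiguous, and that $a$-maximal chains truly cannot silently extend past a handle/nonterminal boundary under the ``no crossing chain'' hypothesis. Once that is nailed down, everything else is bookkeeping and the two bullets follow with no credit needing to be issued by this step itself (the cost of representing the new $a_\ell$'s is handled separately in Section~\ref{subsec: chain cost}).
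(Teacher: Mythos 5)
Your overall strategy---induction on the nonterminal index, reducing commutation of chain compression with substitution to the absence of crossing chains, followed by a routine invariant check---is exactly what the paper intends (the paper itself only says the proof is similar to that of Lemma~\ref{lem:noncrossing compression} and omits it). However, the case analysis for the first bullet contains a flaw. You claim that case~(iii), a chain straddling the boundary between two neighbouring symbols of $\alpha_i$, is always a crossing occurrence of $aa$. That is only true when at least one of the neighbouring symbols is a nonterminal, so that one of \CPref{1}--\CPref{3} applies. A symbol of $\alpha_i$ is either a handle or a nonterminal, and a handle may itself be a single unary letter; so an $a$-maximal chain of $\eval(A_i)$ can span several consecutive unary-letter handles, as in $\alpha_i = a\,a\,b\,A_j$, where the chain $a^2$ straddles a handle-handle boundary yet is not crossing. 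Hence your conclusion that every $a$-maximal chain is contained entirely in a single handle of $\alpha_i$ or entirely in some $\eval(A_j)$ is, as written, false.

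The fix is immediate and does not affect the rest of your argument: replace case~(i) by ``lies entirely within the explicit letters of $\alpha_i$'' (possibly spanning several handles---such a chain is compressed directly, because $\algtreechaincomp$ is applied to the whole right-hand side $\alpha_i$, not handle by handle), keep case~(ii) as is, and observe that the remaining case is that the chain involves both an explicit letter of $\alpha_i$ and a letter produced by some $A_j$; only this last case is a crossing occurrence, and the no-crossing-chain hypothesis rules it out. With this correction the commutation $\eval(\grammar') = \algtreechaincomp(\letters_1,\eval(\grammar))$ follows exactly as you describe, and your verification that $\grammar'$ satisfies \GRrefall{} is fine as stated.
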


\begin{figure}
	\centering
		\includegraphics{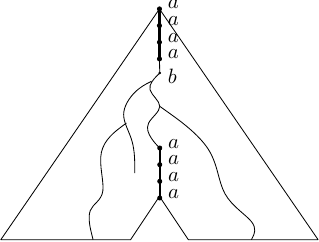}
	\caption{A context with its $a$-prefix and $a$-suffix depicted.}
	\label{fig:prefixsuffix}
\end{figure}

The proof is similar to the proof of Lemma~\ref{lem:noncrossing compression}
and so it is omitted.
Note that so far we have neither given a bound on the amount of issued credit
nor on the representation cost for the new letters $a_\ell$ introduced by $\algtreechaincomp$.
Let us postpone these points and first show how to ensure that no letter has a crossing chain.
The solution is similar to \algpop: Suppose for instance that $a$ has a crossing chain
due to \CPref{1}, i.e., some $aA_i$ is a subpattern in a right-hand side and $\eval(A_i)$ begins with $a$.
Popping up $a$ does not solve the problem, since after popping, $\eval(A_i)$ might still begin with $a$.
Thus, we keep on popping up until the first letter of $\eval(A_i)$ is not $a$,
see Figure~\ref{fig:uncrchain}.
In order to do this in one step we need some notation:
We say that $a^\ell$ is the \emph{$a$-prefix} of a tree (or context)  $t$ if $t = a^\ell t'$ and the first letter of $t'$ is not $a$
(here $t'$ might  be the trivial context $y$), see Figure~\ref{fig:prefixsuffix}.
In this terminology, we remove the $a$-prefix of $\eval(A_i)$.
Similarly, we say that $a^\ell$ is the \emph{$a$-suffix} of a context  $t(y)$ if $t = t'(a^\ell(y))$ for a context $t'(y)$ 
and the last letter of $t'$ is not $a$
(again, $t'$ might  be the trivial context $y$ and then $a^\ell$ is also the $a$-prefix of $t$).
The following algorithm \algremblocks{} eliminates crossing chains from \grammar{}.

\begin{figure}
	\centering
		\includegraphics{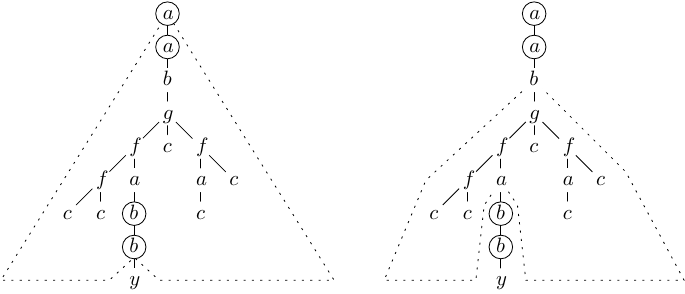}
	\caption{A tree before and after \algremblocks. Affected nodes are encircled, the dotted lines enclose the patterns generated by a nonterminal.}
	\label{fig:uncrchain}
\end{figure}

\begin{algorithm}[H]
  \caption{$\algremblocks(\grammar)$: removing crossing chains.
  \label{removing outer letters}}
  \begin{algorithmic}[1]
  \For{$i \gets 1 \twodots m-1$}
		\If{the first letter $a$ of $\eval(A_i)$ is unary}
			\State let $p$ be the length of the $a$-prefix of $\alpha_i$ \label{length-l}
			\If{$\alpha_i = a^p$}
				\State replace $A_i$ in all right-hand sides by $a^p$
			\Else
				\State remove $a^p$ from the beginning of $\alpha_i$
				\State replace $A_i$ by $a^p A_i$ in all right-hand sides \label{popping prefix}
			\EndIf
		\EndIf	
		\If{$A_i \in N_1$ and the last letter $b$ of $\eval(A_i)$ is unary}
			\State let $s$ be the length of the $b$-suffix of $\alpha_i$  \label{length-r}
			\If{$\alpha_i = b^s$}
				\State replace $A_i$ in all right-hand sides by $b^s$
			\Else
				\State remove $b^s$ from the end of $\alpha_i$
				\State replace $A_i$ by $A_ib^s$ in all right-hand sides\label{popping suffix}
			\EndIf
		\EndIf
	\EndFor
	\end{algorithmic}
\end{algorithm}

\begin{lemma} \label{lem: uncrossing chains}
Let \grammar{} be a handle grammar and $\grammar'$ =  \algremblocks(\grammar). Then the following hold:
\begin{itemize}
\item $\eval(A_m) = \eval(A_m')$ and hence $\eval(\grammar) = \eval(\grammar')$.
\item No unary letter has a crossing chain in $\grammar'$.
\item If \grammar{} satisfies \GRrefall{}, then so does $\grammar'$.
\item The grammar $\grammar'$ has at most the same number of occurrences of nonterminals of each rank as $\grammar$.
\end{itemize}
\end{lemma}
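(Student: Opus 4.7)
The plan is to mirror the structure of the proof of Lemma~\ref{lem:uncrossing pairs}.

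For the equivalence $\eval(\grammar') = \eval(\grammar)$ I would verify that every prefix pop (and symmetrically every suffix pop) preserves the evaluation of each nonterminal: removing $a^p$ from $\alpha_i$ is exactly balanced by prepending $a^p$ in front of each remaining occurrence of $A_i$ in the other rules; the special case $\alpha_i = a^p$ simply inlines $A_i$ as $a^p$ everywhere, which has the same effect. Because the loop ranges only over $i = 1, \ldots, m-1$, the rule of $A_m$ is never changed, hence $\eval(A_m)$ is invariant and so $\eval(\grammar) = \eval(\grammar')$.

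The technical core is the invariant that, for every $A_l$ processed by the loop, $\eval'(A_l)$ does not start with the first letter $b_l$ of $\eval(A_l)$ whenever $b_l$ is unary, and (when $A_l$ has rank $1$) does not end with the last letter $b_l'$ of $\eval(A_l)$ whenever $b_l'$ is unary. I would prove this by strong induction on $l$, reducing it to the claim that the $b_l$-prefix of the current $\alpha_l$ at the symbol level has the same length as the $b_l$-prefix of $\eval(A_l)$. This length claim follows because, by the induction hypothesis, any nonterminal $A_{l''}$ originally sitting near the front of $\alpha_l^{(0)}$ has already had the entire $b_{l''}$-prefix of $\eval(A_{l''})$ prepended in front of it when it was processed. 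Consequently the symbol-level $b_l$-prefix of the current $\alpha_l$ faithfully mirrors the letter-level $b_l$-prefix of $\eval(A_l)$, and popping it strips exactly this prefix out of the evaluation, so that $\eval'(A_l)$ no longer begins with $b_l$; the suffix case is entirely symmetric.

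Given this invariant, the absence of crossing chains in $\grammar'$ follows from a case analysis that rules out each of~\CPref{1}--\CPref{3}. Consider, for instance, a putative crossing occurrence of $aa$ via~\CPref{1}, i.e., $a A_j'$ sitting in some $\alpha_i'$ with $\eval'(A_j)$ starting with the unary letter $a$. The $a$ immediately preceding $A_j'$ in $\alpha_i'$ either (i) belongs to the block $b_j^{p_j}$ prepended when $A_j$ was processed, whence $a = b_j$, contradicting the invariant; or (ii) comes from the original $\alpha_i^{(0)}$ or from the tail of a suffix appended to a neighbouring nonterminal $A_k$ on the left, in which case $p_j = 0$, and then the invariant forces $b_j$ to be non-unary, so $\eval'(A_j) = \eval(A_j)$ begins with a non-unary letter, contradicting the unarity of $a$. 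The cases~\CPref{2} and~\CPref{3} are handled symmetrically, using in addition, for~\CPref{3}, that two consecutive nonterminals $A_j A_k$ in $\alpha_i'$ enforce $s_j = p_k = 0$, which again forces the relevant first or last letter of $\eval'(A_j)$ or $\eval'(A_k)$ to be non-unary.

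Finally the invariants~\GRrefall{} are preserved for routine reasons: only chains of unary letters are stripped from one rule and inserted around occurrences of a single nonterminal, so the handle shape~\GRref{1} and the simple form of $\widetilde{N_0}$-rules~\GRref{5} are undisturbed; no new nonterminals are created, and the occurrence counts $g_0$, $g_1$, $\widetilde{g_0}$ can only decrease, so~\GRref{2}--\GRref{4} remain valid. I expect the main obstacle to be the bookkeeping in the inductive length claim, where one must carefully match the symbol-level $b_l$-prefix of the current $\alpha_l$ (already rewritten by all earlier prepend/append updates) with the letter-level $b_l$-prefix of $\eval(A_l)$, while keeping track of whether each relevant intermediate letter is unary.
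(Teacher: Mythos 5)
Your proposal is correct and follows essentially the same structure as the paper's proof: first arguing that popping preserves $\eval(A_m)$, then proving by induction on the loop index that the symbol-level $a$-prefix of $\alpha_i$ has the same length as the $a$-prefix of $\eval(A_i)$ (which is exactly the paper's key claim), then deducing the absence of crossing chains from \CPref{1}--\CPref{3}, and finally noting that \GRrefall{} are preserved for the same routine reasons as in the pair-compression case. The only difference is cosmetic: you carry out the case analysis for \CPref{1}--\CPref{3} in somewhat more detail than the paper, which simply states that the cases are handled similarly to \CPref{1}.
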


\begin{proof}
First observe that whenever we remove the $a$-prefix $a^{p_i}$ from the rule for $A_i$
we replace each occurrence of  $A_i$  by $a^{p_i} A_i$ and similarly for 
$b$-suffixes. Hence, as long as $A_j$ is not yet considered, it defines the same tree as in the input tree.
In particular, after \algremblocks{} we have $\eval(A_m') = \eval(A_m)$,
as we do not pop prefixes and suffixes from $A_m$.

Next, we show that when $\algremblocks$ considers $A_i$, then $p$ from line~\ref{length-l}
is the length of the $a$-prefix of $\eval(A_i)$ (similarly, $s$ from line~\ref{length-r} is the length of the 
$a$-suffix of $\eval(A_i)$).
Suppose that this is not the case and consider $A_i$ with smallest $i$ which violates the statement.
Clearly $i > 1$ since there are no nonterminals in the right-hand side for $A_1$.
Let $a^k$ be the $a$-prefix of $\eval(A_i)$. We have $p < k$.
The symbol below  $a^p$ in $\alpha_i$ (which must exist because otherwise $\eval(A_i) = a^p$) cannot be a letter (as the $a$-prefix of $\eval(A_i)$ is not $a^p$),
so it is a nonterminal $A_j$ with $j < i$.
The first letter of $\eval(A_j)$ must be $a$. Let $a^{k'}$ be the $a$-prefix of $\eval(A_j)$.
By induction, $A_j$ popped up $a^{k'}$, and at the time when $A_i$ is considered, the first letter 
of $\eval(A_j)$ is different from $a$.
Hence, the $a$-prefix of $\eval(A_i)$ is exactly $a^p$, a contradiction.

As a consequence of the above statement, if $aA'_i$ occurs in a right-hand side of the output grammar $\grammar'$, then $a$ is not the first letter of $\eval(A'_i)$.
This shows that \CPref{1} cannot hold for a chain pattern $aa$. 
The conditions \CPref{2} and \CPref{3} are handled similarly.
So there are no crossing chains after \algremblocks.

The arguments for the last two points of the lemma are the same as in the proof of Lemma~\ref{lem:uncrossing pairs}
and therefore omitted.
\end{proof}

So chain compression  is done by first running \algremblocks{}
and then \algtreechaincomp{} on the right-hand sides of \grammar.

\begin{lemma}
\label{lem:blocksc}
Let \grammar{} be a handle grammar and $\grammar'= \algtreechaincomp(\letters_1,\algremblocks(\grammar))$.
Then, the following hold:
\begin{itemize}
	\item If \grammar{} satisfies \GRrefall{}, then so does $\grammar'$.
\item $\eval(\grammar') = \algtreechaincomp(\letters_1, \eval(\grammar))$
\item During the computation of $\grammar'$ from \grammar{} at 
most  two new occurrences of letters are introduced for each 
occurrence of a nonterminal of rank $1$, and at most one
occurrence of a letter is introduced 
 for each occurrence of a nonterminal of rank $0$.
In particular, if \grammar{} satisfies \GRref{3}--\GRref{4} then at most  $4 \grammarsizeone + 2 \grammarsizezero + 2 \grammarsizetilde$ units of credit are issued in the computation of $\grammar'$ and
this credit is used to pay the credit for the fresh letters $a_\ell$ in the grammar introduced by \algtreechaincomp{}
(but not their representation cost).
\end{itemize}
\end{lemma}
\begin{proof}
	We shall comment only on the amount of new letters
	and the issued credit,
	as the rest follows from Lemma~\ref{lem: blockscomp} and~\ref{lem: uncrossing chains}.
	Note that the arbitrarily long chains popped by \algremblocks{}  
	are compressed into single letters by  \algtreechaincomp{}. Hence,
	as for $(\lettersup,\lettersdown)$-compression, 
	at most two letters are introduced for each occurrence of a nonterminal of rank 1 (i.e., from $N_1$) and
	at most one letter is introduced  for each occurrence of a nonterminal of rank 0 (i.e.,  from $N_0 \cup \widetilde{N}_0$).
	The \GRref{3}--\GRref{4} additionally bound the number of occurrences of nonterminals,
	so assuming \GRref{3}--\GRref{4} yields the bound on the amount of issued credit.	
\end{proof}
Since by Lemma~\ref{lem:number of phases} we apply at most $\Ocomp(\log n)$ many chain compressions to \grammar{}
and by \GRref{3}--\GRref{4} we have $\grammarsizezero \leq n_0$,
$\grammarsizetilde \leq n_1(r-1)$ and $\grammarsizeone \leq n_1$, we get:

\begin{corollary}
\label{cor: credit chain compression}
Chain compression issues in total $\Ocomp((n_0 + n_1r) \log n)$ units of credit during all modifications of \grammar.
\end{corollary}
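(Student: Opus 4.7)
The plan is to combine the per-phase bound from Lemma~\ref{lem:blocksc} with the bound on the number of phases from Lemma~\ref{lem:number of phases} and the invariants \GRref{3}--\GRref{4}. Concretely, I would first observe that each invocation of chain compression on \grammar{} (that is, one run of $\algtreechaincomp(\letters_1,\algremblocks(\grammar))$) issues at most $4\grammarsizeone + 2\grammarsizezero + 2\grammarsizetilde$ units of credit, where $\grammarsizezero$, $\grammarsizeone$, $\grammarsizetilde$ are the current numbers of occurrences of nonterminals from $N_0$, $N_1$, $\widetilde{N_0}$, respectively.

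Next I would plug in the global bounds provided by the invariants: $\grammarsizezero \leq n_0$ and $\grammarsizeone \leq n_1$ by \GRref{3}, and $\grammarsizetilde \leq n_1(r-1)$ by \GRref{4}. This gives a per-phase credit issuance of
\[
4 n_1 + 2 n_0 + 2 n_1(r-1) \;=\; \Ocomp(n_0 + n_1 r).
\]
Finally, since \algmain{} performs at most $\Ocomp(\log n)$ phases (Lemma~\ref{lem:number of phases}, because $|\mytree|$ drops by a constant factor per phase starting from size $n$), and at most one chain compression is performed per phase, the total credit issued for chain compression, summed over all modifications of \grammar, is $\Ocomp((n_0 + n_1 r)\log n)$.

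There is no real obstacle: every ingredient has already been established. The only minor subtlety worth mentioning is that the per-phase bound uses the current values $\grammarsizezero$, $\grammarsizeone$, $\grammarsizetilde$, which could in principle change across phases, but the monotonicity statements in \GRref{3} and the uniform bound in \GRref{4} let us replace them by the initial values $n_0$, $n_1$, $n_1(r-1)$ throughout.
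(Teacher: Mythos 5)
Your proof is correct and follows exactly the paper's reasoning: apply the per-invocation bound from Lemma~\ref{lem:blocksc}, bound $\grammarsizeone$, $\grammarsizezero$, $\grammarsizetilde$ via \GRref{3}--\GRref{4}, and multiply by the $\Ocomp(\log n)$ phase bound from Lemma~\ref{lem:number of phases}. No gaps.
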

The total representation cost for the new letters $a_\ell$ introduced by chain compression is estimated separately in Section~\ref{subsec: chain cost}.

\subsection{Leaf compression}
\label{subsec:child compression}

In order to simulate leaf compression on \grammar{} we perform similar operations as in the case of $(\lettersup,\lettersdown)$-compression:
Ideally we would like to apply \algtreechildcomp{} to each rule of \grammar. 
However, in some cases this does not return the appropriate result.
We say that the pair $(f,a)$ is a \emph{crossing \parentchild} in \grammar,
if $f \in F_{\geq 1}$, $a \in F_0$, and one of the following cases holds:
\begin{enumerate}[({FC}1)]
	\item \label{fc 1} $f(t_1,\ldots,t_\ell)$ is a subtree of some right-hand side of \grammar{}, where for some $j$ we have $t_j = A_k$ and $\eval(A_k) = a$.
	\item \label{fc 2} For some $A_i \in N_1$, $A_i(a)$ is a subtree of some right-hand side of \grammar{} and the last letter of $\eval(A_i)$ is $f$.
	\item \label{fc 3} For some $A_i \in N_1$ and $A _k \in N_0$, $A_j(A_k)$ is a subtree of some right-hand side of \grammar{},
	the last letter of $\eval(A_i)$ is $f$, and $\eval(A_k) = a$.
\end{enumerate}
When there is no crossing \parentchild{},
we proceed as in the case of any of the two previous compressions:
We apply \algtreechildcomp{} to each right-hand side of a rule. We denote the resulting grammar with 
$\algtreechildcomp(\letters_{\geq 1},\letters_0,\grammar)$.

\begin{lemma} \label{lem: no crossing parent node child compression}
Let \grammar{} be a handle grammar and  $\grammar ' = \algtreechildcomp(\letters_{\geq 1},\letters_0,\grammar)$. Then, the following hold:
\begin{itemize}
\item If there is no crossing \parentchild{} in \grammar,
then $\eval(\grammar') = \algtreechildcomp(\letters_{\geq 1},\letters_0,\eval(\grammar))$.
\item The cost of representing  new letters and the credits for those letters are covered by the released credit.
\item If \grammar{} satisfies \GRrefall{}, then so does $\grammar'$.
\item The grammar $\grammar'$ has the same number of occurrences of nonterminals of each rank as $\grammar$.
\end{itemize}
\end{lemma}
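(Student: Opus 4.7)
The plan is to follow the template of Lemma~\ref{lem:noncrossing compression}. For the correctness of $\eval(\grammar')$, I would argue by induction on $i$ that $\eval(A_i')$ contains no uncompressed pattern of the form $f(t_1, \ldots, t_m)$ with $t_{i_j} \in F_0$ at the designated positions and $t_i \notin F_0$ elsewhere. Any such occurrence must either appear explicitly in $\alpha_i$ (hence be replaced by the rule-level call to \algtreechildcomp), lie entirely inside $\eval(A_j')$ for some $j < i$ (handled by the inductive hypothesis), or straddle the boundary between $\alpha_i$ and the value of a nonterminal, matching one of the crossing cases \FCref{1}--\FCref{3} that are excluded by assumption.

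For the credit, I would observe that each compressed occurrence of $f$ together with its $\ell$ constant-leaf children releases $2(\ell+1)$ units of credit. The new occurrence of $f'$ needs $2$ units, and the representation cost of $f'$ per~\eqref{production-f'} is $\ell+1$, paid only once per introduced letter. Since $\ell \geq 1$, even a single compressed occurrence suffices: $2(\ell+1) \geq 2 + (\ell+1)$, and additional occurrences produce extra slack. Invariants \GRref{2}--\GRref{4} are immediate because leaf compression creates no new nonterminals and preserves the set of nonterminal occurrences. For \GRref{5}, the only relevant case is $A_i \to w c$ with $A_i \in \widetilde{N_0}$ and $w = a_1 \cdots a_k$ (for $k \geq 1$), where only the innermost pair $a_k(c)$ can be compressed into a new constant $a_k'$, producing the rule $A_i \to a_1\cdots a_{k-1}\, a_k'$ of the required form; rules $A_i \to w A_j$ are untouched because $A_j$ is not a constant (otherwise we would have a crossing \parentchild{} of type \FCref{1}).

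The main obstacle will be \GRref{1}. The handle structure is jeopardised only in rank-$0$ rules of the form $A \to u c$ or $A \to u B v c$, where the terminal $c$ fills the parameter position of the last handle $h_k = f(\ldots, y, \ldots)$ along the main spine. Leaf compression at $f$ then removes $c$ together with any other constant-leaf children, producing a fresh letter $f'$ none of whose remaining children is the parameter. I would deal with this by re-parsing the bottom of the rule: if $f'$ has rank $0$ it becomes the new terminal constant, yielding a rule $A \to u f'$ (respectively $A \to u B v f'$) matching \HGref{3}; if $f'$ has positive rank, pick any remaining child position $j$ as the new ``main spine'', treating $f'$ as a fresh handle with parameter at position $j$ and continuing the chain through $w_j$ down to $\gamma_j$ (a constant or rank-$0$ nonterminal), again matching one of the forms in \HGref{3}. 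Since this re-parsing alters only how the right-hand side is decomposed into handles, the multiset of nonterminal occurrences remains unchanged and \GRref{2}--\GRref{4} stay valid.
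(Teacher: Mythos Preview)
Your proposal is correct and follows essentially the same approach as the paper: the same inductive correctness argument, the same credit calculation ($2(\ell+1)$ released versus $2 + (\ell+1)$ needed), and the same re-parsing trick for \GRref{1} when the terminal constant $c$ is absorbed into the final handle of a rank-$0$ rule. One small remark: in your \GRref{5} argument, the reason the rule $A_i \to w A_j$ is untouched is simply that $A_j$ is a nonterminal and hence not in $\letters_0$, so \algtreechildcomp{} never treats it as a leaf to absorb; the appeal to \FCref{1} is not needed there.
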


\begin{proof}
Most of the proof follows similar lines as the proof of Lemma~\ref{lem:noncrossing compression},
but there are some small differences.

Let us first prove that $\eval(\grammar') = \algtreechildcomp(\letters_{\geq 1},\letters_0,\eval(\grammar))$
under the assumption that there is no crossing \parentchild{} in \grammar.
As in the proof of Lemma~\ref{lem:noncrossing compression} it suffices to show that
$\eval(\grammar')$ does not contain a subtree of the form $f(t_1,\ldots,t_k)$ with $f \in F_{\geq 1}$
such that there exist positions $1 \leq i_1 < i_2 <\cdots < i_\ell \leq k$ ($\ell \geq 1$)
and constants  $a_1,\ldots,a_\ell \in F_0$  with $t_{i_j} =a_j$ for $1 \leq j \leq \ell$ and $t_i \notin \letters_0$ for 
$i \not\in \{i_1, \ldots, i_\ell\}$ (note that the new letters introduced by $\algtreechildcomp$ do not belong to 
the alphabet $F$). Assume that such a subtree exists in $\eval(A'_i)$. Using induction, we deduce a 
contradiction. 
If the root $f$ together with its children at positions $i_1,\ldots,i_\ell$ are generated by some other nonterminal $A'_j$ occurring in 
the right-hand side of $A'_i$,
then these nodes are compressed by the induction assumption.
If they all occur explicitly in the right-hand side, then they are compressed by $\algtreechildcomp(\letters_{\geq 1},\letters_0,\grammar)$.
The only remaining case is that $\grammar'$ contains a crossing \parentchild.
But then, since $f, a_1, \ldots, a_\ell$ are old letters, this crossing \parentchild{} must be already
present in \grammar, which contradicts the assumption from the lemma.

Concerning the representation cost for the new letters,
observe that when $f$ and $\ell$ of its children are compressed, the representation cost for the new letter is $\ell + 1$.
There is at least one occurrence of $f$ with those children in a right-hand side of \grammar. Before the compression these nodes held $2(\ell + 1)$ units of credit.
After the compression, only two units are needed for the new node. 
The other $2\ell \geq \ell + 1$ units are enough to pay for the representation cost.

Concerning the preservation of the invariants,
observe that no new nonterminals were introduced, so \GRref{2}--\GRref{4} are preserved.
Also the form of the rules for $A_i \in \widetilde{N}_0$ cannot be altered
(the only possible change affecting those rules is a replacement of $ac$,
where $a \in F_1$ and $c \in F_0$, by a new letter $c' \in F_0$).

So it is left to show that the resulting grammar is handle.
It is easy to show that after a leaf compression a handle is still a handle,
with only one exception: assume we have a handle $h=f(w_1\gamma_1, \ldots, w_{j-1}\gamma_{j-1},y,w_{j+1}\gamma_{j+1}, \ldots, w_\ell\gamma_\ell)$
followed by a constant $c$ in the right-hand side $\alpha_i$ of $A_i$. Such a situation can only occur, if 
$A_i \in  N_0$ and $\alpha_i$ is of the form $vc$ or $uBvc$ for sequences of handles $u$ and $v$, where 
$v= v'h$ for a possibly empty sequence of handles $v'$ (see (HG\ref{lg 3})).
Then leaf compression merges the constant $c$ into the $f$ from the handle $h$. 
There are two cases: If all $w_k$ (which are chains) are empty and all $\gamma_k$ are constants from $\letters_0$,
then the resulting tree after leaf compression is a constant and no problem arises. Otherwise, we obtain
a tree of the form $f'(w'_1\gamma'_1, \ldots, w'_{\ell'}\gamma'_{\ell'})$, where every $w'_k$ is a chain, and every
$\gamma'_k$ is either a constant or a nonterminal of rank $0$. We must have $\ell' > 0$ (otherwise, this is in fact the first case).
Therefore, $f'(w'_1\gamma'_1, \ldots, w'_{\ell'}\gamma'_{\ell'})$ can be written (in several ways) as a handle, followed by 
(a possibly empty) chain, followed by a constant or a nonterminal of rank $0$. For instance, we can
write the rule for $A_i$ as 
$A_i \to v' f'(y, w'_2 \gamma'_2, \ldots, w'_{\ell'}\gamma'_{\ell'}) w'_1\gamma'_1$ or 
$A_i \to u B v' f'(y, w'_2 \gamma'_2, \ldots, w'_{\ell'}\gamma'_{\ell'}) w'_1\gamma'_1$
(depending on the form of the original rule for $A_i$).
This rule has one of the forms from (HG\ref{lg 3}), which concludes the proof.
Note that we possibly add a second nonterminal to the right-hand side of $A_i \in N_0$ 
in the second case (as $\gamma_1'$ can be a non-terminal), which is allowed in (HG\ref{lg 3}).
\qedhere
\end{proof}

If there are crossing \parentchild s, then we uncross them all by a generalisation of the \algpop{} procedure.
Observe that in some sense we already have a partition: We want to pop up letters from $\letters_0$
and pop down letters from $\letters_{\geq 1}$.
The latter requires some generalisation, becasue when we pop down a letter, it may have rank greater than $1$
and so we need to in fact pop a whole handle.
This adds new nonterminals to \grammar{} as well as a large number of new letters and hence a large amount of credit, so we need to be careful.
There are two crucial details:
\begin{itemize}
	\item When we pop down a whole handle $h=f(t_1,\ldots,t_k,y,t_{k+1},\ldots,t_\ell)$, we add to the set
	$\widetilde{N}_0$ fresh nonterminals
	 for all trees $t_i$ that are non-constants, replace these $t_i$ in $h$ by their corresponding nonterminals
	 and then pop down the resulting handle.
	In this way on one hand we keep the issued credit small and on the other no new occurrence of nonterminals from $N_0 \cup N_1$ are created.
\item We do not pop down a handle from every nonterminal, but do it only when it is needed, i.e., 
	if for $A_i \in N_1$ one of the cases \FCref{2} or \FCref{3} holds. 
	This allows preserving \GRref{5}.
	Note that when the last symbol in the rule for $A_i$ is not a handle but another nonterminal,
	this might cause a need for recursive popping.
	So we perform the whole popping down in a depth-first-search style.
\end{itemize}
Our generalised popping procedure is called \algremchild{} (Algorithm~\ref{code-algremchild}) and is shown
in Figure~\ref{fig:uncrleaf}.

\begin{figure}
	\centering
		\includegraphics{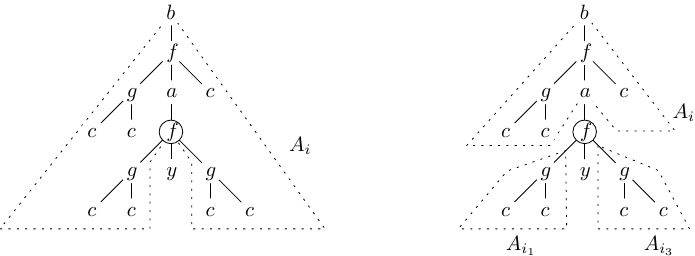}
	\caption{Popping down during uncrossing a crossing parent-leaf pair.  The popped node is encircled, and dotted lines enclose patterns generated by nonterminals.}
	\label{fig:uncrleaf}
\end{figure}

\begin{algorithm}[t] \normalsize 
  \caption{\label{code-algremchild}$\algremchild(\letters_{\geq 1},\letters_0, \grammar)$: uncrossing \parentchild s}
  \begin{algorithmic}[1]
	\For{$i \gets 1 \twodots m-1$} \Comment{popping up letters from $\letters_0$} \label{no more constants}
		\If{$\alpha_i = a \in \letters_0$}
			\State replace each $A_i$ in the right-hand sides by $a$ \label{replaced by a constant}
		\EndIf
	\EndFor
	\For{$i \gets m-1 \twodots 1$} 
		\If{$A_i(a)$ with $a \in \letters_0$ occurs in some rule}  
			\State mark $A_i$ \Comment{we need to pop down a handle from $A_i$}
		\EndIf
		\If{$A_i$ is marked and $\alpha_i$ ends with a nonterminal $A_j$}
				\State mark $A_j$	\Comment{we need to pop down a handle from $A_j$ as well}
			
		\EndIf
	\EndFor
	\For{$i \gets 1 \twodots m-1$} 
		\If{$A_i$ is marked} \Comment{we want to pop down a handle from $A_i$}
			\State let $\alpha_i$ end with handle
			$f(t_1,\ldots,t_k,y,t_{k+1},\ldots,t_\ell)$  \label{popping from marked} \Comment{$\alpha_i$ must end with a handle}
			\State remove this handle from $\alpha_i$ \label{removing handle}
			\For{$j \gets 1 \twodots \ell$}
			  \If{$t_j \notin \letters_0$} \Comment{i.e., it is not a constant}
				\State create a rule $A_{i_j} \to t_j$ for a fresh nonterminal $A_{i_j}$ \label{new rule}
				\State add $A_{i_j}$ to $\widetilde{N}_0$  \label{add-to-N_0-tilde}
				\State $\gamma_j := A_{i_j}$
		         \Else \State $\gamma_j := t_j$
		         \EndIf   
			\EndFor
			\State replace each $A_i(t)$ in a right-hand side by $A_i(f(\gamma_1,\ldots,\gamma_k,t,\gamma_{k+1},\ldots,\gamma_\ell))$ \label{new nonterminals}
		\EndIf
	\EndFor
  \end{algorithmic}
\end{algorithm}

\begin{lemma}
\label{lem: uncrossing children}
Let \grammar{} be a handle grammar and $\grammar' = \algremchild(\letters_{\geq 1},\letters_0,\grammar)$.
Then, the following hold:
\begin{itemize}
\item $\eval(A_m) = \eval(A'_m)$ and hence $\eval(\grammar) = \eval(\grammar')$.
\item If \grammar{} satisfies \GRrefall{}, then so does $\grammar'$.
\item The grammar $\grammar'$ has at most the same number of occurrences of nonterminals of rank $1$ as $\grammar$.
\item The grammar $\grammar'$ has no crossing  \parentchild.
\item During the computation of $\grammar'$ from \grammar{} at 
most one new occurrence of a letter is introduced for each occurrence of a nonterminal of rank $0$ and
at most $r$ new occurrences of letters are introduced for each 
occurrence of a nonterminal of rank $1$. In particular, if \grammar{} satisfies \GRref{3}--\GRref{4}
then at most $2\grammarsizeone r + 2 \grammarsizezero + 2 \grammarsizetilde$ units of credit are issued.
\end{itemize}
\end{lemma}

\begin{proof}
The identity $\eval(A_m) = \eval(A'_m)$ follows as for $\algpop$.
Next, we show that  \GRrefall{} are preserved: so, assume that \grammar{} satisfies \GRrefall.
Replacing nonterminals by constants and popping down handles cannot turn a handle grammar into one that is not a handle grammar,
so \GRref{1} is preserved.
The number of nonterminals in $N_0$ and $N_1$ does not increase, so \GRref{2} also holds.
Concerning \GRref{3}, observe that no new occurrences of nonterminals from $N_1$ are produced and
that new occurrences of nonterminals from $N_0$ can be created only in line~\ref{new rule},
when a rule $A_{i_j} \to t_j$ is added to \grammar{} ($t_j$ may end with a nonterminal from $N_0$).
However, immediately before, in line~\ref{removing handle}, we removed one occurrence of  $t_j$ from \grammar,
so the total count is the same. Hence \GRref{3} holds.

The rules for the new nonterminals
 $A_{i_j} \in \widetilde{N}_0$ that are added in line \ref{add-to-N_0-tilde} are of the form $A_{i_j} \to t_j$, where $f(t_1,\ldots,t_k,y,t_{k+1},\ldots,t_\ell)$ was a handle.
 So, by the definition of a handle, every $t_j$ is either of the form $wc$ or $wA_k$,
where $w$ is a string of unary letters, $c$ a constant, and $A_k \in N_0 \cup \widetilde{N}_0$.
Hence, the rule for  $A_{i_j}$ is of the form required in \GRref{5} and thus \GRref{5} is preserved.

It remains to show \GRref{4}, i.e., the bound on the number of occurrences of nonterminals from $\widetilde{N}_0$,
which is the only non-trivial task.
When we remove the handle $f(t_1,\ldots,t_k,y,t_{k+1},\ldots,t_\ell)$ from the rule for $A_i$ and
introduce new nonterminals $A_{i_1}$, \ldots, $A_{i_\ell}$ then we say that $A_i$ \emph{owns}
those new nonterminals (note that $A_i \in N_1$).\footnote{Some $t_j$ might be constants and
are not replaced by new nonterminals $A_{i_j}$. For notational simplicity we assume here that no $t_j$ is a constant, which in some sense is the worst case.}
Furthermore, when we replace an occurrence of $A_i(t)$ in a right-hand side by $A_i(f(A_{i_1},\ldots,A_{i_k},t,A_{i_{k+1}},\ldots,A_{i_\ell}))$ in line~\ref{new nonterminals},
those new occurrences of $A_{i_1}$, \ldots, $A_{i_\ell}$ are \emph{owned} by this particular occurrence of $A_i$.
If the owning nonterminal (or its occurrence) is later removed from the grammar,
the owned (occurrences of) nonterminals  get \emph{disowned} and they remain so till they get removed.

The crucial technical claim is that one occurrence of a nonterminal owns at most $r-1$ occurrences of nonterminals,
here stated in a slightly stronger form:
\begin{clm}
	\label{clm: at most r-1}
When an occurrence of $A_i \in N_1$ creates new occurrences of nonterminals (in line~\ref{new nonterminals}) from $\widetilde{N_0}$, then right before
it does not own any occurrences of other nonterminals from $\widetilde{N_0}$.
\end{clm}
This is shown in a series of simpler claims.

\begin{clm}
	\label{clm: the same owned appearances}
For a fixed nonterminal $A_i \in N_1$, every occurrence of $A_i$ owns occurrences of the same nonterminals $A_{i_1}, \ldots, A_{i_\ell}$.
\end{clm}
This is obvious: We assign occurrences of the same nonterminals $A_{i_1}, \ldots, A_{i_\ell}$ to each occurrence of $A_i$
in line~\ref{new nonterminals} and the only way that such an occurrence ceases to exist is when $A_{i_j}$ is replaced with a constant.
But this happens for all occurrences of $A_{i_j}$ at the same time.

\medskip

\begin{figure}
	\centering
		\includegraphics{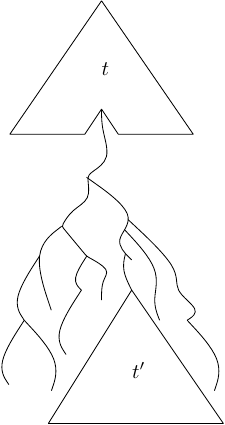}
	\caption{The context $t$ dominates the tree $t'$.}
	\label{fig:domination}
\end{figure}

In order to formulate the next claim, we need some notation:  we say that an occurrence of a subcontext $t(y)$ of \mytree{} \emph{dominates} an occurrence of the
subtree $t'$ of \mytree,
if \mytree{} can be written as $\mytree = C_1(t(C_2(t')))$, where $t$ and $t'$ refer here to the specific occurrences of $t$ and $t'$, respectively,
see Figure~\ref{fig:domination} for an illustration.
\begin{clm}
	\label{clm: domination prevails}
When $A_i$ owns $A_{i_j}$ then each subcontext generated by $A_i$ in \mytree{}
dominates a subtree generated by $A_{i_j}$.
\end{clm}
This is true right after the introduction of an owned nonterminal $A_{i_j}$:
Each occurrence of $A_i$ is replaced by $A_i(f(A_{i_1},\ldots,A_{i_k},t,A_{i_{k+1}},\ldots,A_{i_\ell}))$
and this occurrence of $A_i$ owns the occurrence of $A_{i_j}$ in this particular
$f(A_{i_1},\ldots,A_{i_k},t,A_{i_{k+1}},\ldots,A_{i_\ell})$.
What can change? Compression of letters does not affect  dominance, as we always compress subtrees that are either completely within
$\eval(A_i)$ or completely outside $\eval(A_i)$ and the same applies to each $A_{i_j}$.
When popping up from  $A_{i_j}$ then the new tree generated by this occurrence of $A_{i_j}$ is a subtree of the previous one,
so the dominance is not affected.
When popping up or popping down from  $A_i$, then the new context is a subcontext of the previous one,
so  dominance is also not affected (assuming that $A_i$ exists afterwards).
Hence the claim holds.

\begin{clm}
	\label{clm: dominates constant}
When $A_i$ is marked by \algremchild, then $T$ contains a subtree of the form $t(a)$ for a constant symbol $a$,
where the subcontext $t$ is generated by an occurrence of $A_i$.
\end{clm}
If $A_i$ was marked because $A_i(a)$ occurs in some rule then this is obvious,
otherwise it was marked because it is the last nonterminal in the right-hand side of some $A_j$ which is also marked (and $j > i$).
By induction we conclude that \mytree{} contains a subtree of the form $t(a)$ for a constant symbol $a$,
where the subcontext $t$ is generated by an occurrence of $A_j$.
But as $A_i$ is the last symbol in the right-hand side for $A_j$, the same is true for $A_i$.

\medskip
Getting back to the proof of Claim~\ref{clm: at most r-1}, suppose that 
we create new occurrences of the nonterminals $A_{i_1}, \ldots, A_{i_\ell}$ in 
line~\ref{new nonterminals} 
and right before line~\ref{new nonterminals}, $A_i$ already owns a nonterminal $A_q$.
Then $A_i$ must be marked and so by Claim~\ref{clm: dominates constant} we know that
\mytree{} contains a subtree of the form $t(a)$ for a constant $a$,
where  the subcontext $t$ is generated by an occurrence of $A_i$.
By Claim~\ref{clm: the same owned appearances} this occurrence of $A_i$
owns an occurrence of $A_q$.
Then by Claim~\ref{clm: domination prevails} $A_q$ must produce the constant $a$.
But this is not possible, since in line~\ref{no more constants} we eliminate all nonterminals
that generate constants, and 
there is no way to introduce a  nonterminal that produces a constant.
So, we derived a contradiction and thus Claim~\ref{clm: at most r-1} holds.
\medskip

Using Claim~\ref{clm: at most r-1} we can show the bound from \GRref{4} on the number of occurrences of nonterminals from $\widetilde{N_0}$.
The bound clearly holds as long as there are no disowned nonterminals:
Each occurrence of a nonterminal from $\widetilde{N_0}$ is owned by an occurrence of a nonterminal from $N_1$.
By Claim~\ref{clm: at most r-1} at most $r-1$ of them are owned by an occurrence of a nonterminal from $N_1$,
so there are at most $\grammarsizeone (r-1)$ such occurrence, where $\grammarsizeone \leq n_1$ by \GRref{3}.
As the second subclaim we show that there are at most $(n_1 - \grammarsizeone)(r-1)$ disowned occurrences of nonterminals from $\widetilde{N_0}$, which 
finally shows \GRref{4}.
This is shown by induction and clearly holds when there are no disowned nonterminals.
By \GRref{3} the number $\grammarsizeone$ of occurrences of nonterminals from $N_1$ never increases.
So it is left to consider what happens,
when a nonterminal gets disowned. Assume that it was owned by $A_i \in N_1$ and now this $A_i$ is removed from \grammar.
Thus $\grammarsizeone$ decreases by $1$ and we know, from Claim~\ref{clm: at most r-1} that $A_i$ owns at most $r-1$ nonterminals,
which yields the claim.

Concerning the occurrences of new letters and their credit: We introduce at most one letter 
for each occurrence of a nonterminal of rank $0$ (i.e., from $N_0 \cup \widetilde{N_0}$) during popping up
and at most $r$ letters for each occurrence of a nonterminal of rank $1$ (i.e., from $N_1$) during popping down.
As \GRref{3}--\GRref{4} give a bound on number of occurrences of nonterminals,
assuming them yields that at most 
$2r \grammarsizeone + 2 \grammarsizezero + 2 \grammarsizetilde$ units of credit are issued.

Finally, we show that $\grammar' = \algremchild(\letters_{\geq 1},\letters_0,\grammar)$ does not contain crossing \parentchild s.
Observe that after the loop in line~\ref{no more constants} there are no nonterminals $A_i$ such that $\eval(A_i) \in \letters_0$.
Afterwards, we cannot create a nonterminal that evaluates to a constant in $F_0$.
Hence there can be no crossing \parentchild{} that satisfies \FCref{1} or \FCref{3}.

In order to rule out \FCref{2},
we proceed with a series of claims. We first claim that if $A_i$ is marked then in line~\ref{popping from marked}
indeed the last symbol in the rule $A_i \to \alpha_i$ is a handle (so it can be removed in line \ref{removing handle}).
Suppose this is wrong and let $A_i$ be the nonterminal with the smallest $i$ for which this does not hold.
As a first technical step observe that if some $A_j$ is marked then $A_j \in N_1$:
Indeed, if $A_j(a)$ occurs in a rule of \grammar{} then clearly $A_j \in N_1$
and if $A_k$ is the last nonterminal in the rule for $A_j \in N_1$ then $A_k \in N_1$ as well.
Hence $A_i \in N_1$. So the last symbol in the rule for $A_i$ is either a nonterminal $A_j \in N_1$ with $j < i$ or a handle.
In the latter case we are done as there is no way to remove this handle from the rule for $A_i$
before $A_i$ is considered in line~\ref{popping from marked}.
In the former case observe that $A_j$ is also marked. By the minimality of $i$,
when $A_j$ is considered in line~\ref{popping from marked}, it ends with a handle 
$f(t_1,\ldots,t_k,y,t_{k+1},\ldots,t_\ell)$.
Hence the terminating $A_j(y)$ in the right-hand side for $A_i$ is replaced by
$A_j(f(\gamma_1,\ldots,\gamma_k,y,\gamma_{k+1},\ldots,\gamma_\ell))$ 
and there is no way to remove the handle
$f(\gamma_1,\ldots,\gamma_k,y,\gamma_{k+1},\ldots,\gamma_\ell)$
from the end until $A_i$ is considered in line \ref{removing handle}.

Finally, suppose that there is a crossing \parentchild{}  because of the situation \FCref{2}
after $\algremchild$, i.e., $A_i(a)$ occurs in some right-hand side and the last letter of $\eval(A_i)$ is $f$.
Then in particular we did not pop down a letter from $A_i$, so by the earlier claim $A_i$ was not marked.
But $A_i(a)$ occurs in the right-hand already after the loop in line~\ref{no more constants}, because
$a$ cannot be introduced after the loop.
So we should have marked $A_i$, which is a contradiction.
\qedhere
\end{proof}

So in case of leaf compression we can proceed as in the case of $(\lettersup,\lettersdown)$-compression
and chain compression: we first uncross all \parentchild{}s and then compress each right-hand side independently.

\begin{lemma}
\label{lem: representation cost child}
Let \grammar{} be a handle grammar and
$\grammar' = \algtreechildcomp(\letters_{\geq 1},\letters_0,\algremchild(\letters_{\geq 1},\letters_0,\grammar))$.
Then, the following hold:
\begin{itemize}
\item $\eval(\grammar') =  \algtreechildcomp(\letters_{\geq_1},\letters_0,\eval(\grammar))$
\item The grammar $\grammar'$ has at most the same number of occurrences of nonterminals of rank $1$ as $\grammar$.
\item  If \grammar{} satisfy \GRrefall{}, then so does $\grammar'$.
\item  During the computation of $\grammar'$ from \grammar{} at 
most one new occurrence of a letter is introduced for each occurrence of a nonterminal of rank $0$ and
at most $r$ new occurrences of letters are introduced for each 
occurrence of a nonterminal of rank $1$.
In particular, if \grammar{} satisfies \GRref{3}--\GRref{4}
then at most $2\grammarsizeone r + 2 \grammarsizezero + 2 \grammarsizetilde$ units of credit are issued.
\item The issued credit and the credit released by $\algtreechildcomp$ cover the representation cost of fresh letters as well as 
their credit in $\grammar'$.
\end{itemize}
\end{lemma}

\begin{proof}
This is a combination of Lemma~\ref{lem: no crossing parent node child compression} and~\ref{lem: uncrossing children}:
By Lemma~\ref{lem: uncrossing children},
$\algremchild$ eliminates all  crossing \parentchild{}s 
and introduces at most one letter for each occurrence of nonterminal of rank $0$
and at most $r$ letters for each occurrence of a nonterminal of rank $1$.
The \GRref{3}--\GRref{4} give a bound on number of occurrences of nonterminals,
which leads to bound $2\grammarsizeone r + 2 \grammarsizezero + 2 \grammarsizetilde$ of credit.
Then by Lemma~\ref{lem: no crossing parent node child compression}, $\algtreechildcomp$
ensures that $\eval(\grammar') = \algtreechildcomp(\letters_{\geq_1},\letters_0,\eval(\grammar))$.
Furthermore the credit of the new letters and the representation cost is covered by the credit released by $\algtreechildcomp$.
Finally, both subprocedures preserve \GRrefall{}
and do not introduce occurrences of rank $1$ nonterminals.
\end{proof}

By Lemma~\ref{lem:number of phases} we apply at most $\Ocomp(\log n)$ many leaf compressions to $\grammar$.
By \GRref{3}--\GRref{4} we have $\grammarsizezero \leq n_0$,
$\grammarsizetilde \leq n_1(r-1)$ and $\grammarsizeone \leq n_1$. Hence, we get:

\begin{corollary}
\label{cor: credit child compression}
Leaf compression issues in total at most  $\Ocomp((n_0 + n_1r) \log n)$ units of credit during all
 modifications of \grammar.
\end{corollary}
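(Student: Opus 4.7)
The strategy is simply to add up the per-phase credit bound from Lemma~\ref{lem: representation cost child} over all phases of \algmain. That lemma already accounts for all the credit that leaf compression issues in a single phase, so the task here is purely bookkeeping: bound the per-phase quantity uniformly, then multiply by the number of phases.

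First, recall that by Lemma~\ref{lem: representation cost child} each invocation of the composite procedure $\algtreechildcomp \circ \algremchild$ on the current grammar \grammar{} issues at most $2\grammarsizeone r + 2\grammarsizezero + 2\grammarsizetilde$ units of credit. I would bound each of these three quantities by means of the invariants. By \GRref{3} the counts $\grammarsizezero$ and $\grammarsizeone$ never grow beyond their initial values $n_0$ and $n_1$, and by \GRref{4} we have $\grammarsizetilde \leq n_1(r-1)$. Plugging these bounds in, the credit issued in one phase is at most
\[
2 n_1 r + 2 n_0 + 2 n_1 (r-1) \;=\; \Ocomp(n_0 + n_1 r).
\]

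Next, I would invoke Lemma~\ref{lem:number of phases}, which tells us that $|\mytree|$ shrinks by a constant factor per phase and hence that \algmain{} performs only $\Ocomp(\log n)$ phases. Since each phase contains exactly one leaf compression step on \grammar, summing the per-phase credit bound gives
\[
\Ocomp(\log n) \cdot \Ocomp(n_0 + n_1 r) \;=\; \Ocomp((n_0 + n_1 r)\log n),
\]
as claimed.

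There is no real obstacle: the difficult work (maintaining the invariants \GRrefall{} and showing the per-phase credit bound, in particular the subtle accounting of owned and disowned occurrences of nonterminals in $\widetilde{N_0}$) has already been carried out in Lemma~\ref{lem: uncrossing children} and Lemma~\ref{lem: representation cost child}. The only thing to be careful about is not to forget the factor $r-1$ from the $\grammarsizetilde$ term, which is exactly what produces the $n_1 r$ contribution inside the logarithmic factor.
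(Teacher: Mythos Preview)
Your proposal is correct and follows exactly the same route as the paper: the paper's argument (given in the sentence immediately preceding the corollary rather than in a proof environment) also bounds the per-phase credit $2\grammarsizeone r + 2\grammarsizezero + 2\grammarsizetilde$ via \GRref{3}--\GRref{4} by $\Ocomp(n_0 + n_1 r)$ and then multiplies by the $\Ocomp(\log n)$ phases from Lemma~\ref{lem:number of phases}.
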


From Corollaries~\ref{cor: credit pair compression}, \ref{cor: credit chain compression}, and \ref{cor: credit child compression}
and the observation that the initial credit is $\Ocomp(|\grammar|) \leq \Ocomp(\grammarsize r)$ we get:

\begin{corollary}
\label{cor: credit all compressions}
The whole credit issued during all modifications of \grammar{} is in $\Ocomp(\grammarsize r + (n_0 + n_1r) \log n)$.
\end{corollary}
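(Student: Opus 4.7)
The plan is to simply add up the contributions from the three sources of credit. First I would account for the initial credit assigned to \grammar{}: by Lemma~\ref{lem: monadic grammar}, the starting handle grammar has size $\Ocomp(\grammarsize r)$, and since we attach exactly two units of credit to each occurrence of a letter from $\alphabet$ in the right-hand sides, the total initial credit is $\Ocomp(\grammarsize r)$.

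Next, I would invoke the three corollaries that already bound the credit issued by each compression type across all phases. Specifically, Corollary~\ref{cor: credit pair compression} bounds the credit issued by all \paircompression{} steps by $\Ocomp((n_0 + n_1 r)\log n)$; Corollary~\ref{cor: credit chain compression} gives the same bound for all chain compression steps; and Corollary~\ref{cor: credit child compression} gives the same bound for all leaf compression steps. Since by Lemma~\ref{lem:number of phases} there are only $\Ocomp(\log n)$ phases, no additional factor enters here.

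Adding the initial credit to the three contributions yields a total of
\[
\Ocomp(\grammarsize r) + 3 \cdot \Ocomp\bigl((n_0 + n_1 r)\log n\bigr) \;=\; \Ocomp\bigl(\grammarsize r + (n_0 + n_1 r)\log n\bigr),
\]
which is exactly the claimed bound. There is essentially no obstacle here, since the arithmetic is trivial once the three corollaries are in hand; the statement is stated as a corollary precisely because it is a direct aggregation. The only point worth emphasizing in the write-up is that this total credit is an upper bound on the representation cost of all fresh letters whose cost is covered by credit (namely, those produced by \paircompression{} and leaf compression, and the credit portion of the cost for the $a_\ell$ letters), whereas the additional representation cost for chain compression not covered by credit is postponed to Section~\ref{subsec: chain cost}.
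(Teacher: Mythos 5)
Your proof matches the paper's argument exactly: the paper derives the corollary by combining Corollaries~\ref{cor: credit pair compression}, \ref{cor: credit chain compression}, and \ref{cor: credit child compression} with the observation that the initial credit on \grammar{} is $\Ocomp(|\grammar|) \leq \Ocomp(\grammarsize r)$ by Lemma~\ref{lem: monadic grammar}. Your closing caveat about the chain-compression representation cost being handled separately is accurate and consistent with the paper's Section~\ref{subsec: chain cost}.
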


\subsection{Calculating the cost of representing letters in chain compression}
\label{subsec: chain cost}

The issued credit is enough to pay the two units of credit  for every letter introduced during  popping,
whereas the released credit covers the cost of representing the new letters introduced by $(\lettersup,\lettersdown)$-compression
and leaf compression.
However, the released credit \emph{does not} cover the cost of representation for letters created during  chain compression.
The appropriate analysis is presented in this section. The overall plan is as follows:
Firstly, we define a scheme of representing letters introduced by chain compression based on the grammar \grammar{}
and the way \grammar{} is changed by \algtreechaincomp{} (the \emph{\grammar{}-based representation}).
Then, we show that for this scheme the representation cost is bounded by $\Ocomp((n_0 + n_1r) \log n)$.
Lastly, it is proved that the actual representation cost for the letters introduced by chain compression during the 
run of  \algmain{}  (as defined in Lemma~\ref{lem: cost of powers} in Section~\ref{sec chain comp}, called the \emph{\algmain-based representation} later on) 
is smaller than the \grammar-based one.
Hence, it is bounded by $\Ocomp((n_0 + n_1r) \log n)$ as well.

\subsubsection{\grammar{}-based representation}
We now define the \grammar-based representation,
which is different from the representation
actually used by \algmain.
As noted, \grammar{} is a tree grammar obtained by modifying the optimal grammar for the input tree.
At each stage, it produces the tree currently stored by \algmain.
The intuition is as follows:
While \grammar{} can produce patterns of the form $a^\ell$, which have exponential length in $|\grammar|$,
most patterns of this form are obtained by concatenating explicit $a$-chains to a shorter pattern.
In such a case, the credit that is released from the explicit occurrences of $a$
can be used to pay for the representation cost.
This does not apply when the new pattern is obtained by concatenating two patterns 
(popped from nonterminals) inside a rule.
In such a case we represent the pattern using the binary expansion at the cost of $\Ocomp(\log \ell)$.
However, this cannot happen too often: When patterns of length $p_1, p_2, \ldots, p_\ell$
are compressed and the obtained letters are represented
(at the cost of $\Ocomp(\log \prod _{i=1}^\ell  p_i )$),
then it can be shown that the size of the derived context in the input tree is at least $\prod_{i=1}^\ell p_i$, which is at most $n$.
Thus $\sum_{i=1}^\ell \log p_i = \Ocomp(\log \prod_{i=1}^\ell p_i) = \Ocomp(\log n)$; this is formally shown later on.

Let us fix a unary letter $a$ whose chains are compressed and represented.
The \grammar-based representation
creates a new letter for each chain pattern from $a^+$  that is either popped from a right-hand side during \algremblocks{}
or is in a rule at the end of \algremblocks{} (i.e., after popping but before the actual replacement in \algtreechaincomp).
Some of those chain patterns are designated as powers:
fix a rule that is considered by \algremblocks.
If the $a$-suffix popped from the first nonterminal
and the $a$-prefix popped from the second nonterminal
are part of one $a$-pattern (obtained after those poppings),
then this $a$-pattern is a \emph{power}.
Note that this power may either stay in this rule or be popped (if one of the nonterminals is removed from the rule).
For each chain pattern $a^{\ell}$ that is not a power we can identify another represented pattern $a^k$
(where we allow $k = 0$ here) such that $a^\ell$ is obtained by concatenating explicit occurrences of $a$ from some right-hand side to $a^k$.

Note that for a fixed length $\ell$
there may be many different occurrences of the pattern $a^\ell$ that are represented.
In particular, some of them may be powers and some not.
We arbitrarily choose one of those occurrences
and the way it is created and represent $a^\ell$ (once) according to this choice.

\begin{example}
	\label{ex:powers}
	Consider the following grammar, in which only letters of arity $1$ are used and they are written in a string notation:
	$A_0 \to a$, $A_1 \to bc aA_0$, $A_2 \to A_0abc$,
	$A_3 \to A_0aA_0$, $A_4 \to A_1 aA_0$, $A_5 \to A_0aA_2$, $A_6 \to A_1aA_2$, $A_7 \to baaA_3$.
	Let us consider $a$-chains. 
	The $a$-chain $a$ in the right-hand side of $A_0$ is not a power;
	it is obtained by concatenating an explicit occurrence of $a$ to $\epsilon$.
	After replacing $A_0$ by $a$ in all right-hand sides, we obtain the rules
	$A_1 \to bc a \underline{a}$, $A_2 \to \underline{a}abc$,
	$A_3 \to \underline{a}a\underline{a}$, $A_4 \to A_1 a\underline{a}$, $A_5 \to \underline{a}aA_2$, $A_6 \to A_1aA_2$, $A_7 \to baaA_3$
	(new occurrences of $a$ in right-hand sides are underlined). The occurrences of the $a$-chain $a^2$ 
	in the right-hand sides for $A_1$ and $A_2$ are not powers, since they are obtained by concatenating an 
	explicit $a$ to the $a$ popped from    $A_0$. On the other hand, the occurrence of the $a$-chain $a^3$ in the rule for 
	$A_3$ is a power, since it is obtained by concatenating a popped $a$, and explicit $a$, and a popped $a$. 
	After popping $a$'s from $A_1$, $A_2$, and $A_3$ we obtain the following rules for $A_4$, $A_5$, $A_6$, $A_7$:
	$A_4 \to A_1 \underline{aa} a\underline{a}$, $A_5 \to \underline{a}a \underline{aa}  A_2$, $A_6 \to A_1 \underline{aa}  a \underline{aa} A_2$,   
	$A_7 \to baa \underline{aaa}$. The occurrences of the $a$-chain $a^4$ (resp., $a^5$) in the rules for $A_4$ and $A_5$ (resp., $A_6$)
	are powers, whereas the occurrence of $a^5$ in the rule for $A_7$ is not a power. Note that the $a$-chain $a^5$ can be either represented
	as a power or as a non-power.
	\hfill $\diamond$
\end{example}

We represent chain patterns as follows:
\begin{enumerate}[(a)]
	\item For a chain pattern $a^\ell$ that is a power we represent $a_\ell$ using the binary expansion,
	which costs $\Ocomp(1 + \log \ell)$.
	\item A chain pattern $a^\ell$ that is not a power is obtained by concatenating $\ell - k \geq 1$ explicit occurrences of $a$ from a right-hand side to $a^k$ (recall that
	we fixed some choice in this case), in particular $a_k$ is represented.
	In this case we 
	represent $a_\ell$ as $a_k a^{\ell-k}$. The representation cost is $\ell - k + 1$, which
	is covered by the $2(\ell-k) \geq \ell - k + 1$ units of credit released from the $\ell-k \geq 1$ many explicit occurrences of $a$.
	Recall that the credit for occurrences of a fresh letter $a_\ell$ is covered by the issued credit,
	see Lemma~\ref{lem:blocksc}. Hence the released credit is still available.
\end{enumerate}
We refer to the cost in (a) as the \emph{cost of representing a power}.
As remarked above, the cost in (b) is covered by the released credit.
The cost in (a) is redirected towards the rule in which this power was created.
Note that this needs to be a rule for a nonterminal from $N_0 \cup N_1$,
as the right-hand side of the rule needs to have two nonterminals to generate a power
and by \GRref{5} the right-hand sides for nonterminals from $\widetilde{N_0}$ have at most one nonterminal.
In Section~\ref{subsubsec: grammar based} we show that the total cost redirected towards a rule 
during all modifications of \grammar{} is at most $\Ocomp(\log n)$.
Hence, the total cost in (b) is $\Ocomp((n_0 + n_1)\log n)$.

\begin{example}
In Example~\ref{ex:powers} we can represent $a^5$ as a power, which yields the rules $a_5 \to a_4a$, $a_4 \to a_2 a_2$ and $a_2 \to aa$
corresponding to the binary notation of $5$, or as a non-power. For the latter choice we get the rule $a_4 \to a_3aa$, where $a_3$ is represented elsewhere. \hfill $\diamond$
\end{example}

\subsubsection{Cost of \grammar{}-based representation}
\label{subsubsec: grammar based}
We now estimate the cost of representing the letters introduced during chain
compression described in the previous section.
The idea is that if we redirect towards $A_i$ the cost of representing powers of length
$p_1$, $p_2$, \ldots, $p_\ell$ (which have total representation cost $\Ocomp(\sum_{i=1}^\ell (1 + \log p_i)) = \Ocomp(\log (\prod_{i=1}^\ell p_i))$)
during all chain compression steps,
then in the initial grammar, $A_i$ generates a subpattern of the input tree of  size at least $p_1 \cdot p_2 \cdot \cdots \cdot p_\ell \leq n$
and so the total cost of representing powers is at most $\log n$ per nonterminal from $N_0 \cup N_1$.
This is formalised in the lemma below.

\begin{lemma}
\label{clm:cost from power to rule}
The total cost of representing powers  charged towards a single rule for a nonterminal from $N_0 \cup N_1$ is $\Ocomp(\log n)$.
\end{lemma}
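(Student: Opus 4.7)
Fix $A_i \in N_0 \cup N_1$ and enumerate the powers charged to the rule for $A_i$ across all phases as $a^{p_1}, a^{p_2}, \ldots, a^{p_\ell}$ in chronological order. The first step is to observe that at most one power can be charged to $A_i$'s rule per phase. This follows from invariant \GRref{1} that \grammar{} remains a handle grammar throughout, together with \HGref{2}--\HGref{3}: the rule for $A_i$ contains at most two top-level nonterminals, and forming a power requires exactly two of them---one popping an $a^+$-suffix and one popping an $a^+$-prefix, with the intervening top-level symbols lying in $a^*$. Combined with Lemma~\ref{lem:number of phases}, which bounds the number of phases by $\Ocomp(\log n)$, this yields $\ell = \Ocomp(\log n)$, so the additive part $\Ocomp(\ell)$ of the total representation cost $\sum_j \Ocomp(1 + \log p_j)$ is already $\Ocomp(\log n)$.

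What remains is to bound $\sum_j \log p_j = \log \prod_j p_j$ by $\Ocomp(\log n)$. To this end I assign a \emph{weight} to each letter of the current alphabet: original letters have weight $1$, and a letter introduced by chain, pair, or leaf compression has weight equal to the sum of the weights of the letters it absorbed. The sum of the weights of the letters in the current value of $A_i$, expanded through the current grammar, equals the size of the subpattern $A_i$ would derive if fully expanded back into original letters; this total is invariant under all compression operations and under popping at nonterminals other than $A_i$, is non-increasing overall, and is initially at most $|\eval(A_i)|$ in the initial handle grammar, which is at most $n$. The central inductive claim I plan to establish is that the single letter produced by the $t$-th chain compression at $A_i$'s power position carries weight at least $\prod_{s \leq t} p_s$. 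Since no single letter's weight can exceed the total weight attached to $A_i$, this yields $\prod_{j=1}^{\ell} p_j \leq n$ and hence $\sum_j \log p_j \leq \log n$.

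The hard part is the inductive step: I must argue that the chain $a^{p_{t+1}}$ compressed at phase $t+1$ between the two top-level nonterminals of $A_i$'s rule necessarily contains the letter produced by phase $t$'s chain compression---or a letter descending from it through the intervening pair and leaf compressions. Intuitively this holds because, in the handle form, the entire ``middle'' region between the two top-level nonterminals of $A_i$ at the start of phase $t+1$'s chain compression consists of phase $t$'s compressed letter together with whatever was freshly popped, so any chain formed there spans phase $t$'s letter. To make this rigorous I will trace the action of \algpop, \algremblocks, and \algremchild{} together with the parallel compressions \algtreechaincomp, \algtreepaircomp, and \algtreechildcomp{} on the two top-level positions of $A_i$'s rule, exploiting the one-way nature of popping---letters only move outward from a nonterminal, never inward---to show that phase $t$'s compressed letter cannot be hidden inside a child nonterminal. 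With that in hand, the minimum weight of the letters in phase $t+1$'s chain is at least $\prod_{s \leq t} p_s$, and the new compressed letter obtains weight at least $p_{t+1} \cdot \prod_{s \leq t} p_s$, completing the induction.
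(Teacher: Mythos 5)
Your plan follows the same core strategy as the paper's proof: assign weights to letters so that compression preserves total weight, observe that the weight of the subtree generated by $A_i$ is bounded by $n$, and show that the consecutive powers telescope multiplicatively, so $\prod_s p_s \leq n$ and hence $\sum_s \log p_s \leq \log n$. Your handling of the additive $\Ocomp(\ell)$ term via the phase count is a valid alternative to the paper's observation that $p_s \geq 2$ makes $1 + \log p_s = \Ocomp(\log p_s)$; either works. However, two points in your outline would need repair before the ``hard part'' goes through.

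First, your inductive invariant is stated as ``between the two top-level nonterminals of $A_i$'s rule,'' which is correct for $A_i \in N_1$ but breaks down for $A_i \in N_0$: there the second nonterminal $A_k$ is not stable across phases --- it can be eliminated (when $\eval(A_k)$ collapses to a constant during leaf compression) and a different nonterminal from $\widetilde{N_0}$ can later appear in that position. The paper handles this by switching, for $N_0$ rules, to the notion of letters \emph{below} $A_j$ (everything in the handles following $A_j$ plus the ending symbol), which is stable as long as $A_j$ remains; you would need the same adjustment, since nothing forces the descendant of the $t$-th compressed letter to lie literally between $A_j$ and whatever the current second nonterminal is. Second, the paper splits off a separate easy case for the (at most one) power whose creation coincides with removing $A_j$ or $A_k$ from the rule; your induction can absorb this, but you should note explicitly that it terminates the chain, so no harm is done.

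A more substantial remark is that the paper's formalization of your key step is noticeably lighter than what you propose. Instead of tracking the specific letter $a^{(t)}_{p_t}$ (or its descendant) through \algpop, \algremblocks, \algremchild, and the three parallel compressions, the paper argues that the \emph{maximum weight of a letter} in the region between $A_j$ and $A_k$ (resp.\ below $A_j$) is monotone non-decreasing while $A_j$ (and for $N_1$ also $A_k$) stay in the rule: pops only add letters to the region, compressions only merge letters within it into heavier ones, and nothing ever removes a letter or turns it into a nonterminal. Since immediately after the $t$-th chain compression that region contains the single letter $a^{(t)}_{p_t}$ of weight $p_t\cdot\weight(a^{(t)})$, and immediately before the $(t+1)$-th it consists entirely of copies of $a^{(t+1)}$, monotonicity gives $\weight(a^{(t+1)}) \geq p_t\cdot\weight(a^{(t)})$ without ever having to name or follow a specific surviving letter. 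If you pursue your version, you should expect the bookkeeping you anticipate (``trace the action of \algpop, \dots'') to essentially rediscover this monotonicity, and it is cleaner to state it directly.
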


\begin{proof}
We first bound the cost  redirected towards a rule for $A_i \in N_1$.
There are two cases:
First, after the creation of a power in the rule $A_i \to u A_j v A_k w$ one of the nonterminals
$A_j$ or $A_k$ is removed from the grammar. But this happens at most once for the rule
(there is no way to reintroduce a nonterminal from $N_1$ to a rule) and the cost of $\Ocomp(\log n)$
of representing the power can be charged to the rule. Note that here the assumption that we consider $A_i \in N_1$
is important: otherwise it could be that the second nonterminal in a right-hand side
is removed and added several times, see the last sentence in the proof of Lemma~\ref{lem: no crossing parent node child compression}.

The second and crucial case is when after the creation of a power
both nonterminals remain in the rule. Fix such a rule $A_i \to u A_j v A_k w$, where $u$, $v$, and $w$ are sequences of handles.
Since we create a power, there is a unary letter $a$ such that $v \in a^*$ and 
$\eval(A_j)$ (respectively,  $\eval(A_k)$) has a suffix (respectively, prefix) from $a^+$,
see Figure~\ref{fig:powers}.

\begin{figure}
	\centering
		\includegraphics{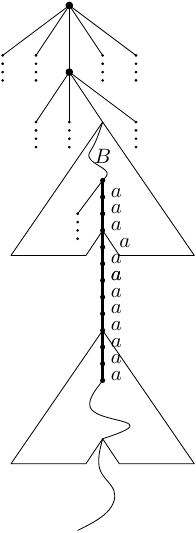}
	\caption{During the creation of a power of $a$ the first nonterminal has an $a$-suffix, while the second
	 one has an $a$-prefix, and the letters between them are all $a$'s.}
	\label{fig:powers}
\end{figure}

Fix this rule and consider all such creations of powers performed in this rule during all modifications of \grammar.
Let the consecutive letters, whose chain patterns are compressed, be
$a^{(1)}$, $a^{(2)}$, \ldots, $a^{(\ell)}$ and their lengths $p_1$, $p_2$, \ldots, $p_\ell$.
Let also $b^{(s+1)}$ be the letter that replaces the chain $(a^{(s)})^{p_s}$;
note that $b^{(s+1)}$ does not need to be $a^{(s+1)}$, as there might have been some other compressions performed on the letter $b^{(s+1)}$.
Then the cost of the representation charged towards this rule is bounded by
\begin{equation}
	\label{eq:representation cost chains}
	\Ocomp(\sum_{s=1}^\ell (1 + \log p_s)) = \Ocomp(\sum_{s=1}^\ell \log p_s) \enspace ,
\end{equation}
as $p_s \geq 2$ for each $1 \leq s \leq \ell$.

Define the \emph{weight} $\weight(a)$ of a letter $a$ as follows: in the input tree each letter has weight $1$.
When we replace $ab$ by $c$, set $\weight(c) = \weight(a) + \weight(b)$.
Similarly, when $a_\ell$ represents $a^\ell$ then set $\weight(a_\ell) = \ell \cdot \weight(a)$,
and when $f'$ represents $f$ with constant-labelled children $a_1,\ldots,a_\ell$, then set $\weight(f') = \weight(f) + \sum_{i = 1}^\ell \weight(a_i)$.
The weight of a tree is defined as the sum of the weights of all node labels.
It is easy to see that in this way $\weight(\eval(A_m)) = n$ is preserved during all modifications of the grammar \grammar.

For a rule $A_i \to u A_j v A_k w$ we say that the letters in handles from $v$ are \emph{between} $A_j$ and $A_k$.
Observe that as long as both $A_j$ and $A_k$ are in the rule, the maximal weight of letters between $A_j$ and $A_k$ cannot decrease:
popping letters and handles from $A_j$ and $A_k$ cannot decrease this maximal weight, and the same is true for a compression step.
Moreover, there is no way to remove a letter that is between $A_j$ and $A_k$ or to change it into a nonterminal.

Now, directly after the $s$-th chain compression the only letter between $A_j$ and $A_k$
is $b^{(s+1)}$ which has weight $p_s \cdot \weight(a^{(s)})$ since it replaces
$(a^{(s)})^{p_s}$.
On the other hand,
right before the $(s+1)$-th chain compression
the sequence between $A_j$ and $A_k$ is $(a^{(s+1)})^{p_{s+1}}$.
Since the maximal weight of a letter between $A_j$ and $A_k$ cannot decrease,
we have
$$
\weight(a^{(s+1)}) \geq \weight(b^{(s+1)}) = p_s \cdot  \weight(a^{(s)}) \enspace .
$$
Since $\weight(a^{(1)}) \geq 1$ it follows that $\weight(b^{(\ell+1)}) \geq \prod_{s=1}^\ell p_s$.
As $\weight(b^{(\ell+1)}) \leq n$ we have
\begin{align*}
n &\geq \prod_{s=1}^\ell p_s \enspace ,
\intertext{and so it can be concluded that}
\log n &\geq \log \left( \prod_{s=1}^\ell p_s \right)\\
		&= \sum_{s=1}^\ell \log p_s \enspace .
\end{align*}
Therefore, the total cost $\Ocomp(\sum_{s=1}^\ell \log p_s)$, as estimated in~\eqref{eq:representation cost chains},
is $\Ocomp(\log n)$.

It is left to describe the differences, when considering  nonterminals from $N_0$.
There are two of them:
\begin{itemize}
	\item 
	 When a power is created in a rule for a nonterminal $A_i \in N_0$, then the rule must contain two nonterminals, i.e.,
	 it must be of the form $A_i \to  u A_j  a^k A_k$ for a unary symbol $a$, and afterwards it is of the same form. In particular we do not have to consider the case when the second 
	 nonterminal $A_k$ is removed from the rule (as $A_k$ is of rank $0$, it cannot be replaced with a chain).
	\item Instead of considering the letters between $A_j$ and $A_k$, we consider letters that are \emph{below} $A_j$:
	In a rule $A_i \to u A_j v A_k$\ or  $A_i \to u A_j vc$, these are the letters that are in handles in $v$ as well as the ending $c$.
\end{itemize}
As before, as long as $A_j$ is in the rule, the maximal weight of letters that are below $A_j$ can only increase
(note that the rule for $A_i$ can switch between the forms $A_i \to u A_j v A_k$ and $A_i \to u A_j vc$ many times,
but this does not affect the claim).

Considering the cost of creating powers:
The representation of the power that is created in the phase when $A_j$ is removed costs at most
$\Ocomp(\log n)$ and there is no way to bring a nonterminal from $N_1$ back to this rule. Hence,  this cost is paid once.
So, it is enough to consider the cost of powers that were created when $A_j$
was still present in the rule.
Let as in the previous case the consecutive letters, whose chain patterns are compressed, be
$a^{(1)}, a^{(2)}, \ldots, a^{(\ell)}$ and let their lengths be $p_1, p_2, \ldots, p_\ell$.
Let also $b^{(s+1)}$ be the letter that replaces the chain $(a^{(s)})^{p_s}$.
It is enough to show that $\weight(a^{(s+1)}) \geq p_s \cdot \weight(a_s)$ as then the rest of the proof
follows as in the case of a nonterminal from $N_1$.

After the $s$-th compression, the right-hand side of $A_i$ has the form $u A_j b^{(s+1)} A_k$.
Before the $(s+1)$-th compression, the right-hand side of $A_i$ has the form
$u' A_j (a^{(s+1)})^{p_{s+1}} A_{k'}$.
By the earlier observation, the maximal weight of letters below by $A_j$ can only increase, hence
$\weight(a^{(s+1)}) \geq \weight(b^{(s+1)}) =
\weight((a^{(s)})^{p_{s}})=
p_s \cdot \weight(a_s)$, as claimed.
\qedhere
\end{proof}

Now, the whole cost of the \grammar-based representation can be calculated:
\begin{corollary} \label{coro:grammar-based cost}
The cost of the \grammar-based representation is $\Ocomp(\grammarsize r + (n_0 + n_1r) \log n)$.
\end{corollary}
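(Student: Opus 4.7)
The plan is to split the total \grammar-based representation cost into the two categories defined in the construction: the cost of representing powers (case (a) above) and the cost of representing non-powers (case (b) above), and bound each separately by combining results that are already in place.

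For the non-power part, each non-power chain pattern $a^\ell = a^k \cdot a^{\ell-k}$ is represented at cost $\ell - k + 1 \leq 2(\ell - k)$, which by construction is paid from the two credit units sitting on each of the $\ell - k$ explicit occurrences of $a$ that are consumed in the right-hand side. Since the credit of the fresh letter $a_\ell$ is separately paid from the \emph{issued} credit (Lemma~\ref{lem:blocksc}), this released credit is really available. Thus the total non-power cost is bounded by the total credit ever present in \grammar, which equals the initial credit $\Ocomp(|\grammar|) = \Ocomp(\grammarsize r)$ plus all credit issued during the modifications. By Corollary~\ref{cor: credit all compressions} this sum is $\Ocomp(\grammarsize r + (n_0 + n_1 r) \log n)$.

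For the power part, recall that a power can only be formed in a right-hand side that contains at least two nonterminals, and by \GRref{5} this forces the host rule to belong to a nonterminal from $N_0 \cup N_1$. Hence every representation cost in case (a) is charged to some rule of $N_0 \cup N_1$. Lemma~\ref{clm:cost from power to rule} gives a uniform bound of $\Ocomp(\log n)$ on the total cost charged to any single such rule, summed over the entire run of \algmain. Since $|N_0| \leq n_0$ and $|N_1| \leq n_1$ by \GRref{2}, the total power cost is $\Ocomp((n_0 + n_1) \log n)$, which is absorbed into the other bound.

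Adding the two contributions yields the stated bound $\Ocomp(\grammarsize r + (n_0 + n_1 r) \log n)$. No genuinely new work is required here; the statement is a bookkeeping synthesis of Corollary~\ref{cor: credit all compressions}, Lemma~\ref{clm:cost from power to rule}, and the invariants \GRrefall{}. The only point that needs a moment of care is checking that the released credit used to pay non-power representations is indeed distinct from the credit used to endow the fresh letters $a_\ell$ with their own two units, but this separation is exactly what Lemma~\ref{lem:blocksc} was set up to guarantee.
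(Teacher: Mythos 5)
Your proof is correct and follows essentially the same route as the paper: split into the power cost (charged to rules for nonterminals in $N_0 \cup N_1$, each bounded by $\Ocomp(\log n)$ via Lemma~\ref{clm:cost from power to rule}) and the non-power cost (covered by released credit, itself bounded by initial plus issued credit via Lemma~\ref{lem: monadic grammar} and Corollary~\ref{cor: credit all compressions}), then sum. Your extra remark about Lemma~\ref{lem:blocksc} guaranteeing that the released credit is not double-booked is a correct and useful observation that the paper makes implicitly in the preceding subsection.
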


\begin{proof}
Concerning powers, we assign to each nonterminal from $N_0 \cup N_1$ a cost of $\Ocomp(\log n)$
by Lemma~\ref{clm:cost from power to rule}.
There are at most $n_0 + n_1$ such nonterminals, as we do not introduce new ones. So, the total representation cost 
for powers is  $\Ocomp((n_0 + n_1) \log n)$. For non-powers, the representation cost is paid from the released
credit. But the released credit is bounded by the credit assigned to the initial grammar \grammar,
which is at most $\Ocomp(r \grammarsize)$ by Lemma~\ref{lem: monadic grammar},
plus the total issued credit during all modifications of \grammar, which is  $\Ocomp((n_0 + n_1r) \log n)$ by 
Corollary~\ref{cor: credit all compressions}. We get the statement by summing all contributions.
\qedhere
\end{proof}

\subsubsection{Comparing the \grammar{}-based representation cost  and the  \algmain-based representation cost}
\label{sec-comparing}

Recall the \algmain-based representation from Lemma~\ref{lem: cost of powers} in  Section~\ref{sec chain comp}.
We now show that the \algmain-based representation cost is bounded
by the \grammar-based representation cost
(note that both costs include the credit released by explicit letters).
We first bound the costs of both representations by edge-weighted graphs: 
the total cost of a representation is bounded (up to a constant factor) by
the  sum of all edge weights of the corresponding graph. 
Then we show that we can transform the \grammar-based graph into the $\algmain$-based graph without increasing the sum of the 
edge weights. For an edge-weighted graph $\mathcal G$ let $w(\mathcal G)$ be the sum of all edge weights.

Let us start with the \grammar-based representation.
We define the graph $\mathcal{G}_{\grammar}$ as follows:
Each chain pattern that is represented in the  \grammar-based representation
is a node of $\mathcal{G}_{\grammar}$, and edges are defined as follows:
\begin{itemize}
	\item A power $a^\ell$ has an edge with weight $1 + \log \ell$ to $\varepsilon$.
	Recall that the cost of representing this power is $\Ocomp(1 + \log \ell)$.
	\item When $a_\ell$ is represented as $a_k a^{\ell - k}$ ($\ell > k$), then
	node $a^\ell$ has an edge to $a^k$ of weight $\ell - k$.
	The cost of representing $a_\ell$ is $\ell - k + 1 \leq 2(\ell - k)$.
\end{itemize}
From the definition of this graph the following statement  is obvious:

\begin{lemma} \label{lem:representation cost}
 The \grammar-based representation cost is in $\Theta(w(\mathcal{G}_{\grammar}))$.
\end{lemma}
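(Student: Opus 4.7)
The plan is to show both directions of $\Theta$ at the level of individual represented patterns: each represented chain pattern $a^\ell$ contributes exactly one outgoing edge in $\mathcal{G}_{\grammar}$, and the cost incurred by this pattern in the \grammar-based representation matches the weight of that edge up to a universal constant factor. Summing over all nodes of $\mathcal{G}_\grammar$ then gives the claimed $\Theta$-bound.

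Concretely, by the construction of the \grammar-based representation each newly introduced letter $a_\ell$ falls into exactly one of the two cases (a) or (b) from Section~\ref{sec chain comp}, and accordingly has a single outgoing edge in $\mathcal{G}_\grammar$. In the power case, the representation cost is $\Ocomp(1+\log \ell)$ and the unique outgoing edge carries weight $1+\log \ell$; the two agree up to a constant factor. In the non-power case, $a_\ell$ is represented as $a_k a^{\ell-k}$ with $\ell>k$, the edge to $a^k$ has weight $\ell - k\ge 1$, and the associated representation cost (including, as in the parenthetical remark of the lemma statement, the $2(\ell-k)$ released units of credit from the explicit occurrences of $a$ used in the representation, plus the direct cost of $\ell-k+1$) is $\Theta(\ell-k)$. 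In both cases edge weight and per-pattern cost differ only by a fixed multiplicative constant.

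Summing these per-node equivalences over all nodes of $\mathcal{G}_\grammar$ gives simultaneously an upper bound of $\Ocomp(w(\mathcal{G}_\grammar))$ on the total \grammar-based representation cost and a matching lower bound of $\Omega(w(\mathcal{G}_\grammar))$, hence the claimed $\Theta$.

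There is no genuine obstacle: the lemma is essentially a definitional step, reformulating the already defined costs in graph-theoretic language so that later the \grammar-based and \algmain-based representations can be compared via a graph transformation. The only point that requires a moment of care is making sure that the parenthetical remark about released credit is accounted for symmetrically in both representations, so that the correspondence between cost terms and edge weights is the same constant factor in both the upper and the lower bound.
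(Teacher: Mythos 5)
Your proof is correct and takes the same approach the paper implicitly takes: the paper simply declares the lemma "obvious from the definition of this graph," and your argument is precisely the obvious elaboration — each represented chain pattern has exactly one outgoing edge whose weight matches its per-pattern representation cost up to a universal constant (powers: cost $\Ocomp(1+\log\ell)$ vs.\ weight $1+\log\ell$; non-powers: cost $\ell-k+1$ vs.\ weight $\ell-k\geq 1$), and summing gives the $\Theta$. One small note: the parenthetical "(note that both costs include the credit released by explicit letters)" belongs to the surrounding discussion in Section~\ref{sec-comparing}, not to this lemma, and it is about keeping the $\grammar$-based and \algmain-based costs on the same accounting scale; you need not fold an extra $2(\ell-k)$ of released credit into the per-pattern cost for the $\Theta$ to go through — $\ell-k+1$ vs.\ $\ell-k$ already suffices — but since $2(\ell-k)+\ell-k+1 = \Theta(\ell-k)$ as well, your conclusion is unaffected.
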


Next let us define the graph $\mathcal{G}_{\algmain}$ of the $\algmain$-based representation:
The nodes of this graph are all chain patterns that are represented in the \algmain-based representation,
and there is an edge of weight $1 + \log(\ell - k)$
from $a^\ell$ to $a^k$
if and only if $\ell > k$ and there is no node $a^q$ with $k < q < \ell$ (note that we may have $k=0$).

\begin{lemma} \label{lem: algmain representation cost}
The  \algmain-based cost of representing the letters introduced during chain compression is in 
$\Ocomp(w(\mathcal{G}_{\algmain}))$.
\end{lemma}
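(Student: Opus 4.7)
The plan is to match the \algmain-based representation cost edge-by-edge against the weights of $\mathcal{G}_{\algmain}$, using Lemma~\ref{lem: cost of powers} as the per-letter accounting. The argument has three simple steps: identify which nodes of $\mathcal{G}_{\algmain}$ come from which chain compression step, sum the edges for a fixed letter, and sum over all letters.

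First I would observe that for every letter $a$ (whether from the input or introduced by \algmain) there is at most one chain compression step in which $a$-chains of length $\geq 2$ get compressed. Indeed, once an $a$-chain compression has been performed, Lemma~\ref{lem:consecutive letters are different} and the fact that \paircompression{} and leaf compression only introduce fresh letters (disjoint from the current $F_1$) imply that no chain $aa$ is ever produced again. Hence the multiset of lengths $\ell_1 < \ell_2 < \dots < \ell_k$ for which fresh letters $a_{\ell_1}, a_{\ell_2}, \dots, a_{\ell_k}$ are introduced is exactly the list handed to Lemma~\ref{lem: cost of powers} in one application of $\algtreechaincomp$, and the corresponding nodes of $\mathcal{G}_{\algmain}$ of the form $a^{\ell}$ are exactly $a^{\ell_1}, \ldots, a^{\ell_k}$.

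Next, for this fixed letter $a$, the \algmain-based representation cost is, by Lemma~\ref{lem: cost of powers}, in $\Ocomp\bigl(k + \sum_{i=1}^k \log(\ell_i - \ell_{i-1})\bigr) = \Ocomp\bigl(\sum_{i=1}^k (1 + \log(\ell_i - \ell_{i-1}))\bigr)$, with the convention $\ell_0 = 0$. By the definition of $\mathcal{G}_{\algmain}$, the edges incident to the nodes $a^{\ell_1}, \ldots, a^{\ell_k}$ are precisely the edges $a^{\ell_i} \to a^{\ell_{i-1}}$ for $i = 1, \ldots, k$ (where $a^{\ell_0} = \varepsilon$), each of weight $1 + \log(\ell_i - \ell_{i-1})$. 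So the total weight of the $a$-edges of $\mathcal{G}_{\algmain}$ equals the per-letter cost bound from Lemma~\ref{lem: cost of powers} up to a constant factor.

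Finally I would sum over all letters $a$ that take part in a chain compression. Edges of $\mathcal{G}_{\algmain}$ for different letters never coincide, because each edge is of the form $a^\ell \to a^k$ and the chain pattern $a^\ell$ determines $a$ uniquely; the only possibly shared vertex is $\varepsilon$ (as a target), which causes no double-counting because edges are tagged by their source. Summing the per-letter bounds therefore yields \algmain-based cost $= \Ocomp(w(\mathcal{G}_{\algmain}))$, as claimed. The only mildly subtle point—and the one I would state carefully—is the uniqueness of the chain compression in which $a$-chains of length $\geq 2$ appear, since this is what guarantees that the per-letter bookkeeping of Lemma~\ref{lem: cost of powers} is applied exactly once per letter and hence lines up with the single chain of edges through the $a^{\ell_i}$ nodes.
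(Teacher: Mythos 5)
Your proof is correct and follows essentially the same route as the paper, which likewise reduces the claim to a single application of Lemma~\ref{lem: cost of powers} per unary letter and reads off the bound from the edge weights of $\mathcal{G}_{\algmain}$. The point you make explicit that the paper leaves implicit---that each unary letter $a$ participates in at most one chain compression, so the full sorted list $\ell_1 < \dots < \ell_k$ of represented $a$-lengths is handled by a single call to the scheme of Lemma~\ref{lem: cost of powers}---is indeed what makes the per-letter bookkeeping line up with the chain of edges $a^{\ell_i}\to a^{\ell_{i-1}}$, and spelling it out is a useful clarification rather than a different approach.
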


\begin{proof}
Observe that this is a straightforward consequence of the way chain patterns are represented in Section~\ref{sec chain comp}:
Lemma~\ref{lem: cost of powers} guarantees that if $a^{\ell_1}, a^{\ell_2}, \ldots, a^{\ell_k}$ ($\ell_1 < \ell_2 < \cdots < \ell_k$)
are all chain patterns of the form $a^+$  (for a fixed unary letter $a$) that are represented by \algmain{},
then the \algmain-based representation cost for these patterns is $ \Ocomp(\sum_{i=1}^k(1 + \log(\ell_{i} - \ell_{i-1})))$, where $\ell_0=0$.
\qedhere
\end{proof}

We now show that  $\mathcal{G}_{\grammar}$ can be transformed into $\mathcal{G}_{\algmain}$ without increasing the sum of edge weights:

\begin{lemma}
\label{lem: transforming representations}
We have $w(\mathcal{G}_{\grammar}) \geq w(\mathcal{G}_{\algmain}) $.
\end{lemma}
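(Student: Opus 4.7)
The plan is to decompose the inequality across unary letters, since both $\mathcal{G}_{\grammar}$ and $\mathcal{G}_{\algmain}$ only have edges within chain patterns of a single letter or pointing to $\varepsilon$. Fix a letter $a$ and let $0=\ell_0<\ell_1<\cdots<\ell_k$ be the exponents of the chain patterns $a^{\ell_i}$ represented by \algmain{}; each $a^{\ell_i}$ is also a node of $\mathcal{G}_{\grammar}$. The restriction of $\mathcal{G}_{\grammar}$ to chain patterns of $a$ together with $\varepsilon$ as sink is a rooted tree: every non-root node $a^m$ has exactly one outgoing edge, either a power edge to $\varepsilon$ of weight $1+\log m$, or a non-power edge to some $a^{m'}$ of weight $m-m'$. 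Since power edges always go to $\varepsilon$, along any root-bound path all edges except possibly the last are non-power and their weights telescope.

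I would process the \algmain{}-indices in increasing order $i=1,2,\ldots,k$ and charge edges as follows: for each $i$, follow the outgoing path from $a^{\ell_i}$, charging every traversed edge to index $i$, and stop as soon as the next edge to be traversed is already charged. Since only uncharged edges are ever charged, the charge sets are pairwise disjoint and their total weight is at most $w(\mathcal{G}_{\grammar})$. The key estimate is that the charge at index $i$ is at least the $\mathcal{G}_{\algmain}$-edge weight $1+\log(\ell_i-\ell_{i-1})$, which, summed over $i$ and then over letters, yields $w(\mathcal{G}_{\algmain})\leq w(\mathcal{G}_{\grammar})$.

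To verify the estimate, let $u_i$ be the exponent of the node at which the $i$-th charging stops (with $u_i=0$ if it reaches $\varepsilon$). Because of smallest-first processing, any previously charged edge belongs to a path originating at some $a^{\ell_{j'}}$ with $j'<i$; its source lies on the value-decreasing path from $a^{\ell_{j'}}$, so its exponent is at most $\ell_{j'}\leq\ell_{i-1}$, whence $u_i\leq\ell_{i-1}$. If the $i$-th charged path consists entirely of non-power edges (this also covers the case that it ends at $\varepsilon$ via a non-power edge), its weight telescopes to $\ell_i-u_i\geq \ell_i-\ell_{i-1}\geq 1+\log(\ell_i-\ell_{i-1})$, using the elementary inequality $x\geq 1+\log x$ for $x\geq 1$. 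Otherwise the path ends with a still-uncharged power edge $a^m\to\varepsilon$ of weight $1+\log m$ for some $1\leq m\leq\ell_i$; the total charge $(\ell_i-m)+1+\log m$ is, by a short concavity argument on $[1,\ell_i]$, at least $1+\log\ell_i\geq 1+\log(\ell_i-\ell_{i-1})$.

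The main obstacle is that the outgoing paths started from different \algmain{}-nodes can merge at extra (non-\algmain{}) internal nodes of the tree, so simply charging each \algmain{}-node the full weight of its own path to $\varepsilon$ would over-count shared tails and fail. The smallest-first discipline is precisely what sidesteps this: it forces the stopping exponent $u_i$ to be at most $\ell_{i-1}$, which leaves exactly the slack needed for the telescoped non-power weight to dominate $1+\log(\ell_i-\ell_{i-1})$.
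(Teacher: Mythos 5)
Your proof is correct, but it takes a genuinely different route from the paper's. The paper explicitly \emph{transforms} $\mathcal{G}_{\grammar}$ into $\mathcal{G}_{\algmain}$ in two local steps: first, every node $a^{\ell}$ has its single outgoing edge redirected to its immediate predecessor $a^{k}$ among all $\mathcal{G}_{\grammar}$-nodes, with new weight $1+\log(\ell-k)$, and one checks locally that this never increases weight (a power edge of weight $1+\log\ell$ dominates $1+\log(\ell-k)$, while a non-power edge of weight $\ell-k'$ with $k'\leq k$ dominates $\ell-k\geq 1+\log(\ell-k)$); second, the chain of redirected edges between two consecutive surviving $\mathcal{G}_{\algmain}$-nodes is collapsed into one edge, using the superadditivity $1+\log x+1+\log y\geq 1+\log(x+y)$ for $x,y\geq 1$. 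Your version instead runs a greedy charging scheme: each $\mathcal{G}_{\algmain}$-node $a^{\ell_i}$, processed in increasing order of $\ell_i$, is charged the weight of the maximal uncharged prefix of its path to $\varepsilon$ in $\mathcal{G}_{\grammar}$; the increasing-order discipline forces the stopping exponent $u_i\leq \ell_{i-1}$ (and in particular that the first edge of the path is still uncharged, so the charge set is nonempty), after which the charge is bounded below either by telescoping for an all-non-power path or by concavity of $m\mapsto (\ell_i-m)+1+\log m$ when the path ends with a power edge. Both arguments are sound; the paper's is more modular (two independent, per-edge steps), while yours has the appealing global picture of literally partitioning the $\mathcal{G}_{\grammar}$-edges among the $\mathcal{G}_{\algmain}$-edges they pay for, with no intermediate graph needed. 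The one point you assert without justification is that every $a^{\ell_i}$ represented by \algmain{} is also a node of $\mathcal{G}_{\grammar}$; the paper supplies this by noting that right before chain compression there are no crossing chains in \grammar{} (Lemma~\ref{lem: uncrossing chains}), so each $a$-maximal chain of \mytree{} appears explicitly in some right-hand side of \grammar{} and is therefore represented on the \grammar-side as well.
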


\begin{proof}
We transform the graph $\mathcal{G}_{\grammar}$ into the graph $\mathcal{G}_{\algmain}$ without increasing the sum of edge weights.
Thereby we can fix a letter $a$ and consider only nodes of the form $a^k$ in $\mathcal{G}_{\grammar}$ and $\mathcal{G}_{\algmain}$.
We start with  $\mathcal{G}_{\grammar}$.
Firstly, let us sort the nodes from $a^*$ according to the increasing length.
For each node $a^\ell$ with $\ell > 0$, we redirect its unique outgoing edge to its unique predecessor $a^k$ (i.e., $k < \ell$ and
there is no node $a^q$ with $k < q < \ell$), and assign the weight 
$1 + \log (\ell-k)$ to this new edge. This cannot increase the sum of edge weights:
\begin{itemize}
	\item If $a^\ell$ has an edge of weight $1 + \log \ell$ to $\epsilon$ in $\mathcal{G}_{\grammar}$,
	then $1 + \log \ell \geq 1 + \log (\ell - k)$.
	\item Otherwise it has an edge to some $a^{k'}$ ($k' \leq k$) with  weight $\ell - k'$. Then
	$\ell - k' \geq \ell - k \geq 
	1 + \log (\ell - k)$, as claimed (note that $1 + \log x \leq x$ for $x \geq 1$). 
\end{itemize}
Let $\mathcal G'$ be the graph obtained from $\mathcal{G}_{\grammar}$ by this redirecting.
Note that $\mathcal G'$  is not necessarily  $\mathcal{G}_{\algmain}$,
because $\mathcal{G}_{\grammar}$  may contain nodes that are not present in $\mathcal{G}_{\algmain}$.
In other words: there might exist a chain $a^\ell$ which occurs in the $\grammar$-based representation
but which does not occur in the \algmain-based representation. On the other hand, every node $a^\ell$
that occurs in $\mathcal{G}_{\algmain}$ also occurs in $\mathcal{G}_{\grammar}$: if $a^\ell$ 
is represented by the \algmain-based representation, then it occurs as an $a$-maximal chain in \mytree. 
But right before chain compression, there are no crossing chains in \grammar, see Lemma~\ref{lem: uncrossing chains}.
Hence, $a^\ell$ occurs in some right-hand side of \grammar{} and is therefore represented by the \grammar-based representation
as well.

So, assume that $(a^{\ell_0}, a^{\ell_k})$ is an edge in  $\mathcal{G}_{\algmain}$ but in $\mathcal G'$ we have
edges $(a^{\ell_0}, a^{\ell_1}), (a^{\ell_1}, a^{\ell_2}), \ldots$, $(a^{\ell_{k-1}}, a^{\ell_k})$, where $k > 1$.
But the sum of the weights of these edges in $\mathcal G'$
(which is $\sum_{i=1}^k 1+\log(\ell_{i-1}-\ell_i$))
is larger or equal than the weight of $(a^{\ell_0}, a^{\ell_k})$ in $\mathcal{G}_{\algmain}$ (which is $1+\log(\ell_0-\ell_k)$).
This follows from $1 + \log (x) + 1 + \log(y)  \geq 1 + \log(x+y)$ when $x, y \geq 1$.
\qedhere
\end{proof}
Using (in this order) Lemmata~\ref{lem: algmain representation cost}, \ref{lem: transforming representations}, and \ref{lem:representation cost}, 
followed by Corollary~\ref{coro:grammar-based cost}, we get:

\begin{corollary}
	\label{cor: algmain representation}
The total cost of the \algmain-representation is $\Ocomp(\grammarsize r + (n_0 + n_1r) \log n)$.
\end{corollary}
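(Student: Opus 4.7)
The plan is to simply chain together the four preceding results in the order they are listed in the paper, since the corollary is structured exactly as a summary statement of the preceding section. No new combinatorial content is needed; the work has been done in the lemmata.

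First, I would invoke Lemma~\ref{lem: algmain representation cost} to bound the \algmain-based cost of representing letters introduced during chain compression by $\Ocomp(w(\mathcal{G}_{\algmain}))$. Then Lemma~\ref{lem: transforming representations} gives $w(\mathcal{G}_{\algmain}) \leq w(\mathcal{G}_{\grammar})$, and Lemma~\ref{lem:representation cost} in turn identifies $w(\mathcal{G}_{\grammar})$ (up to a constant factor) with the \grammar-based representation cost. So the \algmain-based cost of chain compression is bounded by the \grammar-based cost up to a constant factor. Finally, Corollary~\ref{coro:grammar-based cost} bounds this \grammar-based cost by $\Ocomp(\grammarsize r + (n_0 + n_1 r) \log n)$.

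One small subtlety worth flagging is that the corollary speaks about the \emph{total} cost of the \algmain-representation, which includes not only the representation cost of letters introduced during chain compression but also those introduced by \paircompression{} and leaf compression, and the credit for all fresh letters. However, the representation cost of letters created by \paircompression{} and leaf compression is entirely covered by the released credit (Lemma~\ref{lem:noncrossing compression} and Lemma~\ref{lem: no crossing parent node child compression}), and the credit for occurrences of all fresh letters is covered by the issued credit (Lemmata~\ref{lem:pc crossing}, \ref{lem:blocksc}, \ref{lem: representation cost child}). By Corollary~\ref{cor: credit all compressions} the total initial plus issued credit is $\Ocomp(\grammarsize r + (n_0 + n_1 r) \log n)$, which is absorbed in the bound. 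Summing the chain-compression cost with these two contributions yields the claimed bound.

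There is no real obstacle; the only care needed is to make sure the three quantities being compared (the \algmain{}-based cost of chain compression, the two graph weights, and the \grammar-based representation cost) are lined up in the correct direction so that the chain of inequalities $\text{\algmain-cost} = \Ocomp(w(\mathcal{G}_{\algmain})) \leq \Ocomp(w(\mathcal{G}_{\grammar})) = \Ocomp(\text{\grammar-cost})$ holds as written, and to remember to add back in the contributions from the other two compression operations when passing from "chain-compression cost only" to "total \algmain-representation cost".
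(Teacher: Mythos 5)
Your core argument is exactly the paper's: it proves the corollary by chaining Lemma~\ref{lem: algmain representation cost} ($\algmain$-cost of chain compression $\leq \Ocomp(w(\mathcal{G}_{\algmain}))$), Lemma~\ref{lem: transforming representations} ($w(\mathcal{G}_{\algmain}) \leq w(\mathcal{G}_{\grammar})$), Lemma~\ref{lem:representation cost} ($w(\mathcal{G}_{\grammar}) = \Theta(\grammar\text{-cost})$), and Corollary~\ref{coro:grammar-based cost} in that exact order. The only issue is your ``small subtlety'' paragraph, which rests on a misreading of the statement: as the paper sets up at the start of Section~\ref{subsec: chain cost}, the \emph{\algmain-based representation} is defined specifically as the scheme for representing letters introduced by \emph{chain compression}, and the word ``total'' here means summed over all phases, not ``over all three compression operations''. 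The pair-compression and leaf-compression representation costs are handled separately, and the accounting you sketch is precisely the content of the \emph{next} corollary (``Total cost of representation''), which cites Corollary~\ref{cor: algmain representation} only for the chain-compression part. So your extra paragraph is mathematically sound but proves a stronger claim than asked; the intended proof stops after the four-lemma chain.
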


\subsection{Total cost of representation}
\begin{corollary}
The total representation cost of the letters introduced by \algmain{} (and hence the size of the grammar produced by \algmain)
is $\Ocomp(\grammarsize r + (n_0 + n_1r) \log n) \leq \Ocomp(\grammarsize r + \grammarsize r \log n)$.
\end{corollary}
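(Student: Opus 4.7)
The plan is to split the total representation cost according to the three compression operations and bundle each contribution with the bookkeeping already developed.

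First, for the letters introduced by chain compression, I would invoke Corollary~\ref{cor: algmain representation} directly, which already gives a bound of $\Ocomp(\grammarsize r + (n_0 + n_1 r)\log n)$ for the \algmain-based representation. This is the most subtle contribution, and the heavy lifting (comparing the \grammar-based and \algmain-based schemes via Lemmata~\ref{lem:representation cost}--\ref{lem: transforming representations}, and bounding the \grammar-based cost via Lemma~\ref{clm:cost from power to rule}) has already been done.

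Second, for the letters introduced by \paircompression{} and leaf compression, Lemma~\ref{lem:pc crossing} and Lemma~\ref{lem: representation cost child} state that the representation cost of each such fresh letter, together with its two units of credit, is paid from (i) the credit released by the compression step and (ii) the credit issued during the corresponding uncrossing procedure. Since released credit is bounded by the sum of the initial credit assigned to \grammar{} and all issued credit across the run of \algmain, I would simply take the worst case and bound this contribution by the initial credit on \grammar, which by Lemma~\ref{lem: monadic grammar} is $\Ocomp(\grammarsize r)$, plus the total issued credit, which by Corollary~\ref{cor: credit all compressions} is $\Ocomp(\grammarsize r + (n_0 + n_1 r)\log n)$.

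Adding the two contributions yields $\Ocomp(\grammarsize r + (n_0 + n_1 r)\log n)$ for the total representation cost; by the definition of the size of an SLCF grammar this is also a bound on the size of the grammar returned by \algmain. To obtain the simpler form $\Ocomp(\grammarsize r + \grammarsize r \log n)$, I would finally plug in the initial bounds $n_0 = \Ocomp(\grammarsize r)$ and $n_1 = \Ocomp(\grammarsize)$ from Lemma~\ref{lem: monadic grammar}, so that $n_0 + n_1 r = \Ocomp(\grammarsize r)$. The main (minor) obstacle is really just bookkeeping: ensuring that the credit released during a compression step is not double-counted between the chain-compression analysis and the pair/leaf-compression analysis, which follows because Corollary~\ref{cor: algmain representation} already accounts for the released credit contribution to the \algmain-based chain representation, while pair and leaf compressions release credit only from occurrences of letters that were not themselves produced by the chain step being accounted for.
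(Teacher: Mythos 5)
Your proof is correct and follows essentially the same route as the paper's: invoke Corollary~\ref{cor: algmain representation} for chain compression, use Lemmata~\ref{lem:pc crossing} and~\ref{lem: representation cost child} together with Corollary~\ref{cor: credit all compressions} and Lemma~\ref{lem: monadic grammar} for the pair/leaf contributions, and then substitute $n_0 = \Ocomp(\grammarsize r)$, $n_1 = \Ocomp(\grammarsize)$. The closing remark about double-counting released credit is a reasonable sanity check but is not actually needed here: both contributions are individually bounded by $\Ocomp(\grammarsize r + (n_0 + n_1 r)\log n)$, so summing them at most doubles the constant, which is absorbed by the $\Ocomp$.
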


\begin{proof}
By Corollary~\ref{cor: algmain representation} the representation cost of letters introduced by chain compression is
$\Ocomp(\grammarsize r + (n_0 + n_1 r) \log n)$,
while by Lemmata~\ref{lem:pc crossing} and~\ref{lem: representation cost child} the representation cost of letters introduced by
\paircompression{} and leaf compression is covered by the initial credit (which is $\Ocomp(\grammarsize r)$ by Lemma~\ref{lem: monadic grammar})
plus the total amount of issued credit.
By Corollary~\ref{cor: credit all compressions} the latter is
$\Ocomp(\grammarsize r + (n_0 + n_1r) \log n)$. Recalling that $n_0 = \Ocomp(\grammarsize r)$ and $n_1 = \Ocomp(\grammarsize)$ by Lemma~\ref{lem: monadic grammar}
ends the proof.
\end{proof}

\section{Improved analysis}
\label{sec: improved}
The naive algorithm, which simply represents the input tree $\mytree$ as $A_1 \to \mytree$ results in a grammar of size $n$.
In some extreme cases this might be better than $\Ocomp(\grammarsize r  + \grammarsize r  \log n)$ as guaranteed by \algmain.
In fact, even a stronger fact holds: \emph{any} `reasonable' grammar for a tree $t$ has size at most $2|t|-1$,
where a grammar (for $t$) is reasonable if
\begin{itemize}
	\item it has no production of the form $A \to \alpha$, where $|\alpha|=1$ and
	\item all its nonterminals are used in the derivation of $t$
\end{itemize}
(recall that the size of $\alpha$ does not include the parameters in it).

\begin{lemma}
\label{lem: trivial estimation}
Let \grammar{} contain no production $A \to \alpha$ with $|\alpha| = 1$ and assume that every production is used in the derivation of the tree $t$ defined by \grammar.
Then $|\grammar| \leq 2 |t|-1$.
In particular, if at any point \algmain{} already paid $k$ units of credit for the representation of the letters
and the remaining tree is $\mytree$
then the final grammar for the input tree has size at most $k + 2|\mytree|-1$
\end{lemma}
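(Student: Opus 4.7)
My plan is to prove $|\grammar| \le 2|t|-1$ by charging the grammar's size against the (unique) derivation of $t$ from $S$. Let $N$ be the total number of rule applications in this derivation, and for each nonterminal $A$ let $n_A$ denote the number of times the rule $A \to \alpha_A$ is applied; the hypothesis that every production is used forces $n_A \ge 1$ for every $A$, hence $N = \sum_A n_A \ge m$, where $m$ is the number of nonterminals. Each rewriting of $A$ removes one nonterminal node and inserts $|\alpha_A|$ non-parameter ones, a net gain of $|\alpha_A|-1$, so tracking the non-parameter node count across the whole derivation (from a single $S$-node to $t$) yields the identity
$$
|t|-1 \;=\; \sum_A n_A\,(|\alpha_A|-1).
$$
Using $n_A \ge 1$ I obtain $|\grammar| = \sum_A |\alpha_A| \le \sum_A n_A|\alpha_A| = (|t|-1)+N$, so it remains to establish $N+1 \le |t|$.

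For this I will introduce the \emph{parse tree} $\mathcal P$ of the derivation: a rooted ordered tree whose $N$ nodes are the individual rule applications, rooted at the initial occurrence of $S$, and where the children of a node labelled $A$ are the nonterminal occurrences in $\alpha_A$ (taken in order). Let $q_A$ be the number of nonterminal occurrences in $\alpha_A$ and $t_A = |\alpha_A|-q_A$ the number of terminal ones, and classify the nodes of $\mathcal P$ by their number of children: leaves have $q_A=0$, so by the first hypothesis $t_A = |\alpha_A| \ge 2$; unary internal nodes have $q_A=1$, so $t_A \ge 1$; branching internal nodes have $q_A \ge 2$. Counting edges of $\mathcal P$ gives $N-1 = \sum_{\text{nodes}} q_A \ge U+2B$, where $L$, $U$, $B$ count leaves, unary internal nodes, and branching internal nodes respectively (with $L+U+B=N$), which rearranges to $L \ge B+1$. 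Since $|t|$ equals $\sum_A n_A t_A$, i.e.\ the sum of $t_A$ over all nodes of $\mathcal P$, I conclude
$$
|t| \;\ge\; 2L+U \;\ge\; (L+U+B)+1 \;=\; N+1,
$$
as required; combining with the previous paragraph, $|\grammar| \le 2(|t|-1) \le 2|t|-1$.

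For the ``in particular'' part, I stop \algmain{} at the present moment and finalise the grammar by appending a single start production $S \to \mytree$: the letter-rules accumulated so far have total size $k$, and this extra rule contributes $|\mytree|$, giving a grammar for the input tree of size $k + |\mytree| \le k + 2|\mytree|-1$. The main subtlety is the parse-tree estimate $N+1 \le |t|$, which uses both hypotheses simultaneously --- $|\alpha_A|\ge 2$ (to force $t_A\ge 2$ at leaves of $\mathcal P$ so that $|t| \ge 2L+U$) and $n_A \ge 1$ (for the averaging $|\grammar|\le \sum n_A|\alpha_A|$) --- and would fail without either, as trivial rules of the form $A \to B$ already demonstrate.
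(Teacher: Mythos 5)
Your proof of the main inequality $|\grammar| \leq 2|t|-1$ is correct, but it takes a long detour. You derive the identity $|t|-1 = \sum_A n_A(|\alpha_A|-1)$ and the bound $|\grammar| \leq (|t|-1) + N$, and then build an entire parse-tree argument (with the leaf/unary/branching counts $L$, $U$, $B$) just to prove $N \leq |t|-1$. But that bound already falls out of the identity you wrote down: since $|\alpha_A|-1 \geq 1$ for every rule, $|t|-1 = \sum_A n_A(|\alpha_A|-1) \geq \sum_A n_A = N$. The paper's proof is exactly this short chain, with the even simpler observation that $|\grammar| = \sum_A |\alpha_A| \leq (|t|-1) + m$ where $m$ is the number of nonterminals (so no averaging against $n_A|\alpha_A|$ is needed at all), and $m \leq N \leq |t|-1$.

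For the ``in particular'' part your argument does not establish the stated claim. You stop \algmain{} and replace its remaining work by a single production $S \to \mytree$, obtaining a grammar of size $k + |\mytree|$. That shows that \emph{some} grammar for the input tree has size at most $k + 2|\mytree|-1$, but the lemma --- and the way it is invoked in the proof of Theorem~\ref{thm: main}, where it bounds ``the cost of representing the remaining tree'' --- is a statement about the grammar that \algmain{} actually returns after running to completion. The intended argument instead applies the first part of the lemma to the grammar that \algmain{} will construct for $\mytree$ from the current point onward: every production \algmain{} introduces (from chain, pair, or leaf compression) has a right-hand side of size at least $2$, and every such production is used in deriving $\mytree$, because it is created precisely when the corresponding compression is applied to $\mytree$. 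Hence this remaining subgrammar has size at most $2|\mytree|-1$, and adding the $k$ units already paid gives the required bound on \algmain's output.
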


\begin{proof}
Assume that \grammar{} has the properties from the lemma.
An application of a rule $A_i \to \alpha_i$ to the current tree increases its size by $|\alpha_i| - 1 \geq 1$ for each occurrence of $A_i$ in the tree derived so far.
As we assume that each production is used in the derivation,
each of $|\alpha_i| - 1 \geq 1$ is added at least once and so we get $\sum_{i=1}^m (|\alpha_i| - 1) \leq |t|$.
Thus $\sum_{i=1}^m |\alpha_i| \leq |t| + m$ and so it is left to estimate $m$.
As there are $m$ productions and each application increases the size by at least $1$,
and we start the derivation with a tree of size one, we get  $m \leq |t|-1$.
Thus $\sum_{i=1}^m |\alpha_i| \leq |t| + m \leq 2|t| - 1$.
\end{proof}

We show that when $|\mytree| \approx \propersize$ at a certain point of \algmain, 
then up to this point $\Ocomp(\grammarsize r + \propersize \log(n /\propersize))$ units of credit are issued so far,
where $\grammarsize$ is the size of an optimal SLCF grammar for the input tree.
It follows that the size of the SLCF grammar returned by \algmain{} is
$\Ocomp(\grammarsize r + \grammarsize r \log(n/\grammarsize r))$,
as claimed in Theorem~\ref{thm: main}.

Let $T_i$ be the tree at the beginning of phase $i$ and choose phase $i$ such that $|T_{i}| \geq \propersize > |T_{i+1}|$
(for an input tree with at least $\propersize$ symbols such an $i$ exists, as $|T_1| = n \geq \propersize$
and for the `last' $i$ we have $|T_i| = 1$;
the easy special case in which $n < \propersize$ follows directly from Lemma~\ref{lem: trivial estimation}).
We estimate  the representation cost 
(i.e., the issued credit and the cost of the $\algmain$-based representation)
up to phase $i$ (inclusively).
We show that this cost is bounded by
$\Ocomp(\grammarsize r + \grammarsize r \log(n/\grammarsize r))$,
which shows the full claim of Theorem~\ref{thm: main}.

\begin{lemma}
\label{lem:cost of compression}
If $|\mytree| \geq \propersize$ at the beginning of a phase,
then till the end of this phase, the representation cost of the fresh letters introduced by \algmain{}
as well as the credit of the letters in the current SLCF grammar \grammar{} is $\Ocomp(\grammarsize r+ \propersize \log(n/\propersize))$.
\end{lemma}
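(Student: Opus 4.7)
The plan is to redo the accounting of Section~\ref{sec: size} but halt the sum at the end of the phase specified in the statement. By Lemma~\ref{lem:number of phases} each phase shrinks $|\mytree|$ by a constant factor, so the number of phases during which $|\mytree|$ stays above $\propersize$ is at most $\Ocomp(\log(n/\propersize))$; it suffices to bound the cost accumulated in these phases.

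First I would bound the credit. The initial credit assigned to \grammar{} is $\Ocomp(\grammarsize r)$ by Lemma~\ref{lem: monadic grammar}. By the per-phase credit bounds of Lemmas~\ref{lem:pc crossing},~\ref{lem:blocksc}, and~\ref{lem: representation cost child}, combined with the invariants \GRref{2}--\GRref{4} which give $\grammarsizezero = \Ocomp(\grammarsize r)$, $\grammarsizeone = \Ocomp(\grammarsize)$, and $\grammarsizetilde = \Ocomp(\grammarsize r)$, a single phase issues $\Ocomp(\grammarsize r) = \Ocomp(\propersize)$ units of credit. Summed over the $\Ocomp(\log(n/\propersize))$ relevant phases this amounts to $\Ocomp(\propersize \log(n/\propersize))$ of issued credit. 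Together with the initial credit, this accounts for the credit of the letters currently stored in \grammar{}, and through the released credit it also pays for the representation cost of every new letter introduced by \paircompression{}, leaf compression, and the non-power part of chain compression.

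The main remaining task is to bound the cost of representing powers in chain compression. The plan is to sharpen Lemma~\ref{clm:cost from power to rule}: its per-rule bound of $\Ocomp(\log n)$ has to average to $\Ocomp(\log(n/\propersize))$ over the $m = \Ocomp(\grammarsize r) = \Ocomp(\propersize)$ rules for nonterminals in $N_0 \cup N_1$. Keeping the weight-growth argument of Lemma~\ref{clm:cost from power to rule}, the power cost charged to the rule for $A_i$ up to the specified phase is $\Ocomp(\log W_i)$, where $W_i$ is the weight of the power-producing letter in that rule at the end of the phase (with $W_i = 1$ if no power was ever created there). The new ingredient is a global bound on $\sum_i W_i$: each application of $A_i$ contributes a node of weight $W_i$ (labelled with the power-producing letter) to the current tree $\mytree$ at the end of the phase; since every nonterminal is used in the derivation, $A_i$ is applied at least once, and since distinct rule positions produce disjoint nodes, one obtains $\sum_i W_i \leq \weight(\mytree) = n$. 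Concavity of $\log$ then yields
\[
\sum_i \log W_i \;\leq\; m \log(n/m) \;=\; \Ocomp(\grammarsize r + \propersize \log(n/\propersize)),
\]
where the additive $\grammarsize r$ term absorbs the case in which $\propersize$ is close to $n$ and $\log(n/\propersize)$ is treated as $0$. The same concavity reasoning, applied with $\sum_i (p_i - 1) \leq n$ (total tree-size reduction), handles the residual ``nonterminal removed'' case in Lemma~\ref{clm:cost from power to rule}.

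The hardest part will be pinning down the bookkeeping when the letter witnessing $W_i$ is shared across several rules: one cannot appeal to the weights of distinct letters, but instead must use the unique rule-position origin of each node in $\mytree$ to conclude that $\sum_i W_i \leq n$. Summing the three contributions (initial credit, issued credit, power cost) then delivers the claimed bound $\Ocomp(\grammarsize r + \propersize \log(n/\propersize))$.
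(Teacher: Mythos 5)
Your credit accounting is exactly the paper's. For the power cost you take a genuinely different route. The paper introduces two kinds of markings per nonterminal ($A_i$-pre-power and $A_i$-in), which are disjoint subsets of the \emph{fixed initial tree}; invariants \Mrefall{} say that markings are pairwise disjoint, there are at most two per rule, and the power cost is $\Ocomp(\sum_i \log p_i)$ where $p_i$ are the marking sizes. Because the markings live in the input tree and never move, disjointness and the bound $\sum_i p_i \leq n$ are immediate, and the same $k\log(n/k)$ concavity step you use finishes the proof. You instead charge each rule $\Ocomp(\log W_i)$ where $W_i$ is the weight of the power-producing letter at the end of the phase, and you bound $\sum_i W_i \leq n$ via the node that letter contributes to $\mytree$ when the rule is applied. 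That is the right shape, and for the case where both nonterminals stay in the rule it lines up cleanly with the weight-growth argument of Lemma~\ref{clm:cost from power to rule}.

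The gaps are precisely the edge cases the markings were designed to handle. A nonterminal $A_i \in N_1$ can be eliminated before the phase ends (its right-hand side shrinks to a single symbol during popping), at which point its letters migrate into other rules' right-hand sides; your ``power-producing letter in that rule at the end of the phase'' is then not defined for $A_i$, and it is not clear the migrated occurrence is disjoint from the host rule's own witness. Your fix for the nonterminal-removed case, charging it to ``$\sum_i (p_i-1) \leq n$ (total tree-size reduction)'', is a single sentence that does not connect to the weight-growth reasoning carrying the rest of your argument; the paper absorbs this case into the pre-power markings, which is why it allows up to two markings per rule rather than a single $W_i$. To make your route airtight you would essentially have to re-derive \Mrefall{} for your weight witnesses, at which point the marking formulation on the input tree becomes the natural bookkeeping device.
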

\begin{proof}

We estimate separately the amount of issued credit and the representation cost for letters replacing chains.
This covers the whole representation cost for fresh letters
(see Lemmata~\ref{lem:pc crossing},~\ref{lem:blocksc} and~\ref{lem: representation cost child})
as well as the credit on the letters in the current SLCF grammar.

\subsubsection*{Credit}
Observe first that the initial grammar \grammar{} has at most $\grammarsize r$ credit, see Lemma~\ref{lem: monadic grammar}.
The input tree has size $n$ and the one at the beginning of the phase is of size $s = |\mytree|$.
Hence, there were $\Ocomp(\log(n/s))$ phases before, as in each phase the size of $\mytree$
drops by a constant factor, see Lemma~\ref{lem:number of phases}.
Adding one phase for the current phase still yields $\Ocomp(\log(n/s))$ phases. 
As $s \geq \propersize$, we obtain the upper bound $\Ocomp(\log(n/\propersize))$ on the number of phases.
Due to Lemmata~\ref{lem:pc crossing},~\ref{lem:blocksc} and~\ref{lem: representation cost child},
during \paircompression, chain compression and leaf compression
at most $\Ocomp(n_0 + n_1 r)$  units of credit per phase are issued, and by Lemma~\ref{lem: monadic grammar} this is at most $\Ocomp(\propersize)$.
So in total $\Ocomp(\grammarsize r + \propersize \log(n/\propersize))$ units of credit are issued.
From Lemmata~\ref{lem:pc crossing},~\ref{lem:blocksc} and~\ref{lem: representation cost child}
we conclude that this credit is enough to cover
the credit of all letters in \grammar's right-hand sides as well as the representation cost of letters introduced during \paircompression{} and leaf compression.
So it is left to calculate the cost of representing chains.

\subsubsection*{Representing chains}
Observe that the analysis in Section~\ref{subsec: chain cost} did not assume anywhere that \algmain{}
was carried out completely, i.e., the final grammar was returned.
So we can consider the cost of the \grammar-based representation, the \algmain-based representation, and the corresponding graphs.
Lemma~\ref{lem:representation cost} still applies and the cost of the $\grammar$-based representation is $\Theta(w(\mathcal{G}_{\grammar}))$.
By Lemma~\ref{lem: algmain representation cost} the cost of the \algmain-based representation is $\Ocomp(w(\mathcal{G}_\algmain))$.
Lemma~\ref{lem: transforming representations} shows that we can transform $\mathcal{G}_{\grammar}$ to $\mathcal{G}_\algmain$
without increasing the sum of weights.
Hence it is enough to show that the \grammar-based representation cost is at most $\Ocomp(\grammarsize r + \propersize \log(n/\propersize))$.

The \grammar-based representation cost consists of some released credit and the cost of representing powers, see its definition.
The former was already addressed
(the whole issued credit is $\Ocomp(\grammarsize r + \propersize \log(n/\propersize))$)
and so it is enough to estimate the latter, i.e., the cost of representing powers.

The outline of the analysis is as follows:
When a new power $a^\ell$ is represented,
we mark some letters of the input tree (and perhaps modify some other markings).
Those markings are associated with nonterminals. 
Formally, for a nonterminal  $A_i \in N_0 \cup N_1$ we introduce the notions
of an  $A_i$-pre-power marking and $A_i$-in marking. Such a marking is  a subset of 
the node set of the initial tree (note that we do not define such markings for a nonterminal $A_i \in \widetilde{N_0}$). 
These marking satisfy the following conditions:
\begin{enumerate}[(M1)]
	\item \label{M1} Each marking contains at least two nodes and two different markings are disjoint.
         \item \label{M2} For every nonterminal $A_i$ and every $X \in \{ \text{pre-power}, \text{in}\}$ there is at most on
           $A_i$-X marking.
        \item \label{M3} If $p_1, p_2, \ldots, p_k \geq 2$ are the sizes of the markings (i.e., the cardinalities of the node sets), then the cost of representing powers (created up to the current phase) 
	by the \grammar-based representation is  $c \sum_{i = 1}^{k} \log p_i$ (for some fixed constant c).
\end{enumerate}
Note that in (M\ref{M3}) we must have $k \leq d r \grammarsize$ for some constant $d$, because $k \leq 2(|N_0| + |N_1|) \leq \Ocomp(\grammarsize r) $ by
Lemma~\ref{lem: monadic grammar}.

Using \Mrefall{} the total cost of representing powers (in the \grammar-based representation) can be upper-bounded by
(a constant times)
\begin{subequations}
\label{eq:estimations}
\begin{equation}
\label{eq:estimations 1}
\sum_{i = 1}^{k} \log p_i \quad \text{under the constraints } k \leq d r \grammarsize \text{ and } \sum_{i = 1}^{k} p_i \leq n \enspace ,
\end{equation}
where $d$ is some constant.
Let us bound the sum $\sum_{i = 1}^{k} \log p_i$ under the above constraints:
Clearly, the sum is maximised for $\sum_{i = 1}^{k} p_i = n$.
For a fixed $k$ and $\sum_{i = 1}^{k} p_i = n$
we have $\sum_{i = 1}^{k} \log p_i = \log ( \prod_{i = 1}^{k} p_i)$.
By the  inequality of arithmetic and geometric means we conclude that $\log ( \prod_{i = 1}^{k} p_i) \leq \log ((\sum_{i=1}^k p_i / k)^k) = k \log(n/k)$,
where the maximum $k \log(n/k)$ is achieved if  each $p_i$ is equal to $n / k$.
Now, the term $k \log(n/k)$ is maximised for $k = n/e$ (independently of the base of the logarithm).
Moreover, in the range $[0,n/e)$ the function $f(k) = k \log(n/k)$ is monotonically increasing. 
Hence, if $d r \grammarsize \leq n/e$, then, indeed, the maximal value of $\sum_{i = 1}^{k} \log p_i$  under the constraints in~\eqref{eq:estimations 1}
is in
\begin{equation}
\label{eq:estimations final}
\Ocomp\left(\propersize + \propersize \log \left(\frac{n}{\propersize} \right)\right) \enspace .
\end{equation}
\end{subequations}
On the other hand, if $d r \grammarsize > n/e$, then $n \leq \Ocomp(r \grammarsize)$
and the bound in the statement of the Lemma trivially holds.

The idea of preserving~\Mrefall{} is as follows: If a new power of length $\ell$ is represented,
this yields a cost of $\Ocomp(1 + \log \ell)$.
Since $\ell \geq 2$, we can treat this cost as $\Ocomp(\log \ell)$ and
choose $c$ in \Mref{3} so that this is at most $c \log \ell$.
Then either we add a new marking of size $\ell$ or we remove some marking of size $\ell'$ and add a new marking of size $\ell \cdot \ell'$.
It is easy to see that in this way~\Mrefall{} are preserved (still, those details are repeated later in the proof).

Whenever we have to represent powers $a^{\ell_1}, a^{\ell_2}, \ldots$, for each power $a^\ell$,
where $\ell > 1$, we find the last (according to preorder) maximal chain pattern $a^\ell$ in the current tree \mytree.
It is possible that this particular $a^\ell$ was obtained as a concatenation of $\ell - k$ explicit letters to $a^k$
(so, not as a power). In such a case we are lucky, as the representation of this $a_\ell$ is paid by the credit
and we do not need to separately consider the cost of representing the power $a^\ell$.
Otherwise  $a^\ell$ is a~power and we add a new marking which is contained in the subcontext of the input 
tree that is derived from the last occurrence of $a^\ell$ in the current tree.
Let $A_i$ be the smallest nonterminal that derives (before \algremblocks)
this last occurrence of the maximal chain pattern $a^\ell$ (clearly there is such non-terminal, as $A_m$ derives it).
Note that $A_i \in N_0 \cup N_1$ as otherwise this $a^\ell$ is not a~power, since powers cannot be created inside nonterminals from $\widetilde{N_0}$.
The new marking is either an $A_i$-pre-power marking or an $A_i$-in marking:
If one of the nonterminals in $A_i$'s right-hand side was removed during \algremblocks,
then we add an $A_i$-pre-power marking (note that such a removed nonterminal is necessarily from $N_1$,
as no nonterminal from $N_0 \cup \widetilde{N_0}$ is removed during \algremblocks).
Otherwise,  we add an $A_i$-in marking.

\begin{clm}
\label{clm: markings}
At any time, there is at most one $A_i$-pre-power marking in the input tree.

When an $A_i$-in marking is added because of a power $a^\ell$, then after chain compression $A_i$ has a rule of the form
\begin{itemize}
	\item $A_i \to w A_j a_\ell A_k v$, where $w$ and $v$ are (perhaps empty) sequences of handles and $A_j,A_k \in N_1$,  if $A_i \in N_1$, or
	\item $A_i \to w A_j a_\ell A_k$ where $w$ is a (perhaps empty) sequence of handles, $A_j \in N_1$, and $A_k \in N_0$, if $A_i \in N_0$.
\end{itemize}
\end{clm}
\begin{proof}
Concerning $A_i$-pre-power markings, let $a^\ell$ be the first power that causes the creation of an 
$A_i$-pre-power marking.
So one nonterminal from $N_1$ was removed from the right-hand side for $A_i$ and there is no way to reintroduce such a nonterminal.
Hence, $A_i$'s rule has at most one nonterminal from $N_1$ (when $A_i \in N_1$) or none at all (when $A_i \in N_0$).
Thus, no more powers can be created in $A_i$'s right-hand side.
In particular, neither $A_i$-pre-power markings nor $A_i$-in markings will be added in the future.

Next, suppose that an $A_i$-in marking is added to the input tree because a new power $a^\ell$ is created. Thus,
the last occurrence of the maximal chain pattern $a^\ell$ is generated by $A_i$ but not by the
nonterminals in the rule for $A_i$ (as then, a different marking would be introduced).
Since $a^\ell$ is a~power it is obtained in the rule as the concatenation of an $a$-prefix and an $a$-suffix popped from nonterminals in the rule for $A_i$.
The suffix needs to come from a nonterminal of rank $1$.
In particular this means that those two nonterminals in the rule for $A_i$ generate parts of this last occurrence of $a^\ell$
and in between them only the letter $a$ occurs.
If any of those nonterminals would be removed during the chain compression for $a^\ell$, then an $A_i$-pre-power marking
would be introduced, which is not the case.
So both nonterminals remain in the rule for $A_i$.
Hence after popping prefixes and suffixes, between those two nonterminals there is exactly a chain pattern $a^\ell$, which is then
replaced by $a_\ell$.
This yields the desired form of the rule, both in case $A_i \in N_0$ or $A_i \in N_1$.
\qedhere
\end{proof}

Consider the occurrence of $a^\ell$ and the `derived' subcontext $w^\ell$ of the \emph{input tree}.
We show that if there exists a marking inside $w^\ell$, then this marking is contained in the last occurrence of $w$ inside the occurrence
of $w^\ell$.

\begin{clm}
\label{clm: existing marking}
Let $a^\ell$ be an occurrence of a maximal chain pattern, which is replaced by $a_\ell$. Assume that $a_\ell$ derives the subcontext $w^\ell$ of the input tree,
where $w$ is a context. If this occurrence of $w^\ell$ contains a marking, then this marking is contained in the last occurrence of $w$ inside 
the occurrence of $w^\ell$.
\end{clm}

\begin{proof}
Consider a marking $M$ within $w^\ell$. Assume it was created, when some $b^k$ was replaced by $b_k$.
As $b_k$ is a single letter and $a^\ell$ derives it, each $a$ derives at least one $b_k$. Then, the marking $M$ must be
contained in the subcontext derived from the last $b_k$
(as we always create markings within the last occurrence of the chain pattern to be replaced).
Clearly the last $b_k$ can be  only derived from the last $a$ within $a^\ell$.
So in particular, the marking $M$ is contained in the last $w$ inside $w^\ell$.
So all markings within $w^\ell$ are in fact within the last $w$.
\qedhere
\end{proof}

We now demonstrate how to add markings to the input tree.
Suppose that we replace a power $a^\ell$. Note that we must have $\ell \geq 2$.
Let us consider the last occurrence of this $a^\ell$ in the current tree $\mytree$
and the smallest $A_i$ that generates this occurrence.
This $a^\ell$ generates some occurrence of $w^\ell$ (for some context $w$) in the input tree.
If this occurrence of $w^\ell$ contains no marking, then we simply add a marking
(either an $A_i$-pre-power or an $A_i$-in marking according to the above rule)
consisting of $\ell \geq 2$ arbitrarily chosen nodes within $w^\ell$.
In the other case, by Claim~\ref{clm: existing marking}, we know that all markings within
the occurrence of $w^\ell$ are contained in the last $w$.
If one of them is the (unique, by \Mref{2}) $A_i$-in marking, let us choose it. Otherwise choose any other marking in the last $w$.
Let $M$ be the chosen marking and let $\ell' = |M| \geq 2$.
We proceed,  depending on whether $M$ is the only marking in the last $w$:
\begin{itemize}
\item $M$ is the unique marking in the last $w$:
 Then we remove it and mark arbitrarily chosen $\ell \cdot \ell'$ nodes in $w^\ell$.
This is possible, as $|w| \geq \ell'$ and so $|w^\ell| \geq \ell \cdot \ell'$.
Since $\log(\ell \cdot \ell') = \log \ell + \log \ell'$, \Mref{3} is preserved,
as it is enough to account for the $1 + \log \ell \leq c \log  \ell$ representation cost for $a^\ell$
as well as the $c \log \ell'$ cost associated with the previous marking of size $\ell'$.
\item $M$ is not the unique marking in the last $w$:
 Then $|w| \geq \ell' + 2$ (the `$+2$' comes from the other markings, which are of size at least $2$, see~\Mref{1}).
We first remove the chosen marking of size $\ell'$. Let us calculate how many unmarked nodes are in $w^\ell$ afterwards:
In $w^{\ell-1}$ there are at least $(\ell-1) \cdot (\ell'+2)$ nodes and by  Claim~\ref{clm: existing marking} none of them belongs to a marking.
In the last $w$ there are at least $\ell'$ unmarked nodes (from the marking that we removed). Hence, in total we have 
\begin{align*}
(\ell-1) \cdot (\ell'+2) + \ell'
	&=
\ell \ell' + 2 \ell - \ell' - 2 + \ell'\\
	&=
\ell \ell' + 2 \ell - 2\\
	&>
\ell \ell'
\end{align*}
many unmarked nodes (recall that $\ell \geq 2$). We arbitrarily choose 
$\ell \cdot \ell'$ many unmarked nodes and add them as a new marking.
By the same argument as in the previous case, \Mref{3} is preserved.
\end{itemize}
By the above construction, \Mref{1} is preserved.
There is one remaining  issue concerning~\Mref{2}: It might be that we create an $A_i$-in marking while there already was one, violating~\Mref{2}.
However, we show that if there already exists an $A_i$-in marking, then it is within $w^\ell$ (and so within the last $w$, by Claim~\ref{clm: existing marking}).
Hence, we could choose this $A_i$-in marking as the one that is removed when the new one is created.
Consider the previous $A_i$-in marking.
It was introduced when some power $b^k$ was replaced by $b_{k}$, which, by Claim~\ref{clm: markings}, became the unique letter between the
first and second nonterminal in the right-hand side for $A_i$.
Consider the last (as usual, with respect to preorder) subpattern 
of the input tree that is either generated by the explicit letters
between nonterminals of rank $1$ in the rule for $A_i$ (when $A_i \in N_1$)
or is generated by the explicit letters below the nonterminals of rank $1$ (when $A_i \in N_0$)
(recall from the proof of Lemma~\ref{clm:cost from power to rule} that in a rule $A_i \to w A_j v A_k$ for a nonterminal of rank $0$, 
where $w$ and $v$ are sequences of handles, all letters occurring in handles in $v$
are classified as being {\em below} $A_j$).
The operations performed on \grammar{} cannot make this subpattern smaller, in fact popping letters expands it.
When $b_k$ is created, then this subpattern is generated by $b_k$,
as by Claim~\ref{clm: markings} this is the unique letter between the nonterminals (resp., below the nonterminal).
When $a_\ell$ is created, it is generated by $a_\ell$, again by Claim~\ref{clm: markings}, i.e., it is exactly $w^\ell$.
So in particular $w^\ell$ includes the $A_i$-in marking that was added when the  power $b^k$ was replaced by $b_{k}$.

This shows that \Mrefall{} hold and so also the calculations in~\eqref{eq:estimations} hold,
in particular, the representation cost of powers is $\Ocomp(\grammarsize r + \propersize \log (n/\propersize))$.  
\end{proof}

Now the estimations from Lemma~\ref{lem:cost of compression}
allow to prove Theorem~\ref{thm: main}.

\begin{proof}[Proof of the full version of Theorem~\ref{thm: main}]
Suppose first that the input tree \mytree{} has size smaller than $\propersize$.
Then by Lemma~\ref{lem: trivial estimation}, \algmain{} returns a tree of size at most $2 \propersize - 3 = \Ocomp(\grammarsize r)$.
Otherwise, consider the phase, such that before it  \mytree{} has size $s_1$ and right after it has size is $s_2$,
where $s_1 \geq \propersize > s_2$. There is such a phase as in the end \mytree{} has size $1$ and initially it has size at least $\propersize$.
Then by Lemma~\ref{lem:cost of compression} the cost of representing letters introduced till the end of this phase is
$\Ocomp\left(\grammarsize r + \propersize \log\left(\frac{n}{\propersize} \right)\right)$.
By Lemma~\ref{lem: trivial estimation} the cost of representing the remaining tree is
at most $2 \propersize - 3 = \Ocomp(\grammarsize r)$.
Hence, the size of the grammar that is returned by \algmain{}
is at most $\Ocomp\left(\grammarsize r + \propersize \log\left(\frac{n}{\propersize} \right)\right)$.
\end{proof}

\section{Conclusions}
We presented a linear-time grammar-based tree compressor 
with an approximation ratio of $\Ocomp(r + r \log (n/r \grammarsize))$,
where $n$ is the size of the input tree \mytree{}, $\grammarsize$ is the size of 
a minimal linear context-free tree grammar for \mytree, and $r$ is the maximal rank
of symbols in \mytree{}.

Possible future work is listed and discussed below.
\subsubsection*{Non-linear grammars}
One possible direction for future research are non-linear context-free tree grammars. They are defined
in the same way as linear context-free tree grammars with the only exception that parameters may occur
more than once in right-hand sides. With non-linear context-free tree grammars one can achieve double
exponential compression. For instance, the non-linear grammar with the productions 
$S \to A_1(a)$, $A_i(x) \to A_{i+1}(A_{i+1}(x))$
for $1 \leq i \leq n-1$ and $A_n(x) \to f(x,x)$ produces a perfect binary tree of height $2^n$.
The authors are not aware of any grammar-based tree compressor that exploits this additional
succinctness of non-linear context-free tree grammars.

\subsubsection*{Graph grammars}
A non-linear context-free tree grammar can be viewed as a context-free graph grammar that produces
a directed acyclic graph. This graph grammar is obtained by merging all occurrences of the same parameter
in a right-hand side.
Recently, grammar-based graph compression via context-free graph grammars was considered
in \cite{DBLP:conf/icde/ManethP16}. But no quantitative results, e.g., concerning the approximation ratio, have been shown so far. Perhaps techniques used here can
help in developing such results.

\subsubsection*{XML trees}
In contrast to trees as considered in this paper, XML trees are usually modelled using unranked (but ordered) trees,
i.e.\ the rank of a node is not determined by its label.
SLCF grammars can be used to generate such trees: we drop the assumption of the ranked alphabet,
but keep the ranks for nonterminals.
In this way, letters in SLCF grammars are \emph{de facto} ranked,
as each occurrence in the SLCF grammar has a fixed arity.
Thus, when computing such an SLCF grammar for an unranked tree,
we can artificially rank all letters and proceed as in the case of a ranked alphabet.
The approximation guarantee is the same.

There are also more powerful constructs that are intended to capture XML trees,
for instance forest algebras~\cite{forestsalgebras},
which work on forests instead of trees and allow also horizontal ``concatenation'' of trees,
and this operation yields a forest.
A corresponding grammar model can, for instance,
represent the tree $f(\underbrace{c,\ldots,c}_{n})$
by a grammar of size
$\Ocomp(\log n)$, whereas this tree is incompressible by SLCF tree grammars.
Approximation algorithms for such grammars have not been investigated so far.
Whether the methods proposed here apply in this model remains an open question.

\subsubsection*{Unordered trees}
Our method depends very little on the fact that the considered trees are ordered and it should work
also in the unordered (but ranked) case:
it is enough that leaf compression does not take the positions of leaves into the account.

\subsubsection*{Practical applications}
While \algmain{} achieves the best known approximation bound,
the authors doubt that it beats in practice the known heuristics,
especially \algofont{TreeRePair}~\cite{LohreyMM13}.
These doubts are based on the fact that in the string case, \RePair{} by far beats the
compression algorithms with the best known approximation bound.

\subsubsection*{Explicit grammar}
In many problems the SLCF grammar is given explicitly and we are interested in processing it.
The presented ``recompression'' approach can be naturally applied in this setting,
but there is no bound on the size of the SLCF obtained in this way.
However, we can use a similar trick as in the case of strings~\cite{FCPM}:
we have two alternating compression phases and in one of them we proceed
as described in Section~\ref{sec:algorithm}
while in the other we try to make the SLCF small.
The only difference is that during pair compression we choose the partition of letters
so that many occurrences of pairs in the SLCF are covered;
see~\cite{FCPM} for details.

\section*{Acknowledgements}
The first author would like to thank Pawe\l{} Gawrychowski for introducing him to the topic of compressed data,
for pointing out the relevant literature~\cite{MehlhornSU97}
and discussions, as well as
Sebastian Maneth and Stefan B\"ottcher for the question of applicability of the recompression-based approach to the tree case.

\section*{Appendix}

\begin{figure}[H]
	\centering
	\includegraphics[scale = 1.3]{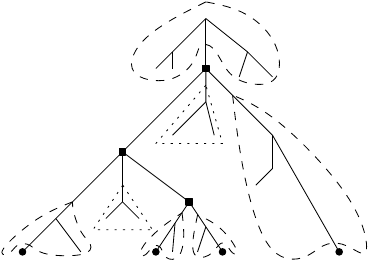}
	\caption{The partitioning of a tree depending on its parameter nodes.
		The parameters are the round nodes, the square nodes are the branching nodes of the spanning tree.
		Contexts and trees obtained after the removal of those nodes are enclosed by dashed and dotted lines, respectively.}
	\label{fig:preskeleton}
\end{figure}

\subsection*{Proof of Lemma~\ref{lem: monadic grammar} (transforming an SLCF grammar into a handle grammar).}
The proof is a slight modification of the original proof of~\cite[Theorem~10]{grammar}
and it follows exactly the same lines. Let \grammar{} be an SLCF grammar of size $g$.

The idea is as follows, see Figure~\ref{fig:preskeleton}.
Consider a nonterminal $A(y_1,\ldots,y_k)$ of \grammar{} and the tree $\eval(A)$ that it generates.
Within $\eval(A)$ take the nodes representing the parameters $y_1,\ldots,y_k$ and the spanning tree (within $\eval(A)$) for those nodes.
Consider the nodes of degree at least $2$ within this spanning tree, delete those nodes and delete the parameters.
What is left is a collection of subtrees and subcontexts.
We want to construct a grammar that has for each such subtree and subcontext
a nonterminal generating it. This is done inductively on the structure of \grammar.
As the starting nonterminal of \grammar{} has rank $0$, such a decomposition for $\eval(\grammar)$ is in fact trivial.
So, in particular the constructed grammar generates $\eval(\grammar)$.
Lastly, the construction guarantees that the introduced nonterminals, which are of rank $0$ and $1$,
are expressed through each other (plus some rules introduced on the way).
So the new grammar generates the same tree and it is monadic.
Moreover, the rules for those nonterminals are in the form required for a handle grammar, see~\HGref{2} and \HGref{3}.

Formalising this approach, we say that a \emph{skeleton tree}\footnote{Note that our definition of a skeleton slightly differs
from the one of~\cite[Lemma~1]{grammar}, but the differences are inessential.}
is a pattern from $\mathcal T(N_0 \cup N_1 \cup \letters, \mathbb Y)$, satisfying the following conditions:
\begin{enumerate}[{(SK1)}]
	\item \label{sk 1} The child of a node of degree $1$ can be labelled only with a letter of arity at least $2$ or with a parameter.
	\item \label{sk 2} If $f$ of arity at least $2$ labels a node with children $v_1, \ldots v_k$, then there are $i \neq j$ such that
	the subtrees rooted in $v_i$ and $v_j$ both contain parameters.
\end{enumerate}
Intuitively, the skeleton tree is what one obtains after replacing each context and tree in the tree constructed above with a nonterminal:
\SKref{1} says that whole context is replaced with a nonterminal, while \SKref{2} says that only branching nodes
of the spanning tree of parameters are be labelled with letters.

Our first goal is to construct for each nonterminal $A$  of the input  grammar \grammar{}  a skeleton tree $\sk_A$
together with rules for the nonterminals appearing in $\sk_A$.
These rules allow to rewrite $\sk_A$ into $\eval(A)$;
note that we do not assume that these introduced rules satisfy \SKref{1}--\SKref{2}
and on the other hand, the skeletons do not satisfy
\HGref{2}--\HGref{3} and they are not part of the constructed grammar,
they are just means of construction.
Instead, we show that in the introduced rules the nonterminals of arity $1$ occur at most $\Ocomp(\grammarsize)$ times,
while nonterminals of arity $0$ and letters occur at most $\Ocomp(r \grammarsize)$ times.

As a first step, we transform the grammar into \emph{Chomsky normal form} (CNF),
which is obtained by a straightforward decomposition of rules.
The rules in a CNF grammar are of two possible forms, where $A,B,C \in \mathbb{N}$ and $f \in \mathbb{F}$:
\begin{itemize}
	\item $A(y_1,\ldots,y_k) \to f(y_1,\ldots,y_k)$ or
	\item $A(y_1,\ldots,y_k) \to B(y_1,\ldots,y_\ell, C(y_{\ell +1} ,\ldots,y_{\ell'}), y_{\ell'+1}, \ldots, y_k)$
\end{itemize}
Note that the number of parameters can be $0$.
It is routine to check that any SLCF grammar \grammar{} of size $g$ can be transformed into an equivalent CNF grammar of size $\Ocomp(\grammarsize)$ and with 
$\Ocomp(\grammarsize)$ nonterminals~\cite[Theorem 5]{grammar}.

Given an SLCF grammar in CNF, we build bottom-up skeleton trees for its nonterminals,
During this we introduce $\Ocomp(1)$ nonterminals per considered nonterminal,
their rules have $\Ocomp(1)$ occurrences of nonterminals of arity $1$ and at most $\Ocomp(r)$ occurrences of  nonterminals of arity $0$ and constants.
Moreover, the rules for those nonterminals are in the form required by the definition of a handle grammar, see
\HGref{2} and~\HGref{3}.
All nonterminals occurring in the constructed skeletons use only those introduced nonterminals.

\begin{figure}
	\centering
	\includegraphics{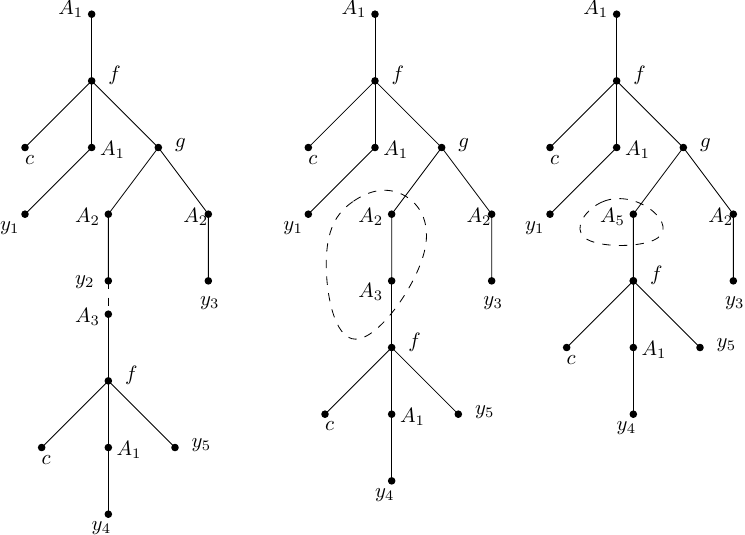}
	\caption{Calculating the skeleton when the lower skeleton is of rank at least $1$.
	On the left-most figure we see the two skeletons. After substitution, there are two neighbouring rank-$1$ nonterminals: $A_2$ and $A_3$.
	They are replaced with the nonterminal $A_5$.\label{fig:rank1}}
\end{figure}

Consider some nonterminal $A$ of the CNF grammar.
If its rule has the form $A(y_1,\ldots,y_k) \to f(y_1,\ldots,y_k)$, then $\sk_A = f(y_1,\ldots,y_k)$
and if the arity of $A$ is at most $1$ then we add $A$ and its rule to the set of constructed rules as well (if the rank of $A$ is at least $2$,
then we do not add $A$ and its rule).
This rule has the desired form~\HGref{2} or \HGref{3},
there is no nonterminal on the right-hand side and at most $1$ letter on the right-hand side.
If the rule for $A$ has the form $A(y_1,\ldots,y_k) \to B(y_1,\ldots,y_\ell, C(y_{\ell +1} ,\ldots,y_{\ell'}), y_{\ell'+1}, \ldots, y_k)$,
then we take $\sk_B$ and $\sk_C$ and replace in $\sk_B$ the parameter  $y_{\ell+1}$  by $\sk_C$,
see Figures~\ref{fig:rank1} and~\ref{fig:rank0} for two different cases. Let us denote the resulting tree with $\sk'_A$; it 
is transformed into a proper skeleton tree $\sk_A$ in the following. Let $y = y_{\ell+1}$.

Let us inspect what changes are needed, so that $\sk'_A$ satisfies \SKref{1}--\SKref{2}.
Suppose first that $C$ is of arity at least $1$, see Figure~\ref{fig:rank1}.
It might be that the root node of $\sk_C$ and the node above the leaf $y$ in $\sk_B$ are both of arity $1$,
without loss of generality assume that their labels are nonterminals $C'$ and $B'$ (the case of letters follows in the same way).
We then introduce a new nonterminal $A'$ of rank $1$ and replace the subpattern $B'(C')$ in $\sk'_A$ with $A'$
and add a rule $A' \to B' C'$. Note that it is in a form required by~\HGref{2}.
We claim that the resulting tree satisfies  \SKref{1} and \SKref{2} and hence
can be taken for $\sk_A$.

Since the node above $B'$ and the node below $C'$ are not of degree $1$
(by induction assumption on \SKref{1}), \SKref{1} is satisfied.
Concerning~\SKref{2}, take any node of arity at most $2$ in $\sk'_A$.
Any node labelled with a letter in $\sk_C$ has the same subtrees in $\sk_C$ and in $\sk'_A$,
so \SKref{2} holds for them.
For the nodes in $\sk_B$ the only problem can arise for nodes that had the replaced $y$ in some of their subtrees.
However, as $y$ is replaced with $\sk_C$, which has a parameter, the condition is preserved for them.

\begin{figure}
	\centering
		\includegraphics{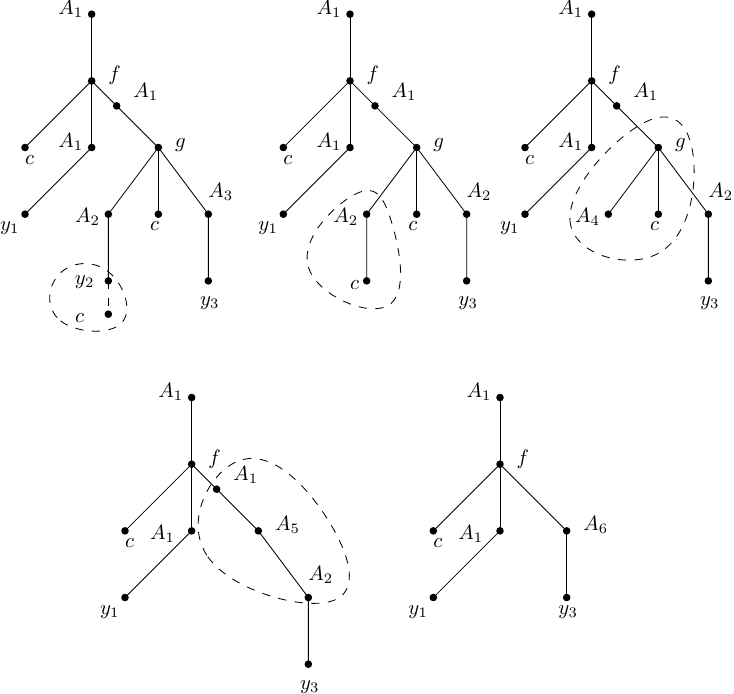}
	\caption{Calculating the skeleton, when the lower skeleton is of rank $0$,
	going from the left-top to bottom-right.
	In the first picture there are two skeletons, in the second they are substituted into each other.
	In the third we replace $A_2 c$ by $A_4$ with the rule $A_4 \to A_2 c$.
	In the fourth we replace $g(A_4, c, \cdot)$ by $A_5$ with the rule $A_5(y) \to g(A_4, c, y)$
	and finally we replace $A_1 A_5 A_2$ by $A_6$ with the rule $A_6 \to A_1 A_5 A_2$, which can be split into two rules.}
	\label{fig:rank0}
\end{figure}

Suppose now that $C$ is of arity $0$, see Figure~\ref{fig:rank0}.
Then the skeleton $\sk_C$ has no parameters, which implies that it
is either a constant or a nonterminal of arity $0$ ($\sk_C$ cannot use letters of arity larger than $1$
by~\SKref{2}, and cannot use nonterminals and letters of arity $1$, as their children need to be labelled with letters of arity at least $2$).
We only consider the former case (the same argument hold for the latter case). Let $\sk_C$ be the constant $c$.
Firstly, the node above $y$ (the parameter which is replaced by $\sk_C = c$) in $\sk_B$ can be a node of arity $1$. Without loss of generality suppose that it is a nonterminal
$B'$ (the case of unary letter follows in the same way).
We introduce a fresh nonterminal $A'$ of arity $0$, replace the subtree $B'(c)$ by $A'$ and introduce the rule $A' \to B' c$.
The rule is of the form required by~\HGref{3}.
For uniformity, if the node above $y$ is not of arity $1$, introduce $A'$ with the rule $A' \to c$ and replace $c$ by $A'$.
Condition \SKref{1} now holds.

Concerning~\SKref{2},
consider the parent node $v$ of $A'$. Either it does not exist, in which case we are done (as $\sk_A = A'$)
or it is labelled with a letter $f$ of arity at least $2$.
All other nodes in $\sk'_A$ labelled with letters of arity at least $2$ satisfy~\SKref{2}, as the subtree rooted at $v$ still contains at least one parameter.
Focusing on the $f$-labelled node $v$, if it still has at least two children with parameters in their subtrees, then we are done,
as \SKref{2} is satisfied for $v$.
If not, then exactly one of $v$'s children is a subtree with a parameter. Without loss of generality let it be $v$'s first child,
all other children are constants or nonterminals of arity $0$.
So let the children of $v$ (except  for the first one) be labelled with $\gamma_2, \gamma_3, \ldots, \gamma_\ell$,
where each $\gamma_i$ is either a constant or a nonterminal of rank $0$.
Introduce a new nonterminal $A'$ of rank $1$ with the rule $A' \to f(y_1, \gamma_2, \gamma_3,\ldots,\gamma_\ell)$
and replace the subpattern $f(t_1,  \gamma_2, \gamma_3,\ldots,\gamma_\ell)$ with $A'(t_1)$ (where, $t_1$ is the subtree rooted
at the first child of $v$).
Observe that the rule for $A'$ is of size  $\ell \leq r$ and is of the form~\HGref{2}.
Lastly, now again~\SKref{1} can be violated, because the parent node or the child (or both) of the $A'$-labelled node can be of degree $1$.
This can be fixed by replacing those $2$ or $3$ nodes of degree $1$ by one nonterminal of rank $1$.
This requires adding at most $2$ rules for nonterminals of arity $1$ of the required form~\HGref{2}.

We constructed $\Ocomp(1)$ rules of the form~\HGref{2} or~\HGref{3} per nonterminal of the CNF grammar
(there are $\Ocomp(g)$ of them), each of them has at most $r$ occurrences of letters and nonterminals of arity $0$
and at most $2$ of nonterminals of arity $1$.
By a routine calculation it can be shown that $\eval(\sk_A) = \eval(A)$. If $S$ is the start nonterminal of the CNF grammar,
then $\sk_S$ has no parameters and hence is either a constant (this case is of course trivial) or a nonterminal of rank $0$,
which is the start nonterminal of our output grammar.

Concerning the efficiency of the construction, the proof follows in the same way as in~\cite[Theorem~10]{grammar}:
It is enough to observe that $\sk_A$, where $A$ has rank $k$, has at most $2r(k-1) + 2$ nodes:
By \SKref{1} nodes of arity $1$ constitute at most half of all nodes.
Secondly,  as it has $k$ parameters, it has at most $k-1$ nodes of arity larger than $1$, so at most $(k-1)(r-1)+1$ leaves.
Summing up yields the claim.
\qed

\end{document}